\title{Complexity of Ising Polynomials}
\thanks{This work was partially supported by the Fein foundation and the graduate
school of the Technion. }
\author{Tomer Kotek}
\address{Department of Computer Science \\ Technion --- Israel Institute of Technology \\ Haifa, Israel}
\email{tkotek@cs.technion.ac.il}
\theoremstyle{plain}
\newtheorem{thm}{Theorem}
  \theoremstyle{plain}
  \newtheorem{prop}[thm]{Proposition}
  \theoremstyle{plain}
  \theoremstyle{plain}
  \theoremstyle{plain}
  \newtheorem{remarks}[thm]{Remarks}
  \theoremstyle{plain}
  \newtheorem{algorithm}[thm]{Algorithm}
  \theoremstyle{plain}
  \newtheorem{lem}[thm]{Lemma}
  \theoremstyle{definition}
  \newtheorem{defn}[thm]{Definition}
  \newtheorem{definition}[thm]{Definition}
\newif\ifskip
\newenvironment{renumerate}{\begin{enumerate}}{\end{enumerate}}
\newcommand{\spP}{\mathrm{\mathbf{\#P}}}
\newcommand{\spETH}{\mathrm{\mathbf{\#ETH}}}
\newcommand{\ETH}{\mathrm{\mathbf{ETH}}}
\newcommand{\NP}{\mathrm{\mathbf{NP}}}
\newcommand{\FPT}{\mathrm{\mathbf{FPT}}}
\newcommand{\FPPT}{\mathrm{\mathbf{FPPT}}}
\DeclareMathAlphabet{\mathpzc}{OT1}{pzc}{m}{it}
\newcommand{\za}{\mathpzc{a}}
\newcommand{\zb}{\mathpzc{b}}
\newcommand{\zc}{\mathpzc{c}}
\newcommand{\zd}{\mathpzc{d}}
\newcommand{\lamb}{\lambda}
\newcommand{\vs}{\mathsf}
\renewcommand{\epsilon}{\varepsilon}
\renewcommand{\t}{\vs{t}}
\newcommand{\x}{\vs{x}}
\newcommand{\z}{\vs{z}}
\newcommand{\y}{\vs{y}}
\newcommand{\head}{\mathrm{hd}}
\newcommand{\hd}{\head}
\newcommand{\tr}{\mathrm{tr}}
\newcommand{\mc}[1]{\mathcal{#1}}
\newcommand{\mcHb}[2]{\mc{H}_{#1\setminus #2}}
\newcommand{\kt}{ {3-\tau} }
\newcommand{\Zc}{Z_{\mathrm{labeled}}}
\newcommand{\cen}{\mathrm{cent}}
\newcommand{\N}{\mathbb{N}}
\newcommand{\R}{\mathbb{R}}
\newcommand{\Q}{\mathbb{Q}}
\newcommand{\hq}{\hat{q}}
\newcommand{\Wone}{\mathrm{\mathbf{W[1]}}}
\newcommand{\tb}{{\tilde{b}}}
\newcommand{\tc}{{\tilde{c}}}
\begin{document}

\begin{abstract}
This paper deals with the partition function 
of the Ising model from statistical mechanics, which is used to study phase transitions in physical systems. 
A special case of interest is that of the Ising model with constant energies and external field. 
One may consider such an Ising system as a simple graph together with vertex and edge weights. 
When these weights are considered indeterminates, 
the partition function for the constant case is a trivariate polynomial $Z(G;\x,\y,\z)$.
This polynomial was studied with respect to its approximability by L. A. Goldberg, M. Jerrum and M. Paterson
in \cite{ar:GJP03}. 
$Z(G;\x,\y,\z)$ generalizes a bivariate polynomial $Z(G;\t,\y)$,
which was studied in by
D. Andr\'{e}n and K. Markstr\"{o}m in  
\cite{ar:AndrenMarkstrom2009}. 

We consider the complexity of $Z(G;\t,\y)$ and $Z(G;\x,\y,\z)$ in comparison to that of the Tutte polynomial,
which is well-known to be closely related to the Potts model in the absence of an external field. 
We show that $Z(G;\x,\y,\z)$ is $\spP$-hard to evaluate
at all points in $\mathbb{Q}^3$, except those in an exceptional set of low dimension,
even when restricted to simple graphs which are bipartite and planar. 
A counting version of the Exponential Time Hypothesis, $\spETH$, was introduced by
H. Dell, T. Husfeldt and M. Wahl\'{e}n in \cite{ar:DHW10} in order to study 
the complexity of the Tutte polynomial.
In analogy to their results,
we give under $\spETH$ a dichotomy theorem stating that evaluations of $Z(G;\t,\y)$ either take exponential time
in the number of vertices of $G$ to compute, or can be done in polynomial time. 
Finally, we give an algorithm for computing $Z(G;\x,\y,\z)$ in polynomial time on graphs of bounded clique-width,
which is not known in the case of the Tutte polynomial. 
\end{abstract}

\maketitle 

\section{Introduction}
\label{se:introduction}

An Ising system is a simple graph $G=(V,E)$ together with vertex and edge weights. 
Every edge $(u,v)\in E$ has an {\em interaction energy} 
and every vertex $u\in V$ has an {\em external magnetic field strength} 
associated with it. 
A function $\sigma:V\to\{\pm 1\}$ is  a {\em configuration} of the system or a {\em spin assignment}. 
The partition function of an Ising system is a generating function 
related to the probability that the system is in a certain configuration.

L. A. Goldberg, M. Jerrum and M. Paterson \cite{ar:GJP03} studied the Ising polynomial in three variables $Z(G;\x,\y,\z)$ for 
the case where both the interaction energies of an edge $(u,v)$ and the external magnetic field strength of a vertex $v$
are constant.
They consider the existence of 
fully polynomial randomized approximation schemes (FPRAS) for
the graph parameters $Z(G;\gamma,\delta,\epsilon)$, depending on the
values of $(\gamma,\delta,\epsilon)\in \mathbb{Q}^3$.
They provide approximation schemes for some regions of 
$\mathbb{Q}^3$ while showing that other regions do not 
admit such approximation schemes. 
Approximation schemes for $Z(G;\x,\y,\z)$ were further studied in
\cite{ar:ZhangLianBai2011,arXiv:SinclairSrivastavaThurley2011}. 
M. Jerrum and A. Sinclair studied in \cite{ar:JerumSinclair93} the approximability and 
$\spP$-hardness of 
another case of the Ising model, where weights are provided as part of the input
and no external field is present.  
The bivariate Ising polynomial $Z(G;\t,\y)$, which was studied in \cite{ar:AndrenMarkstrom2009} for its combinatorial properties,
is equivalent to setting $\x=\z=\t$ in $Z(G;\x,\y,\z)$. It is shown in \cite{ar:AndrenMarkstrom2009} that $Z(G;\t,\y)$
encodes the matching polynomial, and is equivalent to a bivariate generalization of a graph polynomial introduced by 
B. L. van der Waerden in \cite{ar:Waerden1941}. 

The trivariate and bivariate Ising polynomials fall under the general framework of partition functions, 
the complexity of which has been studied extensively starting with \cite{ar:DyerGreenhill00}
and followed by \cite{ar:BulatovGrohe05,ar:GoldbergGroheJerrumThurley09,diss:Thurley2009,ar:CCL10}.
From \cite[Theorem 6.1]{diss:Thurley2009} and implicitly from \cite{ar:GoldbergGroheJerrumThurley09} 
we get that the complexity of evaluations of the Ising polynomials satisfies a dichotomy theorem, saying that
the graph parameter $Z(G;\gamma,\delta)$ is either polynomial-time computable or $\spP$-hard. 
However, $\delta$ must be positive here. 

The $q$-state Potts model deals with a similar scenario to the Ising model, except that
the spins are not restricted to $\pm 1$
but instead receive one of $q$ possible values. 
The complexity of the $q$-state Potts model 
has attracted considerable attention in the literature. 
The partition function of the Potts model 
in the case where no magnetic field is present is closely related to the Tutte polynomial $T(G;x,y)$.
It is well-known that for every 
$\gamma,\delta\in \mathbb{Q}$, except for points $(\gamma,\delta)$ in
a finite union of algebraic exceptional sets of dimension at most $1$, computing the graph parameter 
$T(G;\gamma,\delta)$ is $\spP$-hard on multigraphs, 
see \cite{ar:JVW1990}. This holds
even when restricted to bipartite planar graphs,
see \cite{ar:VertiganWelsh1992} and \cite{ar:Vertigan06}. 
In contrast, the restriction of the Tutte polynomial to the so-called {\em Ising hyperbola}, which 
corresponds to the case of Ising model with no external field, 
is tractable on planar graphs, see \cite{ar:Fisher1966,ar:Kasteleyn1967,ar:JVW1990}. 

H. Dell, T. Husfeldt and M. Wahl\'{e}n introduced in \cite{ar:DHW10} 
a counting version of the Exponential Time Hypothesis ($\spETH$), 
which roughtly states  that counting the number of satisfying assignments to a 
$3\mathrm{CNF}$ formula requires exponential time. 
This hypothesis is implied by the Exponential Time Hypothesis ($\ETH$) for decision problems
introduced by R. Impagliazzo and R. Paturi in \cite{ar:ImpagliazzoPaturi1999}. 
Under $\spETH$, the authors of \cite{ar:DHW10}  show that the computation of the Tutte polynomial 
on simple graphs 
requires exponential time in $\frac{m_G}{\log^3 m_G}$ in general, where $m_G$ in the number of edges 
of the graph. For multigraphs they show that the computation of the Tutte polynomial
generally requires exponential time in $m_G$.  

In this paper we prove that the bivariate and trivariate Ising polynomials
satisfy analogs of some complexity results for the Tutte polynomial.  
For the bivariate Ising polynomial we show a dichotomy theorem stating that evaluations of $Z(G;\t,\y)$
are either $\spP$-hard or polynomial-time computable. Moreover, 
assuming the counting version of the Exponential Time Hypothesis, the bivariate Ising polynomial
require exponential time to compute. Let $n_G$ be the number of vertices of $G$. 
\begin{thm}[Dichotomy theorem for the bivariate Ising polynomial]
\label{th:mainC} 
\ \\
  For all $(\gamma,\delta)\in \mathbb{Q}^2$:
  \begin{renumerate}
   \item If $\gamma\in\{-1,0,1\}$ or $\delta=0$,
 then $Z(G;\gamma,\delta)$ is polynomial time computable.
\item Otherwise: 
\begin{itemize}
 \item $Z(G;\gamma,\delta)$ is $\spP$-hard on simple graphs, and 
 \item unless $\spETH$ fails, 
  requires exponential time in $\frac{n_G}{\log^6 n_G}$ on simple graphs.
\end{itemize}
  \end{renumerate}
 \end{thm}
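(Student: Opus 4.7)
The plan is to treat the four tractable families in (i) by direct polynomial-time formulas and to establish (ii) via a size-efficient reduction chain modeled on the Dell--Husfeldt--Wahl\'{e}n treatment of the Tutte polynomial.

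For part (i), we substitute the special values into the definition of $Z(G;\gamma,\delta)$ as a sum over spin assignments $\sigma:V\to\{\pm 1\}$. When $\delta = 0$, only configurations zeroing the $\delta$-exponent contribute, and these can be enumerated in polynomial time by a component-wise analysis of $G$. When $\gamma = 0$, the vertex weight vanishes and the sum collapses to a single term. When $\gamma = \pm 1$, the vertex-factor simplifies to a constant or to an easily computable sign, and the remaining sum reduces --- by a rearrangement over cut sizes or by a pairing argument that exploits sign cancellation --- to a quantity that factors across connected components and admits a closed-form evaluation. Each of these is a routine but case-specific calculation from the sum-over-configurations definition.

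For part (ii), fix $(\gamma,\delta) \in \mathbb{Q}^2$ with $\gamma \notin \{-1,0,1\}$ and $\delta \neq 0$. The plan has three components. First, establish a \emph{master point} $(\gamma_0,\delta_0)$ at which $Z(G;\gamma_0,\delta_0)$ is $\spP$-hard and, under $\spETH$, requires time $2^{\Omega(n_G/\operatorname{polylog} n_G)}$ on simple graphs. This is obtained by a size-efficient reduction from $\#3\mathrm{SAT}$ that combines the sparsification step of \cite{ar:DHW10} with a direct Ising encoding of the clauses, leveraging the link to the matching polynomial noted in the introduction. Second, \emph{interpolate} from the master point to $(\gamma,\delta)$ using edge-replacement gadgets. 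Applying $k$-thickening or $k$-stretching to every edge of a graph $G$ yields a family $G_k$ satisfying
\[
Z(G_k;\gamma,\delta) \;=\; \sum_j c_j \cdot r(\gamma,\delta,k)^j,
\]
where the $c_j$ are indexed by cut sizes (or a finer spin-type invariant) of $G$ and together determine $Z(G;\gamma_0,\delta_0)$ via an invertible linear change of basis. Evaluating $Z(G_k;\gamma,\delta)$ for polynomially many $k$ yields a Vandermonde-type system whose inversion recovers the $c_j$ and thus the master-point value. Third, \emph{control the size}: each gadget has $O(\log n_G)$ vertices and only polynomially many evaluations are needed, giving a total blowup of $O(n_G \cdot \operatorname{polylog} n_G)$ vertices. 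Propagating the $\spETH$ lower bound through this blowup yields the claimed $2^{\Omega(n_G/\log^6 n_G)}$ bound, the exponent $6$ absorbing the combined polylogarithmic overhead of the sparsification at the master point and the interpolation.

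The main obstacle is the interpolation step. Because $Z$ is bivariate while a single gadget family varies only one parameter, its image is a curve in $(\gamma,\delta)$-space, so reaching the master point from an arbitrary non-tractable target generally requires composing two gadget families and checking that the resulting two-parameter transformation is non-degenerate. Algebraic exceptional subsets on which the Vandermonde system becomes singular or a gadget degenerates must be handled by substituting an alternative gadget or by an additional round of interpolation. Verifying that the entire complement of the tractable set is covered while maintaining the logarithmic gadget size needed for the $\spETH$ lower bound --- and while preserving simplicity of the underlying graph --- is the heart of the proof.
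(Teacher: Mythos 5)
Your high-level plan for (ii)---reduce from a hard master point, interpolate using edge/vertex gadgets, and keep gadgets polylogarithmic so that the $\spETH$ bound survives the blowup---is the right skeleton and matches the paper's strategy in outline. However, both the easy direction and the hard direction contain gaps or errors that would prevent this from becoming a proof.

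For part (i), you have the variables confused: $\gamma$ is the \emph{edge}-weight (the substitution for $\t$), not the vertex weight. With that fixed, $\delta=0$ and $\gamma=1$ are indeed trivial ($Z=1$ and $(1+\delta)^{n_G}$, respectively). But $\gamma=0$ does \emph{not} collapse the sum to a single term: it restricts the sum to those $S$ for which both $S$ and $\bar S$ are independent, i.e.\ to proper $2$-colorings, weighted by $\delta^{|S|}$. And $\gamma=-1$ involves real sign cancellation across all $S$. The paper does not treat these by an ad hoc rearrangement; it appeals to the rank-$1$ / bipartite-rank-$2$ tractability criterion for partition functions $Z_{A,D}$ (Lemma~6.3 of \cite{ar:GoldbergGroheJerrumThurley09}), which handles $\gamma=0$ ($A$ bipartite of rank $2$) and $\gamma=-1$ ($A$ of rank $1$) uniformly. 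You would need either to cite that criterion or to supply the component-wise analysis explicitly; ``a pairing argument that exploits sign cancellation'' is not a proof.

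For part (ii), the crucial missing ingredient is the Dell--Husfeldt--Wahl\'{e}n size-control lemma and the two specific gadget families that make it applicable. Plain $k$-thickening or $k$-stretching does not work: thickening destroys simplicity, and iterating $k$ over $\{0,\dots,\mathrm{poly}(n)\}$ blows the graph up polynomially, which only gives a bound of the form $2^{\Omega(n^{1/c})}$, not $2^{\Omega(n/\mathrm{polylog}\, n)}$. The paper instead interpolates $\t$ via \emph{Phi graphs} $\Phi_{\mc H}$ (built from the $L_h$ gadgets, whose restricted Ising values satisfy explicit linear recurrences in $h$) substituted on every edge, and interpolates $\y$ via \emph{rooted-star gadgets} $S_{\mc H}$ attached to every vertex. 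The engine is Lemma~\ref{lem:dell-lemma}, an analog of Lemma~4 of \cite{ar:DHW10}: one produces $\Omega(q')$ sets $\mc H_0,\dots,\mc H_{\hq}$ of even integers, each with $\sigma(\mc H_i)=O(\log^3 q')$ and all with the \emph{same} sum $\sigma(\mc H_i)$, while guaranteeing the evaluation points $f_{\t,\mc H_i}$ and $g_{\y,\mc H_i}$ are pairwise distinct. That forcing-distinctness step is a delicate dominance argument on the largest power of $r_\tau$ (resp.\ of $w$) in a sum of $2^{O(\log q)}$ monomials, controlled by a spacing parameter $\Delta$; it is the substantive content and not something that follows from ``a Vandermonde-type system.'' Finally, your size accounting is off: each gadget has $O(\log^3 n_G)$ vertices, not $O(\log n_G)$, and the $\log^6$ in the theorem comes from \emph{nesting} the two gadget families ($S_{\mc H_i}(G\otimes \mc H_k)$ has $O(n_G\log^6 n_G)$ vertices), not from absorbing a sparsification overhead. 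You also do not address the genuine exceptional cases that arise: $\gamma=-\delta$ needs a separate reduction via $S_{\{1\}}(G)$, and $\delta=\pm 1$ needs the $G_{(1)}$ pendant construction of Lemma~\ref{lem:Zminus1} to bypass the $\y$-interpolation entirely.
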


We show that the evaluations of $Z(G;\x,\y,\z)$, except for those in 
a small exceptional set $B\subseteq \mathbb{Q}^3$,
are hard to compute even when restricted to simple graphs which
are both bipartite and planar. 
\begin{thm}[Hardness of the trivariate Ising polynomial]
\label{th:mainB} ~ \\
 There is a set $B\subseteq \Q^3$ such that for 
 every $(\gamma,\delta,\epsilon) \in \mathbb{Q}^3\setminus B$,
 $Z(G;\gamma,\delta,\epsilon)$ is $\spP$-hard on simple bipartite planar graphs. \\
 $B$ is a finite union of algebraic sets of dimension $2$. 
 %
 %
 \end{thm}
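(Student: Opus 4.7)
The plan is to apply the standard interpolation-gadget paradigm: fix a seed point at which $Z$ is known to be $\spP$-hard on simple bipartite planar graphs, and for an arbitrary target $(\gamma,\delta,\epsilon)\in\Q^3\setminus B$ reduce evaluation at the seed point to polynomially many evaluations of $Z$ at $(\gamma,\delta,\epsilon)$. First, I would identify a seed point $(\gamma_0,\delta_0,\epsilon_0)\in\Q^3$ at which $Z(G;\gamma_0,\delta_0,\epsilon_0)$ is $\spP$-hard on simple bipartite planar graphs; a natural candidate is a specialization of $Z$ encoding a Tutte polynomial evaluation off the Ising hyperbola at a point proved $\spP$-hard on bipartite planar graphs in \cite{ar:Vertigan06} or \cite{ar:VertiganWelsh1992}, and one may alternatively reduce directly from a $\spP$-hard weighted counting problem covered by the dichotomy of \cite{ar:CCL10}.

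Next, for an arbitrary input graph $G$, I would construct a parametric family $\{G_k\}$ of simple bipartite planar graphs by applying gadgets that preserve simplicity, bipartiteness and planarity: (a) replacing each edge by a path of odd length $2k+1$ (edge stretching), and (b) attaching to each vertex a small bipartite planar pendant parameterized by an integer $\ell$. Expanding $Z(G_k;\gamma,\delta,\epsilon)$ over the boundary spin-patterns of the gadget one obtains an identity of the shape
\[
  Z(G_k;\gamma,\delta,\epsilon)=\sum_j c_{j,k}(\gamma,\delta,\epsilon)\,Q_j(G),
\]
where the $Q_j(G)$ are spin-conditioned partition functions of the original graph (independent of $k$) and the coefficients $c_{j,k}$ are polynomials in the gadget parameters and in $\gamma,\delta,\epsilon$, obtained as traces of powers of a small transfer matrix. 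Varying $k$ (and $\ell$) over $\mathrm{poly}(n_G)$ values turns this into a Vandermonde-type linear system whose inversion recovers each $Q_j(G)$; a suitable combination of these then yields $Z(G;\gamma_0,\delta_0,\epsilon_0)$, giving the required polynomial-time Turing reduction.

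The exceptional set $B$ arises exactly where this reduction degenerates: where the Vandermonde-type system is singular, where the gadget's transfer-matrix eigenvalues coincide, or where the gadget map sends $(\gamma,\delta,\epsilon)$ into a specialization on which $Z$ is already polynomial-time computable. Each of these conditions is a single polynomial equation in $(\gamma,\delta,\epsilon)$ and hence defines an algebraic hypersurface in $\Q^3$, i.e.\ an algebraic set of dimension $2$; only finitely many such equations arise, so $B$ is a finite union of dimension-$2$ algebraic sets as claimed.

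I expect the main obstacle to be the gadget-design step. The combined constraints of simplicity and bipartiteness eliminate edge thickening (which creates parallel edges) and even-length edge subdivision (which destroys bipartiteness), leaving a narrower family of admissible simple bipartite planar gadgets. One must still produce enough algebraically independent deformations of the parameter triple $(\gamma,\delta,\epsilon)$ within this restricted family so that the joint image of the gadget maps covers a Zariski-open subset of $\Q^3$, and verify non-degeneracy of the associated interpolation system off an algebraic set of codimension one. This last verification --- carving out the precise finite union of dimension-$2$ algebraic sets that constitutes $B$ --- is the technical heart of the argument.
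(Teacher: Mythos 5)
Your high-level plan — reduce a hard seed problem on bipartite planar graphs to target evaluations of $Z$ via bipartiteness- and planarity-preserving gadgets, then interpolate — is the right paradigm, and it is the one the paper follows. But the concrete seed you propose does not work, and this is the crux of the argument, not a detail. The trivariate Ising polynomial $Z(G;\x,\y,\z)$ is a sum over \emph{vertex} subsets, while the Tutte polynomial is a sum over \emph{edge} subsets; $Z(G;\x,\y,\z)$ does not encode Tutte evaluations off the Ising hyperbola, and on the Ising hyperbola (the only Tutte locus that $Z(G;\t,1)$ reaches) the Tutte polynomial is actually polynomial-time on planar graphs, as the paper itself remarks. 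So ``a specialization of $Z$ encoding a Tutte evaluation off the Ising hyperbola'' is not available. The paper instead seeds the reduction with the \emph{bivariate} Ising polynomial $Z(G;\t,\y)$ restricted to $3$-regular bipartite planar graphs, whose $\spP$-hardness comes from Proposition~\ref{prop:complexity2a} (counting vertex covers on $3$-regular bipartite planar graphs, via Xia--Zhang--Zhao); the trivariate and bivariate polynomials coincide on regular graphs up to a prefactor, which is exactly what makes the regular restriction the right seed class. Without identifying a correct seed, the rest of your reduction has nothing to anchor to.

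The gadget design is also genuinely different from, and somewhat more delicate than, what you sketch. You rule out parallel-edge thickening and even subdivisions and then reach for odd path-stretching plus pendants. The paper's ``simple thickening'' $STh^{\ell}$ avoids both of your objections by replacing each edge $(u,w)$ with $4\ell$ new degree-$2$ vertices each adjacent to both $u$ and $w$ and deleting the original edge; the result is simple, and bipartite unconditionally (all original vertices on one side, all new vertices on the other), regardless of the bipartition of the input. This is combined with a second operation attaching $STh^{\ell}$-thickened stars with $2q$ leaves to each vertex, yielding $R^{\ell,q}(G)$. Because $G$ is $d$-regular, Lemma~\ref{le:R} factors $Z(R^{\ell,q}(G);\x,\y,\z)$ cleanly into a prefactor times $Z(G;f_{\t,R}(\ell),f_{\y,R}(\ell,q))$, and varying $\ell$ and $q$ over polynomially many integers gives a genuine two-parameter grid: $\ell$ moves $f_{\t,R}$ by powers of a fixed ratio $E_{\t}$, and $q$ independently moves $f_{\y,R}$ by powers of $E_{\y,2,\ell}$. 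Your path-stretching gadget does deform $\t$, but it is less clear that the intermediate vertices' $\y$-contributions decouple as cleanly, and you would still need a separate independent handle on $\y$; you gesture at ``pendants'' but do not show the two deformations give a nondegenerate grid. Finally, the exceptional set $B$ in the paper is not left implicit as ``where the interpolation degenerates'' — Theorem~\ref{th:IsingSimple} lists the six explicit polynomial conditions (e.g.\ $\delta\gamma^2+1\neq 0$, $(\gamma\delta+\epsilon)^4\neq(\delta\gamma^2+1)^2(\delta+\epsilon^2)^2$, etc.) that guarantee $E_{\t}\notin\{0,\pm 1\}$, $E_{\y,2,\ell}$ eventually strictly monotone (Lemma~\ref{lem:growth}), and the prefactors nonvanishing, which is what actually exhibits $B$ as a finite union of surfaces. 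That explicit bookkeeping is where most of the work lies and is entirely missing from your sketch.
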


Although $Z(G;\x,\y,\z)$ is hard to compute in general,
its computation on restricted classes of graphs can be tractable. 
Computing $Z(G;\x,\y,\z)$ is fixed parameter tractable with respect to tree-width 
using the general logical framework of \cite{ar:makowskyTARSKI}.
This implies in particular that $Z(G;\x,\y,\z)$ is polynomial time computable on graphs of tree-width
at most $k$, for any fixed $k$, which also follows from \cite{ar:Noble2009}. 
Likewise, the Tutte polynomial is known to be polynomial time computable on graphs of bounded tree-width, cf. \cite{ar:Andrzejak98,ar:Noble98}.
In contrast, for graphs of bounded clique-width, a width notion which generalizes tree-width, 
the best algorithm known for the Tutte polynomial is subexponential, cf. \cite{ar:GHN06}. 
We show the following:
\begin{thm}[Tractability on graphs of bounded clique-width]
\label{th:mainD} ~\\
 There exists a function $f(k)$ such that $Z(G;\x,\y,\z)$ is computable on graphs of clique-width at most $k$ 
 in running time $O\left(n_G{}^{f(k)} \right)$.
\end{thm}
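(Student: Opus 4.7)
The plan is a standard dynamic programming on a clique-width expression (a $k$-expression), in the spirit of the algorithms for the matching polynomial and related graph polynomials on graphs of bounded clique-width. I first assume that a $k$-expression for $G$ is given; if not, one can be computed (or a $k'$-expression with $k' = 2^{O(k)}$) in polynomial time by the approximation algorithm of Oum and Seymour, which only changes the function $f(k)$.

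Recall that a $k$-expression builds $G$ by four operations: (1) create a vertex with label $i\in\{1,\ldots,k\}$; (2) take a disjoint union $H_1\oplus H_2$; (3) insert all edges between vertices of label $i$ and vertices of label $j$ (for $i\neq j$); (4) relabel all vertices of label $i$ to label $j$. The key observation is that in the Ising sum $Z(G;\x,\y,\z)=\sum_\sigma \x^{m_{eq}(\sigma)}\y^{m_{neq}(\sigma)}\z^{v_+(\sigma)}$, the contribution of a configuration $\sigma$ depends on each edge only through the spins of its two endpoints, and edges added at step (3) depend only on labels. Thus, at any subexpression producing a labeled graph $H$, it suffices to know, for each label $i\in\{1,\ldots,k\}$, the number $n_i^+$ of vertices of label $i$ carrying spin $+1$ and the number $n_i^-$ carrying spin $-1$, rather than the assignment $\sigma$ itself.

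Accordingly, I define the DP table
\[
f_H(n_1^+,n_1^-,\ldots,n_k^+,n_k^-) \;=\; \sum_{\sigma} \x^{m_{eq}(\sigma)}\y^{m_{neq}(\sigma)}\z^{v_+(\sigma)},
\]
where the sum ranges over all $\sigma:V(H)\to\{\pm 1\}$ whose spin counts per label match the vector. Each entry is a polynomial in $\x,\y,\z$ of polynomial degree, storable in polynomial space. The recurrences are then routine: a fresh vertex of label $i$ contributes $f(e_i^+)=\z$ and $f(e_i^-)=1$; disjoint union is a convolution of tables; the edge-insertion $\eta_{i,j}$ multiplies entry $\vec n$ by $\x^{n_i^+n_j^+ + n_i^-n_j^-}\y^{n_i^+n_j^- + n_i^-n_j^+}$, which is exactly the joint contribution of the newly created complete bipartite graph between labels $i$ and $j$; the relabeling $\rho_{i\to j}$ collapses the $i$-coordinates into the $j$-coordinates by summation. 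Finally $Z(G;\x,\y,\z) = \sum_{\vec n} f_G(\vec n)$.

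For the complexity count, the number of valid vectors $\vec n$ is at most $(n_G+1)^{2k}$, so each table has $O(n_G^{2k})$ entries. The disjoint union step is the most expensive operation, costing $O(n_G^{4k})$ per table after trivial implementation (and less with a bit of care). Since the $k$-expression has $O(n_G)$ nodes and polynomial arithmetic on the stored polynomials is also polynomial in $n_G$, the total runtime is $O\!\left(n_G^{f(k)}\right)$ for some function $f$, as required. The only nontrivial conceptual point, and the main thing to verify carefully, is that the edge-insertion $\eta_{i,j}$ is applied \emph{exactly once} between any two label classes in a $k$-expression, so that the factor $\x^{n_i^+n_j^++n_i^-n_j^-}\y^{n_i^+n_j^-+n_i^-n_j^+}$ correctly accounts for each inserted edge with no double counting; this is standard but must be respected by the expression model being used.
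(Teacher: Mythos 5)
Your overall strategy (dynamic programming on a $k$-expression) is sound and broadly in the spirit of the paper's Algorithm \ref{alg:cliquewidth}, but there is a genuine gap at the edge-insertion step, exactly where you say one must ``verify carefully.'' Your DP state records only per-label spin counts $(n_i^+,n_i^-)$ and a polynomial in $\x,\y,\z$. When $\eta_{i,j}$ is applied, you multiply by $\x^{n_i^+n_j^+}\cdots$ on the assumption that every $i$--$j$ pair becomes a \emph{new} edge. But $\eta_{i,j}$ in the standard model ``adds all possible edges'' between the two label classes, i.e.\ it is idempotent on an already-partially-connected pair, and a $k$-expression may well apply $\eta_{i,j}$, then take a disjoint union introducing fresh $i$- and $j$-vertices, and then apply $\eta_{i,j}$ again --- at which point the old $i$--$j$ edges already exist and your multiplier double-counts them. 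Your dismissal that ``$\eta_{i,j}$ is applied exactly once between any two label classes \dots this is standard'' is not a standard normalization, and it is not clear it can always be achieved without increasing $k$. The paper avoids this precisely by carrying a finer state: the table $t_{\bar{\za},\bar{\zb},\bar{\zc}}(G)$ records, for each label pair, the number of $S$-internal and $\bar S$-internal edges between those labels, and at an $\eta_{p,r}$ step it \emph{replaces} (rather than increments) the $\{p,r\}$-coordinates of $\bar\zb,\bar\zc$ with the post-insertion totals $\za_p\za_r$, etc. This replacement is the point of the whole construction and is exactly what makes repeated $\eta_{p,r}$ applications harmless. You would either need to prove a normalization lemma justifying your ``exactly once'' assumption, or enrich your DP state with per-label-pair edge counts as the paper does.

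A secondary issue: you have misstated the polynomial. You write $Z(G;\x,\y,\z)=\sum_\sigma \x^{m_{eq}(\sigma)}\y^{m_{neq}(\sigma)}\z^{v_+(\sigma)}$, which tracks only the aggregate number of \emph{monochromatic} edges; the paper's $Z(G;\x,\y,\z)=\sum_{S}\x^{|E_G(S)|}\y^{|S|}\z^{|E_G(\bar S)|}$ keeps $|E_G(S)|$ and $|E_G(\bar S)|$ separate (and puts $\y$ on $|S|$, not on the cut). Your version is essentially the bivariate polynomial in disguise. This is easy to fix --- your spin-count index carries enough information to produce the correct trivariate polynomial --- but the singleton base case and the $\eta_{i,j}$ multiplier need to be rewritten accordingly (a $+$ singleton should contribute $\y$, not $\z$; the $\eta_{i,j}$ factor should be of the form $\x^{\cdot}\z^{\cdot}$ with no $\y$).
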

In particular, $Z(G;\x,\y,\z)$ can be computed in polynomial time on graphs of 
clique-width\footnote{Rank-width can replace clique-width here and in Theorem \ref{th:mainD},
since the clique-width of a graph is bounded by a function of the rank-width of the graph.}
 at most $k$, for any fixed $k$.
On the other hand, it follows from \cite{ar:FominGolovachLokshtanov10} that, unless $\FPT=\Wone$, 
$Z(G;\x,\y,\z)$ is not {\em fixed parameter tractable with respect to clique-width},  i.e. 
  there is no algorithm for $Z(G;\x,\y,\z)$ which runs in time $O\left(q(n_G)\cdot{f(k)}\right)$
 on graphs $G$ of clique-width at most $k$ for every $k$
such that $q$ is a polynomial.

\section{Preliminaries}
\label{se:preliminaries}

\subsection{Definitions of the Ising polynomials} 
Let $G$ be a simple graph with vertex set $V(G)$ and edge set $E(G)$.
We denote $n_{G}=|V(G)|$ and $m_{G}=|E(G)|$. 
All graphs in this paper are simple and undirected unless otherwise stated. 

Given $S\subseteq V(G)$, we denote by $E_G(S)$ the set of edges in the graph induced by $S$ in $G$ 
and by $E_G(\bar{S})$ the set of edges in the graph obtained from $G$ 
by deleting the vertices of $S$ and their incident edges. 
We may omit the subscript and write e.g. $E(S)$ when the graph $G$
is clear from the context. 

\begin{defn}[The trivariate Ising polynomial]
The trivariate Ising polynomial is
\[
 Z(G;\x,\y,\z)= \sum_{S\subseteq V(G)} \x^{|E_G(S)|} \y^{|S|} \z^{|E_G(\bar{S})|} \,.
\]
\end{defn}
For every $G$, $Z(G;\x,\y,\z)$ is a polynomial in $\mathbb{Z}[\x,\y,\z]$ with positive coefficients. 
\begin{defn}[The bivariate Ising polynomial]
The bivariate Ising polynomial is obtained from $Z(G;\x,\y,\z)$ by setting $\x=\z=\t$. In other words, 
\[
 Z(G;\t,\y)= \sum_{S\subseteq V(G)} \t^{|E_G(S)|+|E_G(\bar{S})|} \y^{|S|} \,.
\]
\end{defn}
The {\em cut} $[S,\bar{S}]_{G}$
is the set of edges with one end-point in $S$ and the other in $\bar{S}=V(G)\setminus S$.
The bivariate Ising polynomial can be rewritten as follows, using that $E_G(S)$, $E_G(\bar{S})$ and $[S,\bar{S}]_G$
form a partition of $E(G)$: 
\begin{equation}
\label{eq:bivariateCutForm}
 Z(G;\t,\y)= t^{m_G} \sum_{S\subseteq V(G)} \t^{-|[S,\bar{S}]_G|} \y^{|S|} \,.
\end{equation}

The bivariate Ising polynomial is defined in this paper in a way which is slightly different from, and yet equivalent to,
the way it was defined in \cite{ar:AndrenMarkstrom2009}. 
The definition in \cite{ar:AndrenMarkstrom2009} is reminiscent of Equation (\ref{eq:bivariateCutForm}). 

In Section \ref{se:exp} we use a generalization of the bivariate Ising polynomial:
\begin{defn}
\label{defn:ZBC}
 For every $B,C\subseteq V(G)$ such that
$B\cap C = \emptyset$ we define 
\[
 Z(G;B,C;\t,\y) = \sum_{B\subseteq S \subseteq V(G) \setminus C} \t^{|E_G(S)|+|E_G(\bar{S})|} \y^{|S|}\,.
\]
\end{defn}
Clearly, $Z(G;\emptyset,\emptyset;\t,\y)= Z(G;\t,\y)$. 
In Section \ref{se:trivariate} we use a multivariate version of $Z(G;B,C;\x,\y,\z)$. 
In Section \ref{se:FPPT} we use a different multivariate generalization of $Z(G;\x,\y,\z)$. 
 
We denote by $[i]$ the set $\{1,\ldots,i\}$ for every $i\in\mathbb{N}^+$. 
\subsection{Complexity of the Ising polynomial}
Here we collect complexity results from the literature in order to 
discuss the complexity of computing, for every graph $G$, 
the trivariate (bivariate) polynomial $Z(G;\x,\y,\z)$ ($Z(G;\t,\y)$).
By {\em computing the polynomial} we mean computing the list of coefficients of monomials
$\x^i \y^j \z^k$ such that $i,k\in \{0,1,\ldots,m_G\}$ and $j\in \{0,1,\ldots,n_G\}$.

In \cite{ar:AndrenMarkstrom2009} it is shown that several graph invariants are encoded in $Z(G;\t,\y)$. 
\begin{prop}
\label{prop:complexity1}~
The following are polynomial time computable 
in the presence of an oracle to the bivariate polynomial $Z(G;\t,\y)$.
The oracle receives a graph $G$ as input and returns the matrix of coefficients of terms $\t^i\y^j$ in  $Z(G;\t,\y)$. 
\begin{itemize}
 \item 
the matching polynomial and the number of perfect matchings,
\item
the number of maximum cuts,
\end{itemize}
and, for regular graphs, 
\begin{itemize}
 \item 
the independent set polynomial and the vertex cover polynomial.
\end{itemize}
\end{prop}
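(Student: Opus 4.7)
The plan is to exploit the combinatorial interpretation of the coefficients. By the cut-form (\ref{eq:bivariateCutForm}), the coefficient $c_{i,j} := [\t^i \y^j]\,Z(G;\t,\y)$ counts subsets $S \subseteq V(G)$ with $|S|=j$ and $|[S,\bar S]_G| = m_G-i$, so the oracle effectively hands us the joint distribution of $(|S|, |[S,\bar S]_G|)$ over $S \subseteq V(G)$. For the number of maximum cuts this already suffices: the maximum cut of $G$ has size $m_G - i^\star$ where $i^\star = \min\{i : \sum_j c_{i,j} > 0\}$, and the number of (unordered) maximum cuts is $\tfrac{1}{2}\sum_{j=0}^{n_G} c_{i^\star,j}$, the factor $\tfrac{1}{2}$ accounting for the identification of $S$ with $V(G)\setminus S$.

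For $d$-regular graphs, the handshake identity $2|E_G(S)| + |[S,\bar S]_G| = d|S|$ recovers $|E_G(S)|$ from $|S|$ and the cut size. A subset $S$ is therefore independent if and only if $|[S,\bar S]_G| = d|S|$, equivalently $i = m_G - d|S|$; hence the number of independent sets of size $j$ is the single coefficient $c_{m_G - dj,\,j}$, and these values assemble into the independent set polynomial. The vertex cover polynomial is then read off from the complementation $S \mapsto V(G)\setminus S$, which sends vertex covers of size $j$ to independent sets of size $n_G - j$.

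The matching polynomial is the most involved case, since a matching is a constraint on edges with no obvious reading off $Z(G;\t,\y)$: outside the regular setting, even $|E_G(S)|$ is not determined by the pair $(|S|, |[S,\bar S]_G|)$. My plan is to query the oracle not on $G$ alone but on an auxiliary graph $G^\ast$ (or a polynomial-sized family of such) obtained from $G$ by a local gadget construction, chosen so that $k$-matchings of $G$ correspond bijectively to subsets of $V(G^\ast)$ sitting in a prescribed cell of the $(|S|, |[S,\bar S]_{G^\ast}|)$-grid; the natural candidates are edge subdivision and pendant attachment at each vertex. When no single gadget cleanly separates matching configurations from subsets with coincidentally equal cut statistics, the matching counts $m_k(G)$ are recovered by polynomial interpolation or inclusion-exclusion across the family, and the top coefficient $m_{n_G/2}(G)$ then yields the number of perfect matchings. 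The main obstacle is exactly the design of such a gadget so that matching configurations can be disentangled from the other contributors to the targeted coefficient, which is the combinatorial heart of \cite{ar:AndrenMarkstrom2009}.
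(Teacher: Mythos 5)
The paper itself does not prove this proposition; it only points to \cite{ar:AndrenMarkstrom2009}, so you are being asked to supply an argument the paper omits. Your treatment of three of the four items is correct and self-contained: the cut-form identity shows $c_{i,j}:=[\t^i\y^j]Z(G;\t,\y)$ counts $S$ with $|S|=j$ and $|[S,\bar S]_G|=m_G-i$, which immediately yields the maximum-cut count; and for $d$-regular $G$ the handshake identity $2|E_G(S)|+|[S,\bar S]_G|=d|S|$ pins down $|E_G(S)|$, so independent sets of size $j$ sit exactly in the cell $c_{m_G-dj,\,j}$, with vertex covers obtained by complementation. These parts are fine.

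The genuine gap is the matching polynomial and hence the number of perfect matchings, which is the item that actually needs an idea. You explicitly stop at a wish list (``edge subdivision and pendant attachment,'' a ``polynomial-sized family,'' ``interpolation or inclusion-exclusion'') and concede that the required gadget is ``the combinatorial heart of \cite{ar:AndrenMarkstrom2009}.'' That is an admission, not a proof, and the plan you sketch is also not what is actually done there. The route in \cite{ar:AndrenMarkstrom2009} is algebraic rather than gadget-based: the bivariate Ising polynomial is transformed, via the van der Waerden (high-temperature) expansion, into the generating function $\sum_{F\subseteq E(G)} u^{|F|}v^{\mathrm{odd}(F)}$ enumerating spanning subgraphs by edge count and number of odd-degree vertices. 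A subgraph with $k$ edges has at most $2k$ odd-degree vertices, with equality exactly for $k$-matchings, so $m_k(G)$ is the coefficient of $u^k v^{2k}$ after the change of variables, and $m_{n_G/2}(G)$ gives the perfect-matching count. No auxiliary graphs or oracle interpolation across a family are needed; it is a single substitution and coefficient extraction. Without this (or a worked-out alternative), your proposal does not establish the matching-polynomial and perfect-matching clauses of the proposition.
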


The following propositions apply two hardness results from the literature to $Z(G;\t,\y)$
using Proposition \ref{prop:complexity1}. 
\begin{prop}
\label{prop:complexity2a}
$Z(G;\t,\y)$ is $\spP$-hard to compute, even when restricted to simple $3$-regular bipartite planar graphs. 
\end{prop}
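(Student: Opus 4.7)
Since $Z(G;\t,\y)$ encodes several classical graph invariants by Proposition \ref{prop:complexity1}, the natural plan is a Turing reduction from a counting problem already known to be $\spP$-hard on simple $3$-regular bipartite planar graphs, using one oracle call to $Z(G;\t,\y)$ followed by extraction of the relevant invariant. Scanning the list in Proposition \ref{prop:complexity1}, two entries die under our restrictions: counting perfect matchings is in $\mathrm{P}$ on planar graphs by the FKT algorithm, and the number of maximum cuts in a bipartite graph is trivially $2^{c(G)}$ with $c(G)$ the number of connected components (the maximum cut saturates every edge). The remaining candidates are the independent-set and vertex-cover polynomials, both available to us precisely because $3$-regular graphs are in particular regular, and both yielding the number of independent sets after summing the coefficients (i.e.\ evaluating at $\y=1$).

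I would therefore proceed in two steps. First, invoke from the literature on counting in sparse, regular, planar graphs the fact that computing the number of independent sets of a simple $3$-regular bipartite planar graph is $\spP$-hard. Second, given an oracle returning the matrix of coefficients of $Z(G;\t,\y)$ on such graphs, apply Proposition \ref{prop:complexity1} to extract the independent-set polynomial $I(G;\y)$ in polynomial time, and then read off $I(G;1)$ to recover the number of independent sets of $G$. The reduction performs no modification to the input graph, so the full restriction (simple, $3$-regular, bipartite, planar) is preserved automatically and the reduction is a one-call Turing reduction.

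The only real obstacle is pinning down the $\spP$-hardness of $\#\mathrm{IS}$ on this particular graph class. If a direct reference is not at hand, the fallback is to combine $\spP$-hardness of $\#\mathrm{IS}$ on cubic graphs with standard bipartizing and planarizing gadgets arranged to preserve $3$-regularity, bipartiteness and planarity simultaneously; this is routine gadget engineering whose output can be made simple. Once hardness of $\#\mathrm{IS}$ on simple $3$-regular bipartite planar graphs is secured, the single-oracle-call reduction via Proposition \ref{prop:complexity1} finishes the proof.
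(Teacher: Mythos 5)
Your proposal is essentially the paper's proof: the paper invokes Proposition \ref{prop:complexity1} together with the result of Xia, Zhang, and Zhao (2007) that counting vertex covers on simple $3$-regular bipartite planar graphs ($\#3\mathrm{RBP\mbox{-}VC}$) is $\spP$-hard, and since on any graph the number of independent sets equals the number of vertex covers by complementation, your route through $\#\mathrm{IS}$ is the same reduction up to relabeling.
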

\begin{proof}
 The proposition follows from a result in \cite{ar:XiaZhangZhao2007} which states
 that it is $\spP$-hard to compute $\#\mathrm{3RBP-VC}$, the number of vertex covers on input graphs restricted 
 to be $3$-regular, bipartite and planar. 
\end{proof}

For the next proposition we need the following definition which is introduced in \cite{ar:DHW10}
following \cite{ar:ImpagliazzoPaturi1999}: 
\begin{defn}[$\#$ Exponential Time Hypothesis ($\spETH$)]
Let $s$ be the infimum of the set 
\[
\{
c : \mbox{there exists an algorithm for $\#3\mathrm{SAT}$ which runs in time } O(c^{n_G}) 
\}
\]
The {\em $\#$ Exponential Time Hypothesis} is the conjecture that $s>1$. 
\end{defn}

\begin{prop}
\label{prop:complexity2b}
There exists $c>1$ such that the computation of $Z(G;\t,\y)$ requires  
$\Omega\left(c^{n_G}\right)$ time on simple graphs, unless $\spETH$ fails. 
\end{prop}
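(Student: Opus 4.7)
The plan is to combine Proposition \ref{prop:complexity1} with a known $\spETH$-based lower bound for a simpler counting problem. Recall that Proposition \ref{prop:complexity1} shows the matching polynomial and, in particular, the number of perfect matchings can be read off $Z(G;\t,\y)$ in polynomial time; hence any exponential-time lower bound in $n_G$ for counting perfect matchings transfers at once to evaluating $Z(G;\t,\y)$.

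The remaining task is therefore to establish the $\spETH$ lower bound for the number of perfect matchings. I would proceed in two steps. First, apply the sparsification lemma of Impagliazzo, Paturi and Zane, so that under $\spETH$ there exists $d>1$ such that $\#3\mathrm{SAT}$ on $n$ variables with $O(n)$ clauses cannot be computed in time $o(d^n)$. Second, invoke a Valiant-style reduction from $\#3\mathrm{SAT}$ to counting perfect matchings: variables, literal occurrences and clauses are each replaced by fixed-size gadgets, producing a graph $G$ with $n_G = O(n+m) = O(n)$ vertices and a polynomial-time computable relation between $\#\mathrm{SAT}(\varphi)$ and the number of perfect matchings of $G$. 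Chaining with the sparsification bound yields a constant $c>1$ such that counting perfect matchings requires $\Omega(c^{n_G})$ time unless $\spETH$ fails, and by Proposition \ref{prop:complexity1} so does evaluating $Z(G;\t,\y)$.

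The main obstacle is bookkeeping the blow-up: both the sparsification step and the reduction to perfect matchings must incur only constant overhead per variable, so that the exponential lower bound scales in $n_G$ rather than in some fractional power of $n_G$. A fall-back route, should the perfect matchings reduction prove delicate, is to use the vertex cover polynomial on regular graphs (also supplied by Proposition \ref{prop:complexity1}) together with a linear-size reduction from sparse $\#3\mathrm{SAT}$ to vertex covers on regular graphs, possibly composed with standard regularizing gadgets; the same chain of inequalities then delivers the desired bound.
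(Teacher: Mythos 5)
Your overall strategy --- read off a harder counting invariant from $Z(G;\t,\y)$ via Proposition \ref{prop:complexity1} and invoke an $\spETH$ lower bound for it --- is the same as the paper's. The paper, however, makes a different and decisive choice of invariant: the \emph{number of maximum cuts}. Since \cite{ar:DHW10} already proves that counting maximum cuts on simple graphs requires $\Omega(c^{m_G})$ time under $\spETH$, and $n_G=O(m_G)$ for connected graphs, the paper's proof is essentially a citation together with the observation that one may restrict to connected instances.

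Your route through perfect matchings has a genuine gap, which you yourself flag as ``delicate.'' A Valiant-style reduction from $\#3\mathrm{SAT}$ produces a \emph{weighted} matrix whose permanent encodes $\#\mathrm{SAT}(\varphi)$; it does not directly produce an unweighted simple graph whose perfect-matching count does, and the latter is what Proposition \ref{prop:complexity1} extracts from $Z(G;\t,\y)$. Eliminating the weights is precisely where the ``constant overhead per variable'' claim breaks: the weighted permanent has magnitude $2^{\Theta(n)}$, so a Chinese-remainder approach requires $\Theta(n/\log n)$ primes, and replacing each weighted entry by a $0/1$ gadget for a residue modulo a $\Theta(\log n)$-bit prime yields instances with $\omega(n)$ vertices, which is incompatible with a lower bound of the form $\Omega(c^{n_G})$. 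Indeed, \cite{ar:DHW10} obtain their $0/1$-permanent bound only via a more careful block-interpolation argument, and even then with a polylogarithmic loss in the exponent; that weaker statement would not yield the proposition as written. Your fallback via vertex covers has the same issue: Proposition \ref{prop:complexity2a} gives only $\spP$-hardness with no running-time bound, and you establish no linear-size reduction from sparse $\#3\mathrm{SAT}$ to counting vertex covers on regular graphs. The paper's choice of $\#$maximum cuts --- an unweighted problem for which \cite{ar:DHW10} already supplies the clean $\exp(\Omega(m_G))$ lower bound on simple graphs --- sidesteps all of this.
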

\begin{proof}
The claim follows from a result of \cite{ar:DHW10} which states that
there exists $c>1$ for which computing the number of maximum cuts in simple graphs $G$ 
takes at least $\Omega\left(c^{m_G}\right)$ time, unless the $\spETH$ fails.
It is easy to see that the problem of computing the number of maximum cuts of disconnected
graphs can be reduced to that of connected graphs and so no subexponential algorithm exists for connected graphs,
and the proposition follows since for connected graphs $n_G = O(m_G)$. 

\end{proof}
On the other hand, $Z(G;\t,\y)$ and $Z(G;\x,\y,\z)$ can be computed na\"{\i}vely in time which is exponential in $n_G$.

The three above propositions apply to $Z(G;\x,\y,\z)$ as well.

\subsection{Clique-width}
\label{se:cliquewidth}
Let $[k]=\left\{ 1,\ldots,k\right\} $. A $k$-graph is a tuple $(G,\bar{c})$
which consists of a simple graph $G$ together with labels $c_{v}\in[k]$
for every $v\in V(G)$. The class $CW(k)$ of $k$-graphs of clique-width
at most $k$ is defined inductively. Singletons belong to $CW(k)$,
and $CW(k)$ is closed under disjoint union $\sqcup$ and two other
operations, $\rho_{i\to j}$ and $\mu_{i,j}$, to be defined next.
For any $i,j\in[k]$, $\rho_{i\to j}(G,\bar{c})$ is obtained by relabeling
any vertex with label $i$ to label $j$. For any $i,j\in[k]$, $\mu_{i,j}(G,\bar{c})$
is obtained by adding all possible edges $(u,v)$ such that $c_{u}=i$
and $c_{v}=j$. The clique-width of a graph $G$ is the minimal $k$
such that there exists a labeling $\bar{c}$ for which $(G,\bar{c})$
belongs to $CW(k)$.
We denote the clique-width of $G$ by $cw(G)$.

A $k$-expression is a term $t$ which consists
of singletons, disjoint unions $\sqcup$, relabeling $\rho_{i\to j}$
and edge creations $\mu_{i,j}$, which witnesses that the graph $val(t)$
obtained by performing the operations on the singletons is of clique-width
at most $k$. Every graph of tree-width at most $k$ is of clique-width
at most $2^{k+1}+1$, cf. \cite{ar:CourcelleOlariu2000}. While
computing the clique-width of a graph is $\NP$-hard, S. Oum and P. Seymour
showed that given a graph of clique-width $k$, finding a $(2^{3k+2}-1)$-expression
is fixed parameter tractable with clique-width as parameter, cf. \cite{ar:Oum2005,ar:SeymourOum2006}.

\section{Exponential Time Lower Bound}
\label{se:exp}

In this section we prove that in general the evaluations $(\gamma,\delta)\in \mathbb{Q}^2$
of $Z(G;\t,\y)$ require exponential time to compute under $\spETH$. 
In analogy with the use of Theta graphs to deal with the complexity of the Tutte polynomial,
we define Phi graphs and use them to interpolate the indeterminate $\t$ in $Z(G;\t,\y)$. 
We interpolate $\y$ by a simple construction.

\subsection{Phi graphs}
Our goal in this subsection is to define Phi graphs $\Phi_{\mathcal{H}}$ and compute the bivariate Ising polynomial at $\y=1$
on graphs  $G\otimes \Phi_{\mathcal{H}}$ to be defined below. 
In order to define Phi graphs we must first define $L_h$-graphs. 
For every $h\in \mathbb{N}$, the graph $L_h$  is obtained from the path 
$P_{h+1}$ with $h$ edges as follows.
Let $\head(h)$ denote one of the end-points of $P_{h+1}$. Let $\tr_1(h)$ and $\tr_2(h)$ be two
new vertices. $L_h$ is obtained from $P_{h+1}$ by 
adding edges to make both $\tr_1(h)$ and $\tr_2(h)$ adjacent to all the vertices of $P_{h+1}$. 

We can also construct $L_h$ recursively from $L_{h-1}$ by 
\begin{itemize}
 \item adding a new vertex $\hd(h)$ to $L_{h-1}$,
 \item renaming $\tr_i(h-1)$ to $\tr_i(h)$ for $i=1,2$, and
 \item adding three edges to make $\hd(h)$ adjacent to $\hd(h-1)$, $\tr_1(h)$ and $\tr_2(h)$. 
 \end{itemize}
Figure \ref{fig:Figure_L56} shows $L_5$.
\begin{figure}
\caption{\label{fig:Figure_L56} The graph $L_5$ and the construction of $L_5$ from $L_4$. $L_5$ is obtained from $L_4$ by adding the vertex $\mathrm{hd}(5)$ 
and its incident edges, and renaming $\mathrm{tr}_1(4)$ and $\mathrm{tr}_2(4)$ to $\mathrm{tr}_1(5)$ and $\mathrm{tr}_2(5)$ respectively.  }
\begin{center}
\includegraphics[scale=0.7]{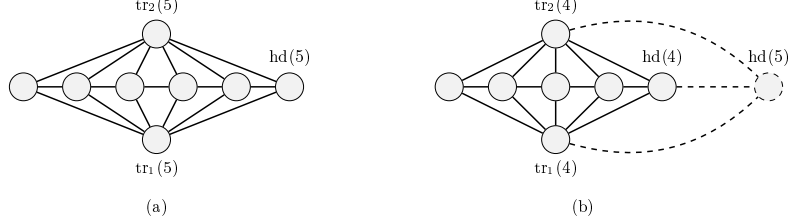}
\end{center}
\end{figure}

Let $B\sqcup C$ be a partition of the set $\{ \tr_1,\tr_2,\head\}$. 
Let $B(h)$ be the subset of $\{\tr_1(h),\tr_2(h),\head(h)\}$ which corresponds to $B$
and let $C(h)$ be defined similarly. We have that $B(h)$ and $C(h)$ form a partition
of $\{\tr_1(h),\tr_2(h),\head(h)\}$.
\begin{defn}
We denote $b_{B,C}(h) = Z(L_h;B(h),C(h);\t,1)$.
\end{defn}
The next two lemmas are devoted to computing $b_{B,C}(h)$. 
\begin{lem}
\label{lem:b_one}
 \begin{eqnarray} \notag
 b_{\{\tr_1,\head\},\{\tr_2\}}(h)& = &
 b_{\{\tr_2,\head\},\{\tr_1\}}(h)= \\ \notag
 b_{\{\tr_1\},\{\tr_2,\head\}}(h)&=&
 b_{\{\tr_2\},\{\tr_1,\head\}}(h)=
 (\t^2+\t)^h\cdot \t\,. 
 \end{eqnarray}
\end{lem}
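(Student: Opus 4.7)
The plan is to reduce the four quantities to a single one using symmetries and then derive a one-term recurrence in $h$. First I would exploit two symmetries of $b_{B,C}(h)$. The graph $L_h$ admits an automorphism swapping $\tr_1(h)$ with $\tr_2(h)$ and fixing every other vertex, which gives $b_{\{\tr_1,\head\},\{\tr_2\}}(h)=b_{\{\tr_2,\head\},\{\tr_1\}}(h)$ and $b_{\{\tr_1\},\{\tr_2,\head\}}(h)=b_{\{\tr_2\},\{\tr_1,\head\}}(h)$. Second, the spin-flip $S\mapsto\bar S$ preserves $|E_G(S)|+|E_G(\bar S)|$, and at $\y=1$ the factor $\y^{|S|}$ disappears, so swapping $B$ and $C$ leaves $b_{B,C}(h)$ unchanged at $\y=1$. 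These two identifications collapse the four cases to one, and it suffices to compute $b(h):=b_{\{\tr_1,\head\},\{\tr_2\}}(h)$.

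For the recurrence I would peel off the vertex $\hd(h)$ using the recursive construction of $L_h$ from $L_{h-1}$. The constraints fix $\hd(h),\tr_1(h)\in S$ and $\tr_2(h)\notin S$, and the only freedom at the new vertex lies in the spin of its neighbor $\hd(h-1)$. Splitting on this spin and tracking the three new edges incident to $\hd(h)$: if $\hd(h-1)\in S$ then $(\hd(h),\hd(h-1))$ and $(\hd(h),\tr_1(h))$ are monochromatic while $(\hd(h),\tr_2(h))$ is cut, contributing a factor $\t^2$ times $b_{\{\tr_1,\head\},\{\tr_2\}}(h-1)$; if $\hd(h-1)\notin S$ then only $(\hd(h),\tr_1(h))$ is monochromatic, contributing a factor $\t$ times $b_{\{\tr_1\},\{\tr_2,\head\}}(h-1)$. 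By the two symmetries both residuals equal $b(h-1)$, yielding
\begin{equation*}
b(h) \;=\; (\t^2+\t)\, b(h-1).
\end{equation*}

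The base case is a direct check: $L_0$ is the three-vertex path $\tr_1(0)-\hd(0)-\tr_2(0)$, the unique admissible set $S=\{\tr_1(0),\hd(0)\}$ contributes $|E_G(S)|+|E_G(\bar S)|=1$, so $b(0)=\t$. Iterating the recurrence gives $b(h)=(\t^2+\t)^h\cdot \t$, which matches the stated formula. I do not expect a substantive obstacle; the only step requiring care is the edge bookkeeping in the case split, which amounts to tracking which of the three new edges has endpoints of equal spin.
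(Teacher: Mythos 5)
Your proof is correct and follows the same route as the paper: reduce the four quantities to one using symmetry, split the sum over the spin of $\hd(h-1)$ to obtain $b(h)=\t^2\,b_{\{\tr_1,\head\},\{\tr_2\}}(h-1)+\t\,b_{\{\tr_1\},\{\tr_2,\head\}}(h-1)$, collapse this to the one-term recurrence $b(h)=(\t^2+\t)\,b(h-1)$, and check the base case $b(0)=\t$. The only difference is that you make the paper's terse ``by symmetry'' step explicit, identifying both the $\tr_1\leftrightarrow\tr_2$ automorphism and the spin-flip $S\mapsto\bar S$ (valid precisely because $\y=1$), which is a welcome clarification rather than a change of approach.
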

\begin{proof}
 We have
 \begin{eqnarray} \notag
 b_{\{\tr_1,\head\},\{\tr_2\}}(h)& = &
 b_{\{\tr_2,\head\},\{\tr_1\}}(h)= \\ \notag
 b_{\{\tr_1\},\{\tr_2,\head\}}(h)&=&
 b_{\{\tr_2\},\{\tr_1,\head\}}(h)
 \end{eqnarray}
 by symmetry. 
 We compute $b_{\{\tr_1,\head\},\{\tr_2\}}(h)$ 
 by finding a simple linear recurrence relation which it satisfies and
 solving it. 
  We divide the sum $b_{\{\tr_1,\head\},\{\tr_2\}}(h)$ into two sums, 
   \begin{eqnarray*}
  b_{\{\tr_1,\head\},\{\tr_2\}}(h)  & = & Z(L_h;\{\tr_1(h),\head(h),\hd(h-1)\},\{\tr_2(h)\};\t,1) + \\
 &  & Z(L_h;\{\tr_1(h),\head(h)\},\{\tr_2(h),\hd(h-1)\};\t,1)
 \end{eqnarray*}
  depending on whether $\mathrm{hd}(h-1)$ is in 
  the iteration variable $S$ of the sum \linebreak
$b_{\{\tr_1,\head\},\{\tr_2\}}(h)$ (as in Definition \ref{defn:ZBC}). 
  These two sums can be obtained from \linebreak
 $b_{\{\tr_1,\head\},\{\tr_2\}}(h-1)$ and
 $b_{\{\tr_1\},\{\tr_2,\head\}}(h-1)$ by adjusting for the addition of $\hd(h)$
 and its incident edges:
 \begin{itemize}
  \item 
 $\hd(h-1)\in S$:  adding $\hd(h)$ (to the graph and to $S$)
 puts two new edges in $E(S)\sqcup E(\bar{S})$, namely $(\tr_1,\hd(h))$ and $(\hd(h-1),\hd(h))$. Hence, 
 \[
  Z(L_h;\{\tr_1(h),\head(h),\hd(h-1)\},\{\tr_2(h)\};\t,1)
  \]
\[
   = b_{\{\tr_1,\head\},\{\tr_2\}}(h-1) \cdot\t^2\,.
\]
 \item 
 $\hd(h-1)\notin S$: adding $\hd(h)$ puts
 just one new edge in $E(S)\sqcup E(\bar{S})$, namely $(\tr_1,\hd(h))$. Hence, 
 \[
  Z(L_h;\{\tr_1(h),\head(h)\},\{\tr_2(h),\hd(h-1)\};\t,1) 
  \]
\[
 = b_{\{\tr_1\},\{\tr_2,\head\}}(h-1)\cdot\t\,.
\]
 \end{itemize}
 Using that $b_{\{\tr_1,\head\},\{\tr_2\}}(h-1) = b_{\{\tr_1\},\{\tr_2,\head\}}(h-1)$, we get:
 \begin{gather}
 b_{\{\tr_1,\head\},\{\tr_2\}}(h) = 
 b_{\{\tr_1,\head\},\{\tr_2\}}(h-1)\cdot (\t^2+\t)
 \end{gather}
and the lemma follows since $b_{\{\tr_1,\head\},\{\tr_2\}}(0)=\t$ (note $L_0$ is simply a path of length $3$). 
\end{proof}

We are left with two distinct cases of $b_{B,C}(h)$ to compute, since by symmetry,
\begin{gather} \notag
 b_{\{\tr_1,\tr_2,\head\},\emptyset}(h) =
 b_{\emptyset,\{\tr_1,\tr_2,\head\}}(h)  
 \mbox{ and }
 b_{\{\tr_1,\tr_2\},\{\head\}}(h) =
 b_{\{\head\},\{\tr_1,\tr_2\}}(h) \,.
\end{gather}

\begin{lem}
\label{lem:b_both}
Let 
\begin{gather} \notag
 \lamb_{1,2} = \frac{\t}{2}\left(1+ \t^2 \pm\sqrt{5-2 \t^2+\t^4}\right)\,, \\ \notag
  c_1 = \t^2 - c_2 \,, \\ \notag 
  c_2 = \frac{\t\left(-\t^3-2+\t+\t\sqrt{5-2\t^2+\t^4}\right)}{2\sqrt{5-2\t^2+\t^4}} \,, \\ \notag
  d_1 = 1 - d_2 \,, \\ \notag
  d_2 = \frac{-1-2\t+\t^2+\sqrt{5-2\t^2+\t^4}}{2\sqrt{5-2\t^2+\t^4}} \,.
\end{gather}
$\lamb_1$ corresponds to the $+$ case. 
If $\t\in\mathbb{R}$ then 
$c_1,c_2,d_1,d_2,\lambda_1,\lamb_2\in \mathbb{R}$,
$\lamb_1 \not=\lamb_2$,
and 
\begin{gather} \notag
 b_{\{\tr_1,\tr_2,\head\},\emptyset}(h) = c_1 \lambda_1^h +c_2 \lamb_2^h \\ \notag
 b_{\{\tr_1,\tr_2\},\{\head\}}(h) = d_1 \lambda_1^h +d_2 \lamb_2^h\,. 
\end{gather}
 
\end{lem}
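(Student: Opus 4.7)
The plan is to extend the case-splitting argument from Lemma~\ref{lem:b_one}: condition on whether $\hd(h-1) \in S$ in order to derive a \emph{coupled} linear recurrence for the pair $(b_1(h), b_2(h))$, where I write $b_1(h) = b_{\{\tr_1,\tr_2,\head\},\emptyset}(h)$ and $b_2(h) = b_{\{\tr_1,\tr_2\},\{\head\}}(h)$. Using the recursive construction of $L_h$ from $L_{h-1}$ by adding $\hd(h)$ together with the three edges to $\hd(h-1), \tr_1, \tr_2$, and accounting for which of those edges lie in $E(S) \sqcup E(\bar{S})$ versus in the cut $[S,\bar{S}]$, I expect to obtain
\begin{gather*}
b_1(h) = \t^3\, b_1(h-1) + \t^2\, b_2(h-1), \\
b_2(h) = b_1(h-1) + \t\, b_2(h-1).
\end{gather*}
For instance, in computing $b_1(h)$, if $\hd(h-1) \in S$ then all three new edges join vertices of $S$, contributing $\t^3$; if $\hd(h-1) \notin S$, then $(\hd(h),\hd(h-1))$ is a cut edge while the other two remain in $E(S)$, giving $\t^2$. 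The analogous split for $b_2(h)$ (where now $\hd(h) \notin S$) yields the second line.

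Next, I would package this as the matrix recurrence $(b_1(h), b_2(h))^T = M \cdot (b_1(h-1), b_2(h-1))^T$ with
\[
M = \begin{pmatrix} \t^3 & \t^2 \\ 1 & \t \end{pmatrix},
\]
whose characteristic polynomial is $\lambda^2 - (\t^3 + \t)\lambda + (\t^4 - \t^2)$. The quadratic formula then gives eigenvalues
\[
\lamb_{1,2} = \frac{\t}{2}\Bigl(1 + \t^2 \pm \sqrt{\t^4 - 2\t^2 + 5}\Bigr),
\]
matching the stated $\lamb_{1,2}$. To see that $\lamb_1 \neq \lamb_2$ and both are real when $\t \in \mathbb{R}$, I would note $\t^4 - 2\t^2 + 5 = (\t^2-1)^2 + 4 \geq 4 > 0$, so the square root is real and strictly positive; in particular $c_1, c_2, d_1, d_2 \in \mathbb{R}$ automatically.

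Since $\lamb_1 \neq \lamb_2$, the general solution has the form $b_i(h) = \alpha_i \lamb_1^h + \beta_i \lamb_2^h$. To pin down the coefficients I would compute the initial values on $L_0$, which is the path $\tr_1 - \hd(0) - \tr_2$ with two edges: $b_1(0) = \t^2$ (forcing $S = V(L_0)$, so both edges lie in $E(S)$) and $b_2(0) = 1$ (forcing $S = \{\tr_1, \tr_2\}$, with no edges in $E(S) \cup E(\bar S)$). Imposing $c_1 + c_2 = \t^2$ and $c_1 \lamb_1 + c_2 \lamb_2 = \t^3 b_1(0) + \t^2 b_2(0) = \t^5 + \t^2$ and solving gives the stated $c_1 = \t^2 - c_2$ and $c_2 = \frac{\t(-\t^3 - 2 + \t + \t\sqrt{5 - 2\t^2 + \t^4})}{2\sqrt{5 - 2\t^2 + \t^4}}$; the analogous step for $d_1, d_2$ uses $d_1 + d_2 = 1$ and $d_1 \lamb_1 + d_2 \lamb_2 = b_1(0) + \t b_2(0) = \t^2 + \t$.

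The main obstacle is not conceptual but algebraic: the closed-form expressions for $c_2$ and $d_2$ involve $\sqrt{5 - 2\t^2 + \t^4}$ in both numerator and denominator, so some careful rationalization is needed to match the stated formulas. Everything else — the case analysis, the eigenvalue extraction, and the positivity of the discriminant — is routine and parallels the method already used in Lemma~\ref{lem:b_one}.
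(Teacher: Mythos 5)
Your proof is correct and follows the same route as the paper: derive the coupled linear recurrence $b_1(h)=\t^3 b_1(h-1)+\t^2 b_2(h-1)$, $b_2(h)=b_1(h-1)+\t b_2(h-1)$ by conditioning on whether $\hd(h-1)\in S$, then solve it with initial conditions $b_1(0)=\t^2$, $b_2(0)=1$. The paper states the recurrence and initial values and then just invokes ``standard methods,'' so your explicit matrix/eigenvalue computation simply fills in the routine details the paper leaves implicit.
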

\begin{proof}
The content of the square root is always strictly positive for $\t\in \mathbb{R}$. Hence, 
$\lambda_1\not=\lambda_2$ and $c_1,c_2,d_1,d_2,\lambda_1,\lamb_2\in \mathbb{R}$. 

The sequences $b_{\{\tr_1,\tr_2,\head\},\emptyset}(h)$ and 
$b_{\{\tr_1,\tr_2\},\{\head\}}(h)$ satisfy a mutual linear recurrence as follows:
\begin{eqnarray}\notag
\label{eq:mutual_rec}
 b_{\{\tr_1,\tr_2,\head\},\emptyset}(h) &=& 
 b_{\{\tr_1,\tr_2,\head\},\emptyset}(h-1)\cdot \t^3+
 b_{\{\tr_1,\tr_2\},\{\head\}}(h-1)\cdot \t^2 \\  \notag
 b_{\{\tr_1,\tr_2\},\{\head\}}(h) &=&
 b_{\{\tr_1,\tr_2,\head\},\emptyset}(h-1) +
 b_{\{\tr_1,\tr_2\},\{\head\}}(h-1)\cdot \t 
 \end{eqnarray}
This implies that both  $b_{\{\tr_1,\tr_2,\head\},\emptyset}(h)$ and $b_{\{\tr_1,\tr_2\},\{\head\}}(h)$
satisfy linear recurrence relations
with the following initial conditions:
\begin{gather} \notag
 b_{\{\tr_1,\tr_2,\head\},\emptyset}(0) = t^2 \mbox{ and }
 b_{\{\tr_1,\tr_2,\head\},\emptyset}(1) = t^5 +t^2 
 \\ \notag
 b_{\{\tr_1,\tr_2\},\{\head\}}(0) = 1 \mbox{ and }
 b_{\{\tr_1,\tr_2\},\{\head\}}(1) = t^2 +t\,. 
\end{gather}
These recurrences can be calculated and solved using standard methods, see e.g. \cite{bk:FlajoletSedgewick2009} or \cite{bk:GrahamKnuthPatashnik94}.
\end{proof}

Using the previous two lemmas, we get:
\begin{lem}
\label{lem:Z-L_h}
\begin{eqnarray} \notag
 Z(L_h;\{\tr_1\},\{\tr_2\};\t,1) &=& Z(L_h;\{\tr_2\},\{\tr_1\};\t,1) =  2 \t (\t^2 +\t)^h \\ \notag
 Z(L_h;\{\tr_1,\tr_2\},\emptyset;\t,1) &=& Z(L_h;\emptyset,\{\tr_1,\tr_2\};\t,1) 
=
 (c_1+d_1) \lamb_1^h + (c_2+d_2) \lamb_2^h \,,
\end{eqnarray}
where $c_1,c_2,d_1,d_2,\lamb_1,\lamb_2$ are as in Lemma \ref{lem:b_both}. 
\end{lem}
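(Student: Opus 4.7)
The plan is to express each of the four quantities in Lemma \ref{lem:Z-L_h} as a sum of the $b_{B,C}(h)$ values already computed in Lemmas \ref{lem:b_one} and \ref{lem:b_both}, by partitioning the sum $Z(L_h;B(h),C(h);\t,1)$ according to the status of $\head(h)$.

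More precisely, since neither $\{\tr_1\},\{\tr_2\}$ nor $\{\tr_1,\tr_2\},\emptyset$ constrains whether $\head(h)$ lies in $S$ or in $\bar S$, I would split each sum into the two cases $\head(h)\in S$ and $\head(h)\in\bar S$:
\begin{eqnarray*}
Z(L_h;\{\tr_1\},\{\tr_2\};\t,1) &=& b_{\{\tr_1,\head\},\{\tr_2\}}(h) + b_{\{\tr_1\},\{\tr_2,\head\}}(h),\\
Z(L_h;\{\tr_1,\tr_2\},\emptyset;\t,1) &=& b_{\{\tr_1,\tr_2,\head\},\emptyset}(h) + b_{\{\tr_1,\tr_2\},\{\head\}}(h).
\end{eqnarray*}
Plugging in Lemma \ref{lem:b_one} gives the first line of the lemma, since each of the two summands equals $\t(\t^2+\t)^h$. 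Plugging in Lemma \ref{lem:b_both} immediately yields the stated $(c_1+d_1)\lambda_1^h+(c_2+d_2)\lambda_2^h$ formula for the second line.

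For the symmetric statements $Z(L_h;\{\tr_2\},\{\tr_1\};\t,1)$ and $Z(L_h;\emptyset,\{\tr_1,\tr_2\};\t,1)$, I would invoke the general symmetry
\[
Z(G;B,C;\t,1)=Z(G;C,B;\t,1),
\]
which follows from the substitution $S\mapsto \bar S$ in Definition \ref{defn:ZBC} at $\y=1$: this substitution is a bijection between the index sets, and $|E_G(S)|+|E_G(\bar S)|$ is invariant under swapping $S$ and $\bar S$.

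There is no real obstacle here; the lemma is essentially a bookkeeping step that reassembles the pieces obtained in Lemmas \ref{lem:b_one} and \ref{lem:b_both}. The only care needed is to verify that the two cases ($\head(h)\in S$ versus $\head(h)\in\bar S$) indeed partition the index set of $Z(L_h;\{\tr_1\},\{\tr_2\};\t,1)$ and $Z(L_h;\{\tr_1,\tr_2\},\emptyset;\t,1)$, which is immediate from Definition \ref{defn:ZBC}.
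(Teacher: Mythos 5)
Your proof is correct and follows exactly the route intended by the paper, which states only that the lemma "follows from Lemmas \ref{lem:b_one} and \ref{lem:b_both}"; you have simply made explicit the partition of the index set according to whether $\head(h)\in S$, together with the $S\mapsto\bar S$ symmetry at $\y=1$, both of which are the obvious bookkeeping steps the paper leaves to the reader.
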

\begin{proof}
 The lemma follows from Lemmas \ref{lem:b_one} and \ref{lem:b_both}.
\end{proof}

\begin{defn}[Phi graphs]
Let $\mc{H}$ be a finite set of positive integers.
We denote by $\Phi_{\mc{H}}$ the graph obtained from the
disjoint union of the graphs 
$  L_h : h\in \mc{H}$ as follows. 
For each $i=1,2$, the vertices $\tr_i(h)$, $h\in \mc{H}$, are identified into one vertex denoted $\tr_i(\mc{H})$.
\end{defn}
The number of vertices in $\Phi_{\mc{H}}$ is $2+ \sum_{h\in \mc{H}} (h+1)$.
Figure \ref{fig:Figure_H} shows $\Phi_{\{1,3,4\}}$. 
\begin{figure}
\caption{\label{fig:Figure_H} An example of a Phi graph: the graph $\Phi_{\mc{H}}$ for $\mc{H}=\{1,3,4\}$. }
\begin{center}
\includegraphics[scale=1]{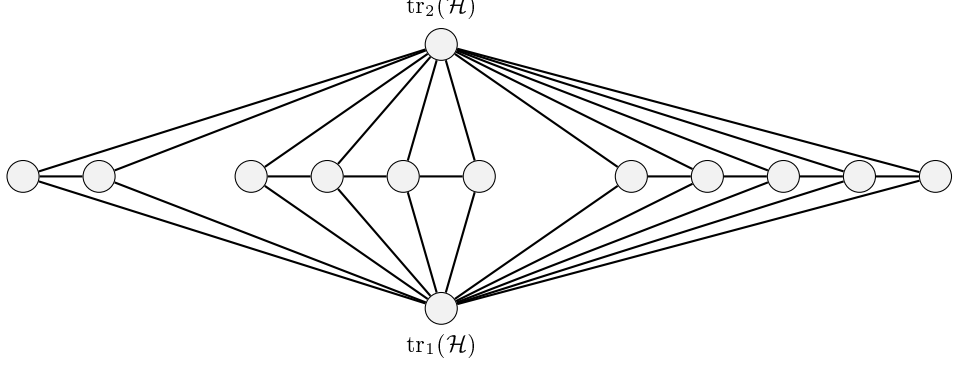}
\end{center}
\end{figure}

\begin{lem}
\label{lem:Phi_H}
 Let $\mc{H}$ be a finite set of positive integers. Then
 \begin{gather} \notag
  Z(\Phi_{\mc{H}};\{\tr_1(\mc{H})\},\{\tr_2(\mc{H})\};\t,1) = (2\t)^{|\mc{H}|} \prod_{h\in\mc{H}}(\t^2+\t)^h\,,
  \end{gather}
  and 
  \begin{eqnarray} \notag
  Z(\Phi_{\mc{H}};\{\tr_1(\mc{H}),\tr_2(\mc{H})\},\emptyset;\t,1) & = & \\ \notag 
  Z(\Phi_{\mc{H}};\emptyset,\{\tr_1(\mc{H}),\tr_2(\mc{H})\};\t,1) &=&
  \prod_{h\in \mc{H}} \left((c_1+d_1) \lamb_1^h + (c_2+d_2) \lamb_2^h\right)  \,.
 \end{eqnarray}
\end{lem}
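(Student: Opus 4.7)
The plan is to exploit the fact that in $\Phi_{\mc{H}}$, the only vertices shared by different blocks $L_h$ are the two identified vertices $\tr_1(\mc{H})$ and $\tr_2(\mc{H})$. Since the edge set of $\Phi_{\mc{H}}$ is the disjoint union of the edge sets of the $L_h$'s, once we fix the membership of $\tr_1(\mc{H})$ and $\tr_2(\mc{H})$ in $S$, the choice of $S$ on the remaining vertices (the path vertices of the various $L_h$'s) decouples completely into independent choices on each $L_h$, and the partition function factorizes accordingly.

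More concretely, first I would verify the edge-decomposition: each edge of $\Phi_{\mc{H}}$ belongs to exactly one $L_h$, so for any $S \subseteq V(\Phi_{\mc{H}})$,
\[
|E_{\Phi_{\mc{H}}}(S)| + |E_{\Phi_{\mc{H}}}(\bar S)| \;=\; \sum_{h\in\mc{H}} \bigl(|E_{L_h}(S\cap V(L_h))| + |E_{L_h}(\overline{S\cap V(L_h)})|\bigr).
\]
Next, given the outer constraint (say $B = \{\tr_1(\mc{H})\}$, $C = \{\tr_2(\mc{H})\}$), the sets $S_h := S \cap V(L_h)$ range independently over subsets of $V(L_h)$ with $\tr_1(h) \in S_h$ and $\tr_2(h) \notin S_h$. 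Because we are evaluating at $\y = 1$, the factor $\y^{|S|}$ disappears and no bookkeeping is needed to avoid double-counting the shared vertices. Therefore
\[
Z(\Phi_{\mc{H}};B(\mc{H}),C(\mc{H});\t,1) \;=\; \prod_{h\in\mc{H}} Z(L_h;B(h),C(h);\t,1).
\]

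Finally, I would substitute the values of $Z(L_h;\cdot,\cdot;\t,1)$ supplied by Lemma~\ref{lem:Z-L_h}. For $B=\{\tr_1\}$, $C=\{\tr_2\}$ each factor equals $2\t(\t^2+\t)^h$, yielding $(2\t)^{|\mc{H}|}\prod_{h\in\mc{H}}(\t^2+\t)^h$. For $B=\{\tr_1,\tr_2\}$, $C=\emptyset$ (and symmetrically $B=\emptyset$, $C=\{\tr_1,\tr_2\}$) each factor equals $(c_1+d_1)\lamb_1^h + (c_2+d_2)\lamb_2^h$, giving the stated product.

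The only mildly delicate step is the factorization: one must be sure that the identification of the $\tr_i(h)$'s does not create any extra edges or duplicate any edge contributions, and that no factor of $\y$ is lost or gained by counting the shared vertices once rather than $|\mc{H}|$ times. Both concerns vanish because $\Phi_{\mc{H}}$ is built purely by vertex identification (not edge addition) and because we are evaluating at $\y=1$; once this is observed, the rest is a direct application of Lemma~\ref{lem:Z-L_h}.
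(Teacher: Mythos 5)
Your proof is correct and takes essentially the same approach as the paper, which simply states that the result follows from Lemma~\ref{lem:Z-L_h} because every edge of $\Phi_{\mc{H}}$ lies in exactly one $L_h$. You have just spelled out what the paper leaves implicit: fixing the membership of $\tr_1(\mc{H})$ and $\tr_2(\mc{H})$ makes the choices of $S$ on the $h+1$ path vertices of each $L_h$ independent, the edge count $|E(S)|+|E(\bar{S})|$ splits additively across the $L_h$'s, and the $\y^{|S|}$ factor at $\y=1$ disappears so the double-counting of the two shared vertices in the product is harmless. Substituting the two formulas from Lemma~\ref{lem:Z-L_h} then gives the stated products.
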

\begin{proof}
 It follows from Lemma \ref{lem:Z-L_h} using that
 all edges are contained in some $L_h$. 
\end{proof}

We can now define the graphs $G\otimes \mc{H}$:

\begin{defn}[$G\otimes \mc{H}$]
Let $\mc{H}$ be a finite set of positive integers. 
Let $G$ be a graph. 
For every edge $e=(u_1,u_2)\in E(G)$, let  $\Phi_{\mc{H},e}$ be a new copy of $\Phi_{\mc{H}}$, where
we denote $\tr_1(\mc{H})$ and $\tr_2(\mc{H})$ for $\Phi_{\mc{H},e}$  by $\tr_1(\mc{H},e)$ and $\tr_2(\mc{H},e)$. 
Let $G\otimes \Phi_{\mc{H}} = G\otimes \mc{H}$  be the graph obtained from the disjoint union of the graphs
\[
\Phi_{\mc{H},e} : e\in E(G) 
\]
by identifying $\tr_i(\mc{H},e)$ with $u_i$, $i=1,2$, for every edge $e=(u_1,u_2)\in E(G)$. 
\footnote{It does not matter how we identify $u_1$ and $u_2$ with $\tr_1(\mc{H},e)$ and $\tr_2(\mc{H},e)$
since the two possible alignments will give raise to isomorphic graphs. }
\end{defn}

\begin{lem}
\label{lem:otimes_reduction}
Let $\mc{H}$ be a finite set of positive integers. 
Let $f_{\t,\mc{H}}$ and $g_{p,\mc{H}}$ be the following functions:
\begin{eqnarray*}
  f_{\t,\mc{H}}(e_1,e_2,r_1,r_2) &=& 
  \prod_{h\in \mc{H}}  
  \left(
  e_1 r_1^h +
  e_2 r_2^h  
  \right)\\
 f_{p,\mc{H}}(\t) &=& \left((2\t)^{|\mc{H}|} \prod_{h\in\mc{H} }(\t^2+\t)^h \right)^{m_G}\,.
\end{eqnarray*}
Then
 \begin{eqnarray} \notag
  Z(G\otimes \mc{H};\t,1) &=& f_{p,\mc{H}}(\t) \cdot  
  Z\left(G;f_{\t,\mc{H}}\left(\frac{c_1+d_1}{2t},\frac{c_2+d_2}{2t},\frac{\lamb_1}{\t^2+\t},\frac{\lamb_2}{\t^2+\t}
  \right),1\right)\,.
 \end{eqnarray}
\end{lem}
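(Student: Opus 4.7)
The plan is to evaluate $Z(G\otimes\mc{H};\t,1)$ by conditioning on $S\cap V(G)$ and using that the copies $\Phi_{\mc{H},e}$ are pairwise vertex-disjoint outside $V(G)$; hence, once the spins at their $\tr_1$ and $\tr_2$ endpoints are fixed, the copies contribute independently to the bivariate Ising polynomial.

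First I would write
\begin{equation*}
 Z(G\otimes\mc{H};\t,1) \;=\; \sum_{S_G\subseteq V(G)} \;\prod_{e=(u_1,u_2)\in E(G)} Z(\Phi_{\mc{H},e};B_e,C_e;\t,1),
\end{equation*}
where $B_e\sqcup C_e$ is the partition of $\{\tr_1(\mc{H},e),\tr_2(\mc{H},e)\}$ induced by $S_G$, that is, $\tr_i(\mc{H},e)\in B_e$ iff $u_i\in S_G$. This decomposition is valid because every $S\subseteq V(G\otimes\mc{H})$ is determined by $S_G:=S\cap V(G)$ together with its restriction to each copy $\Phi_{\mc{H},e}$, and the edges of $G\otimes\mc{H}$ split into pieces internal to the different copies.

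Next, Lemma \ref{lem:Phi_H} shows that the edge factor $Z(\Phi_{\mc{H},e};B_e,C_e;\t,1)$ takes only two distinct values. Setting
\begin{equation*}
 a := (2\t)^{|\mc{H}|}\prod_{h\in\mc{H}}(\t^2+\t)^h, \qquad
 b := \prod_{h\in\mc{H}}\bigl((c_1+d_1)\lamb_1^h+(c_2+d_2)\lamb_2^h\bigr),
\end{equation*}
the factor equals $a$ exactly when $e\in[S_G,\overline{S_G}]_G$ (one endpoint in, one out), and equals $b$ exactly when both or neither endpoint of $e$ lies in $S_G$. Therefore
\begin{equation*}
 Z(G\otimes\mc{H};\t,1) \;=\; \sum_{S_G\subseteq V(G)} a^{|[S_G,\overline{S_G}]_G|}\, b^{|E_G(S_G)|+|E_G(\overline{S_G})|}.
\end{equation*}

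Finally, since $|[S_G,\overline{S_G}]_G|=m_G-|E_G(S_G)|-|E_G(\overline{S_G})|$, I pull out $a^{m_G}=f_{p,\mc{H}}(\t)$, and the remaining sum equals $Z(G;b/a,1)$ by definition of the bivariate Ising polynomial at $\y=1$. A direct simplification gives
\begin{equation*}
 \frac{b}{a} \;=\; \prod_{h\in\mc{H}}\frac{(c_1+d_1)\lamb_1^h+(c_2+d_2)\lamb_2^h}{2\t\,(\t^2+\t)^h} \;=\; f_{\t,\mc{H}}\!\left(\tfrac{c_1+d_1}{2\t},\tfrac{c_2+d_2}{2\t},\tfrac{\lamb_1}{\t^2+\t},\tfrac{\lamb_2}{\t^2+\t}\right),
\end{equation*}
matching the stated identity. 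There is no deep obstacle here; the only point requiring care is the bookkeeping that rewrites $a^{|[S_G,\overline{S_G}]_G|}b^{m_G-|[S_G,\overline{S_G}]_G|}$ as $a^{m_G}(b/a)^{|E_G(S_G)|+|E_G(\overline{S_G})|}$ and verifies that the ratio $b/a$ coincides with $f_{\t,\mc{H}}$ evaluated at the indicated arguments.
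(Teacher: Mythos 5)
Your proof is correct and follows essentially the same route as the paper's: condition on $S\cap V(G)$, invoke Lemma~\ref{lem:Phi_H} to evaluate the per-edge factors, use $|[S,\bar S]_G| = m_G - |E_G(S)\sqcup E_G(\bar S)|$ to extract $a^{m_G}=f_{p,\mc{H}}(\t)$, and identify $b/a$ with $f_{\t,\mc{H}}$ at the stated arguments. The only difference is notational (you name the two factor values $a$ and $b$ explicitly), which if anything makes the bookkeeping clearer.
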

\begin{proof}
 Let $\tilde{G} = G\otimes \mc{H}$. 
 By definition,
 \[
  Z(\tilde{G};\t,1) =  \sum_{S\subseteq V(\tilde{G})} \t^{|E_{\tilde{G}}(S)\sqcup E_{\tilde{G}}(\bar{S})|}\,.
 \]
We can rewrite this sum as
\begin{eqnarray} \notag
  Z(\tilde{G};\t,1) &=&
 \sum_{S\subseteq V(G)} 
  \left(\prod_{e\in [S,\bar{S}]_G}
  Z(\Phi_{\mc{H},e};\{\tr_1(\mc{H},e)\},\{\tr_2(\mc{H},e)\}; \t,1)\right) \\ \notag
  & & \cdot \left(\prod_{e\in E_{G}(S)\sqcup E_{G}(\bar{S})}
  Z(\Phi_{\mc{H},e};\{\tr_1(\mc{H},e),\tr_2(\mc{H},e)\},\emptyset; \t,1)\right) 
  \,,
\end{eqnarray}
 since edges only occur within some $\Phi_{\mc{H},e}$. 
 Using lemma \ref{lem:Phi_H},  the sum in the last equation can be written as
  \begin{eqnarray} \notag
  & & \sum_{S\subseteq V(G)} 
  \left((2\t)^{|\mc{H}|} \prod_{h\in\mc{H} }(\t^2+\t)^h \right)^{|[S,\bar{S}]_G|} \cdot \\ & & \notag
  \left(
  \prod_{h\in \mc{H}} \left((c_1+d_1) \lamb_1^h + (c_2+d_2) \lamb_2^h\right)
  \right)^{|E_G(S)\sqcup E_G(\bar{S})|}\,.
  \end{eqnarray}
Since $|[S,\bar{S}]_G = m_G - |E_G(S)\sqcup E_G(\bar{S})|$, we can rewrite the last equation as
\begin{eqnarray} \notag & &
  \left((2\t)^{|\mc{H}|} \prod_{h\in\mc{H} }(\t^2+\t)^h \right)^{m_G} \cdot
   \\ \notag & & 
  \sum_{S\subseteq V(G)}
  \left(
  \frac{\prod_{h\in \mc{H}} \left((c_1+d_1) \lamb_1^h + (c_2+d_2) \lamb_2^h\right)}
  {(2\t)^{|\mc{H}|} \prod_{h\in\mc{H} }(\t^2+\t)^h}
  \right)^{|E_G(S)\sqcup E_G(\bar{S})|}\,.
\end{eqnarray}
The last sum can be rewritten as
\[
 \sum_{S\subseteq V(G)} 
  \left[
  \prod_{h\in \mc{H}}  
  \left(
  \frac{c_1+d_1}{2t} \left(\frac{\lamb_1}{\t^2+\t}\right)^h +
  \frac{c_2+d_2}{2t} \left(\frac{\lamb_2}{\t^2+\t}\right)^h  
  \right)
  \right]^{|E_G(S)\sqcup E_G(\bar{S})|}
\]
and the lemma follows.
\end{proof}

The construction described above will be useful to deal with 
the evaluation of $Z(G;\t,\y)$ with $\y=-1$ 
due to the following lemma.
For a graph $G$, let $G_{(1)}$ be the graph obtained from
$G$ by adding, for each $v\in V(G)$, a new vertex $v'$
and an edge $(v,v')$. So $v'$ is adjacent to $v$ only. 
$G_{(1)}$ is a graph with $2 n_G$ vertices.
\begin{lem}\label{lem:Zminus1}
 $Z(G;\t,1)=(\t-1)^{-n_G} Z(G_{(1)};\t,-1)$
\end{lem}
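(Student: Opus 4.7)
The plan is to expand the right-hand side using the definition of the bivariate Ising polynomial applied to $G_{(1)}$, and factor out the contributions of the pendant vertices $v'$ (one per $v\in V(G)$) to obtain a polynomial in $\t$ times $Z(G;\t,1)$. Write every subset $S\subseteq V(G_{(1)})$ as a disjoint union $S=S_1\sqcup S_2$ with $S_1\subseteq V(G)$ and $S_2\subseteq V'=\{v':v\in V(G)\}$. The edge set of $G_{(1)}$ partitions into $E(G)$ and the pendant edges $\{(v,v'):v\in V(G)\}$, so
\[
|E_{G_{(1)}}(S)|+|E_{G_{(1)}}(\bar{S})|=|E_G(S_1)|+|E_G(\bar{S_1})|+M(S_1,S_2),
\]
where $M(S_1,S_2)$ counts the pendant edges $(v,v')$ with $v\in S_1 \Leftrightarrow v'\in S_2$.

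The key observation is that the sum over $S_2\subseteq V'$ factors as a product over $v\in V(G)$, since each pendant $v'$ appears in exactly one pendant edge and has no other neighbours. For each $v$, the contribution of $v'$ is
\[
\sum_{v'\in S_2\text{ or not}}(-1)^{[v'\in S_2]}\t^{[\text{pendant edge }(v,v')\text{ is monochromatic}]}.
\]
I would then do the small case analysis: if $v\in S_1$ this contribution is $(-1)\cdot\t+1=-(\t-1)$, and if $v\notin S_1$ it is $(-1)\cdot 1+\t=\t-1$. Multiplying over all $v\in V(G)$ gives the single closed form $(-1)^{|S_1|}(\t-1)^{n_G}$ for the inner sum, regardless of which $S_1$ is chosen.

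Plugging this back, the outer $(-1)^{|S_1|}$ from the weight $\y^{|S|}$ at $\y=-1$ cancels the $(-1)^{|S_1|}$ produced by the pendant vertices, leaving
\[
Z(G_{(1)};\t,-1)=(\t-1)^{n_G}\sum_{S_1\subseteq V(G)}\t^{|E_G(S_1)|+|E_G(\bar{S_1})|}=(\t-1)^{n_G}Z(G;\t,1),
\]
which upon dividing by $(\t-1)^{n_G}$ gives the claimed identity. No step is really a genuine obstacle: the argument is a straightforward factorisation, but care must be taken in bookkeeping the two $(-1)^{|S_1|}$ factors so that the cancellation that kills the sign-alternation (and makes the right-hand side an evaluation at $\y=1$) is transparent.
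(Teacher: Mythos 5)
Your proof is correct and follows essentially the same route as the paper's: split $S\subseteq V(G_{(1)})$ into its parts on $V(G)$ and on the pendant vertices, factor the inner sum over pendant vertices as a product of per-vertex contributions ($-(\t-1)$ for $v\in S_1$, $\t-1$ for $v\notin S_1$), and observe the resulting $(-1)^{|S_1|}$ cancels against the sign from $\y^{|S|}$ at $\y=-1$. The paper states this more tersely (absorbing the sign directly into $(\t-1)^{|S|}(\t-1)^{n_G-|S|}$), but the underlying computation is identical.
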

\begin{proof}
By definition we have
\begin{eqnarray*}
Z(G_{(1)};\t,-1) & = & \sum_{S\subseteq V(G_{(1)})} \t^{|E_{G_{(1)}}(S) \sqcup E_{G_{(1)}}(\bar{S})|} (-1)^{|S|} \\
& = & \sum_{S\subseteq V(G)} \t^{|E_G(S) \sqcup E_G(\bar{S})|} (\t - 1)^{|S|} (\t - 1)^{n_G - |S|} 
\end{eqnarray*}
where the last equality is by considering the contribution of $v'$ for each $v\in V(G)$: if $v\in S$
then $v'$ contributes either $-\t$ or $1$; if $v\notin S$ then $v'$ contributes either $\t$ or $-1$. 
The last expression in the equation above equals
\begin{eqnarray*}
(\t-1)^{n_G} \sum_{S\subseteq V(G)} \t^{|E_G(S) \sqcup E_G(\bar{S})|} & = & (\t-1)^{n_G}\cdot Z(G;\t,1)\,.
\end{eqnarray*}
\end{proof}

\subsection{The Ising polynomials of certain trees}

We denote be $S_n$ the star with $n$ leaves. Let $\cen(S_n)$ be the central vertex of the star $S_n$. 
A construction based on stars will be used to interpolate the $\y$ indeterminate from 
$Z(G;\gamma,\delta)$. First, notice the following:
\begin{prop}
\label{prop:star_ising}
For every $n\in \N^+$, 
\begin{eqnarray} \notag
 Z(S_n;\{\cen(S_n)\},\emptyset;\t,\y) &=& \y\cdot (\y\t+1)^n\\ \notag
 Z(S_n;\emptyset, \{\cen(S_n)\};\t,\y) &=& (\y+\t)^n
\end{eqnarray}
\end{prop}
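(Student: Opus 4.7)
The plan is to expand the definition of $Z(S_n;B,C;\t,\y)$ directly, exploiting the especially rigid structure of the star: since every edge of $S_n$ is incident to $\cen(S_n)$, the placement of the center into $S$ or $\bar S$ completely determines whether each edge lies in $E_G(S)$, $E_G(\bar S)$, or the cut $[S,\bar S]$.

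For the first identity I would enumerate all $S$ with $\cen(S_n)\in S$ and $S\subseteq V(S_n)$. Since the leaves form an independent set, $E_{S_n}(\bar S)=\emptyset$ and $E_{S_n}(S)$ consists exactly of the edges from $\cen(S_n)$ to the leaves that happen to lie in $S$. Hence $|E_{S_n}(S)|+|E_{S_n}(\bar S)| = |S|-1$. Grouping the $2^n$ subsets $S\ni\cen(S_n)$ by $k=|S|-1$, the sum becomes
\[
Z(S_n;\{\cen(S_n)\},\emptyset;\t,\y) = \sum_{k=0}^{n}\binom{n}{k}\t^{k}\y^{k+1} = \y\sum_{k=0}^n\binom{n}{k}(\t\y)^k,
\]
and the binomial theorem yields $\y(1+\t\y)^n = \y(\y\t+1)^n$.

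For the second identity the summation runs over $S\subseteq V(S_n)\setminus\{\cen(S_n)\}$, i.e.\ subsets of leaves. Now $E_{S_n}(S)=\emptyset$ and $E_{S_n}(\bar S)$ consists of all edges from $\cen(S_n)$ to the leaves not in $S$, so $|E_{S_n}(S)|+|E_{S_n}(\bar S)| = n-|S|$. Grouping by $k=|S|$ gives
\[
Z(S_n;\emptyset,\{\cen(S_n)\};\t,\y) = \sum_{k=0}^n \binom{n}{k} \t^{n-k}\y^k = (\t+\y)^n,
\]
again by the binomial theorem.

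There is no real obstacle here: both identities are two-line computations once one notes that for a star the edge-partition induced by $S$ depends only on whether $\cen(S_n)\in S$ and on which leaves are in $S$, so the sum splits as an independent product over leaves and reduces to the binomial expansion.
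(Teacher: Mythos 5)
Your proof is correct and relies on the same essential observation as the paper's proof: for a star, the contribution of each leaf to $Z(S_n;B,C;\t,\y)$ is determined independently once the status of the center is fixed. The paper expresses this directly as a product over leaves (each leaf contributing $\y\t$ or $1$, resp.\ $\y$ or $\t$), whereas you group subsets by size and invoke the binomial theorem, but the two are just different ways of writing the same factorization.
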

\begin{proof}
By definition, 
\begin{eqnarray} \notag
 Z(S_n;\{\cen{S_n}\},\emptyset;\t,\y) &=& \sum_{S: \{\cen(S_n)\}\subseteq S\subseteq V(S_n)} \t^{|E_{S_n}(S)\sqcup E_{S_n}(\bar{S})|} \y^{|S|}\\ \notag
 Z(S_n;\emptyset,\{\cen{S_n}\};\t,\y) &=& \sum_{S\subseteq V(S_n)\setminus \{\cen(S_n)\}} \t^{|E_{S_n}(S)\sqcup E_{S_n}(\bar{S})|} \y^{|S|}
\end{eqnarray}
Consider a leaf $v$  of $S_n$.
For $Z(S_n;\{\cen(S_n)\},\emptyset;\t,\y)$, a leaf $v$ has two options: either $v\in S$, in which case
it contributes the weight of its incident edge, so its contribution is $\y\t$; or $v\not\in S$, in which case it 
contributes $1$. 
For $Z(S_n;\emptyset, \{\cen(S_n)\};\t,\y)$, $v$ has two options: either $v\in S$, in which case
it does not contribute the weight of its edge, so its contribution is $\y$; or $v\not\in S$, in which case
its edge contributes $\t$. 
\end{proof}

\begin{definition}[The graph $S_\mc{H}$]
Let $\mc{H}$ be a set of positive integers. 
The graph $S_\mc{H}$ is obtained from the disjoint union of $S_n : n\in \mc{H}$
and a new vertex $\cen(\mc{H})$ by adding edges  between 
$\cen(\mc{H})$ and the centers $\cen(S_n)$ of all the stars $S_n : n\in \mc{H}$. 
\end{definition}
See Figure \ref{fig:Figure_S_H}(a) for an example. 
\begin{figure}
\caption{\label{fig:Figure_S_H} Examples of $S_{\mc{H}}$ and $S_\mc{H}(G)$. In (b), the black vertices $u,v$ are the vertices
of $K_2$. They are also denoted $\cen(\mc{H},u)$ and $\cen(\mc{H},v)$ respectively. }
\begin{center}
\includegraphics[scale=1]{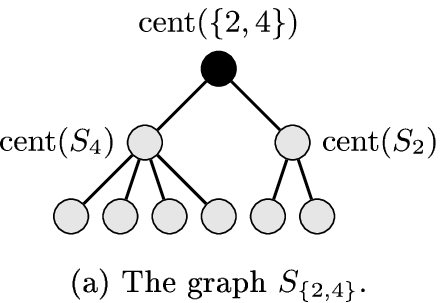}  \ \ \  \ \ \ \ \  \ \ 
\includegraphics[scale=1]{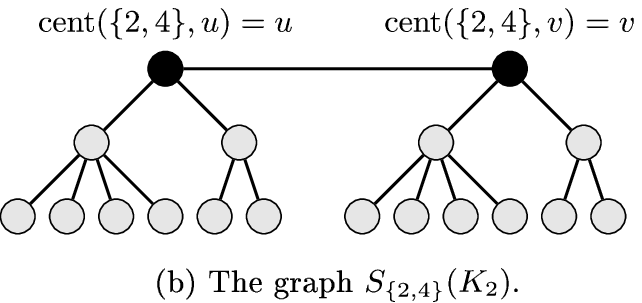}
\end{center}
\end{figure}

\begin{prop} \label{prop:star_mcH_ising}
 Let $\mc{H}$ be a set of positive integers. Then, 
\begin{eqnarray} \notag
 Z(S_\mc{H};\{\cen(\mc{H})\},\emptyset;\t,\y) &= \y \cdot \displaystyle{\prod_{h\in \mc{H}}} & \big( \y \t\cdot (\y \t +1)^h + (\y+\t)^h\big) \\ \notag
 Z(S_\mc{H};\emptyset,\{\cen(\mc{H})\};\t,\y) &= \displaystyle{\prod_{h\in \mc{H}}} & \big(\y \cdot(\y\t+1)^h +\t \cdot (\y+\t)^h \big)
\end{eqnarray}
\end{prop}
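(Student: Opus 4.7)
The plan is to exploit the fact that $\cen(\mc{H})$ is the only vertex of $S_\mc{H}$ that lies in more than one substar (viewing each $S_h$ together with its bridge edge to $\cen(\mc{H})$ as a component), so once we condition on whether $\cen(\mc{H})\in S$ and on whether each $\cen(S_h)\in S$, the edges of $S_\mc{H}$ split cleanly: every star edge lies inside a single $S_h$, and the only remaining edges are the $|\mc{H}|$ bridge edges $(\cen(\mc{H}), \cen(S_h))$. This means the restricted partition function factors as a product over $h\in\mc{H}$, reducing the computation to the star case already handled by Proposition \ref{prop:star_ising}.

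For the first identity, the condition $\cen(\mc{H})\in S$ supplies the factor $\y$ once and for all. For each $h\in\mc{H}$ I would split the inner sum according to whether $\cen(S_h)$ is in $S$ or in $\bar S$. In the first case the bridge edge lies in $E(S)$ and contributes $\t$, while by Proposition \ref{prop:star_ising} the contribution from the leaves of $S_h$ is $Z(S_h;\{\cen(S_h)\},\emptyset;\t,\y)=\y(\y\t+1)^h$, producing the term $\y\t(\y\t+1)^h$. In the second case the bridge edge lies in the cut $[S,\bar S]$ and contributes $1$, while the star contribution is $Z(S_h;\emptyset,\{\cen(S_h)\};\t,\y)=(\y+\t)^h$, producing the term $(\y+\t)^h$. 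Summing these two cases and taking the product over $h\in\mc{H}$ yields the claimed formula.

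The second identity is entirely symmetric: $\cen(\mc{H})\notin S$ contributes $1$, and for each $h\in\mc{H}$, if $\cen(S_h)\in S$ the bridge edge lies in $[S,\bar S]$ (contributing $1$) and the star contributes $\y(\y\t+1)^h$, whereas if $\cen(S_h)\notin S$ the bridge edge lies in $E(\bar S)$ (contributing $\t$) and the star contributes $(\y+\t)^h$. The product over $h\in\mc{H}$ gives the stated expression. There is no real obstacle here; the only thing to watch is the bookkeeping of which of the three sets $E(S)$, $E(\bar S)$, $[S,\bar S]$ each bridge edge belongs to in each of the four combinations of membership, and correspondingly whether it picks up a factor $\t$ or $1$.
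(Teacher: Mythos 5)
Your proof is correct and follows essentially the same route as the paper: condition on $\cen(\mc{H})\in S$ or not, observe that after also conditioning on each $\cen(S_h)$ the bridge edges and star edges separate over $h\in\mc{H}$, and then invoke Proposition~\ref{prop:star_ising} for each star factor. The paper states the resulting factorization directly and cites Proposition~\ref{prop:star_ising}; your write-up simply spells out the edge bookkeeping more explicitly, and all four cases of bridge-edge membership in $E(S)$, $E(\bar S)$, or $[S,\bar S]$ are handled correctly.
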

\begin{proof}
We have
\begin{eqnarray} \notag
 Z(S_\mc{H};\{\cen(\mc{H})\},\emptyset;\t,\y) &= \y\cdot \displaystyle{\prod_{h\in \mc{H}}} & \big( \t\cdot  Z(S_h;\{\cen(S_h)\},\emptyset;\t,\y) \\ \notag
& & \ + Z(S_h;\emptyset,\{\cen(S_h)\};\t,\y) \big) \\ \notag  & &   \\  \notag
Z(S_\mc{H};\emptyset,\{\cen(\mc{H})\};\t,\y) &= \displaystyle{\prod_{h\in \mc{H}}} & \big( Z(S_h;\{\cen(S_h)\},\emptyset;\t,\y) \\ \notag 
& & \ + \t \cdot Z(S_h;\emptyset,\{\cen(S_h)\};\t,\y)\big)
\end{eqnarray}
and by Proposition \ref{prop:star_ising}, the claim follows. 
\end{proof}

\begin{definition}[The graph $S_\mc{H}(G)$]
 Let $\mc{H}$ be a set of positive integers and let $G$ be a graph. 
 For every vertex $v$ of $G$, let $S_{\mc{H},v}(G)$ be a new copy of $S_\mc{H}$. 
 We denote the center of each such copy of $S_\mc{H}$ by $\cen(\mc{H},v)$. 
 Let $S_\mc{H}(G)$ be the graph obtained from the disjoint union of the graphs
 in the set 
\[
 \{G\}\cup \{S_{\mc{H},v} : v\in V(G)\}
\]
 by identifying the pairs of vertices $v$ and $\cen(\mc{H},v)$. 
\end{definition}
In other words, $S_{\mc{H}}(G)$ is the {\em rooted product} of $G$ and $\left(S_{\mc{H}},\cen(\mc{H})\right)$. 
See Figure \ref{fig:Figure_S_H}(b) for an example. 

\begin{prop} \label{prop:S_big_ising}
Let $\mc{H}$ be a set of positive integers. 
Let 
\begin{eqnarray*}
      g_{p,\mc{H}}(\t,\y)&=& \left(\displaystyle{\prod_{h\in \mc{H}}}  \big( \y \cdot(\y\t+1)^h +\t \cdot (\y+\t)^h\big)\right)^{|V(G)}\\
 g_{\y,\mc{H}}(\t,\y) &=& \y \displaystyle{\prod_{h\in \mc{H}}} \frac{\y \t\cdot (\y \t +1)^h + (\y+\t)^h}{\y \cdot(\y\t+1)^h +\t \cdot (\y+\t)^h}
\end{eqnarray*}
Then
\[
 Z(S_{\mc{H}}(G);\t,\y) = g_{p,\mc{H}}(\t,\y) \cdot Z(G;\t,g_{\y,\mc{H}}(\t,\y))\,. 
\] 
\end{prop}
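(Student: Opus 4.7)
My plan is to exploit the rooted-product structure of $S_{\mc{H}}(G)$ and factor the partition function across the pendant copies of $S_{\mc{H}}$. First I would fix a subset $T \subseteq V(S_{\mc{H}}(G))$, set $S = T \cap V(G)$ and $T_v = T \cap V(S_{\mc{H},v})$ for each $v \in V(G)$, and observe that the identification $v = \cen(\mc{H},v)$ forces $v \in T_v$ if and only if $v \in S$. Since the copies $S_{\mc{H},v}$ share with $G$ only their root vertex and are pairwise disjoint elsewhere, both $|T|$ and $|E_{S_{\mc{H}}(G)}(T) \cup E_{S_{\mc{H}}(G)}(\bar{T})|$ decompose additively into a contribution from $G$ (depending only on $S$) plus one contribution from each $S_{\mc{H},v}$ (depending only on $T_v$).

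This additive decomposition lets me reorganise the sum as
\[
Z(S_{\mc{H}}(G);\t,\y) = \sum_{S \subseteq V(G)} \t^{|E_G(S) \cup E_G(\bar{S})|} \prod_{v \in V(G)} W_v(S),
\]
where $W_v(S) = A(\t,\y) := Z(S_{\mc{H}};\{\cen(\mc{H})\},\emptyset;\t,\y)$ when $v \in S$ and $W_v(S) = B(\t,\y) := Z(S_{\mc{H}};\emptyset,\{\cen(\mc{H})\};\t,\y)$ otherwise. Pulling $B(\t,\y)^{|V(G)|}$ out of the product and collecting the remaining $(A/B)^{|S|}$ inside the sum then yields
\[
Z(S_{\mc{H}}(G);\t,\y) = B(\t,\y)^{|V(G)|} \cdot Z\!\left(G;\t,A(\t,\y)/B(\t,\y)\right),
\]
and Proposition \ref{prop:star_mcH_ising} identifies $B^{|V(G)|}$ with $g_{p,\mc{H}}$ and the ratio $A/B$ with $g_{\y,\mc{H}}$, yielding the claimed identity.

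The only place where care is needed is the $\y$-weight of the identified vertex $v = \cen(\mc{H},v)$, which must be counted exactly once across the factorisation. The asymmetric convention already built into Proposition \ref{prop:star_mcH_ising}, in which $Z(S_{\mc{H}};\{\cen(\mc{H})\},\emptyset;\t,\y)$ absorbs the factor $\y$ for the root while $Z(S_{\mc{H}};\emptyset,\{\cen(\mc{H})\};\t,\y)$ does not, is precisely what forces the ratio $A/B$ to match $g_{\y,\mc{H}}$ with no stray vertex weight left on the outer sum over $S$. Once this bookkeeping is checked, the remainder of the argument is a routine rearrangement of sums and products.
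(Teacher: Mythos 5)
Your proof is correct and follows essentially the same approach as the paper: factor the defining sum over subsets $T \subseteq V(S_{\mc{H}}(G))$ into an outer sum over $S = T\cap V(G)$ and independent contributions from each pendant copy $S_{\mc{H},v}$, invoke Proposition \ref{prop:star_mcH_ising} for those contributions, pull out the common factor $B^{|V(G)|}$, and recognise the remaining sum as $Z(G;\t,A/B)$. Your explicit remark about the root's $\y$-weight being absorbed by the convention in $Z(S_{\mc{H}};\{\cen(\mc{H})\},\emptyset;\t,\y)$ is a useful clarification that the paper leaves implicit, but the underlying argument is identical.
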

\begin{proof}
By definition
\[
 Z(S_{\mc{H}}(G);\t,\y) = \sum_{S\subseteq V(S_{\mc{H}}(G))} \t^{|E_{S_{\mc{H}}(G)}(S)\sqcup E_{S_{\mc{H}}(G)}(\bar{S})|} \y^{|S|}\,. 
\]
We would like to rewrite this sum as a sum over $S\subseteq V(G)$. By the structure of $S_{\mc{H}}(G)$, 
\begin{eqnarray} \notag
Z(S_{\mc{H}}(G);\t,\y) &=&  \sum_{S\subseteq V(G)} \t^{|E_{G}(S)\sqcup E_{G}(\bar{S})|} \\ \notag
& & \left(\prod_{v\in S}  Z(S_{\mc{H},v};\{\cen(\mc{H},v)\},\emptyset;\t,\y) + \right. \\ \notag & & 
\ \ \left. \prod_{v\in \bar{S}}  Z(S_{\mc{H},v};\emptyset,\{\cen(\mc{H},v)\};\t,\y)\right)
\end{eqnarray}
By Proposition \ref{prop:star_mcH_ising}, 
\begin{eqnarray} \notag
Z(S_{\mc{H}}(G);\t,\y) &=&  \sum_{S\subseteq V(G)} \t^{|E_{G}(S)\sqcup E_{G}(\bar{S})|} \\ \notag
& &  \left(
\left(\y\cdot \displaystyle{\prod_{h\in \mc{H}}}  \big( \y \t\cdot (\y \t +1)^h + (\y+\t)^h\big)\right)^{|S|}\right. \\ \notag
& & \ \ \left. \left(\displaystyle{\prod_{h\in \mc{H}}}  \big( \y \cdot(\y\t+1)^h +\t \cdot (\y+\t)^h\big)\right)^{|V(G)\setminus |S|}  \right)
\end{eqnarray}
and the claim follows. 
 
\end{proof}

\begin{samepage}
The following propositions will be useful:
\begin{prop} \label{prop:S_big2_ising}
Let $g_{\y,\mc{H}}(\t,\y)$ be as in Proposition \ref{prop:S_big_ising}. 
Let $h_{\y,\mc{H}}$ be the function given by 
\[
h_{\y,\mc{H}}(e_1,e_2,r) = \displaystyle{\prod_{h\in \mc{H}}}
\left(1+
\frac{1}
{e_1 
 +e_2 \cdot r^h} \right)\,.
\]
Let $\gamma,\delta\notin \{-1,0,1\}$ such that $\gamma\not=-\delta$. 
There exist constants $h_1,u_1,u_2,w$ (which depend on $\gamma$ and $\delta$) such that
for every two finite sets of positive even numbers $\mc{H}_1$ and $\mc{H}_2$ which satisfy
\begin{itemize}
 \item $|\mc{H}_1|=|\mc{H}_2|$, and  $\mc{H}_1,\mc{H}_2 \subseteq \N^+ \setminus\{1,\ldots,h_1\}$,
\end{itemize}
we have
\begin{renumerate}
 \item $g_{\y,\mc{H}_1}(\gamma,\delta),g_{\y,\mc{H}_1}(\gamma,\delta),h_{\y,\mc{H}_1}(u_1,u_2,w),h_{\y,\mc{H}_2}(u_1,u_2,w)  \in \R\setminus\{0\}$, and
 \item 
$g_{\y,\mc{H}_1}(\gamma,\delta) = g_{\y,\mc{H}_2}(\gamma,\delta)$ iff
$h_{\y,\mc{H}_1}(u_1,u_2,w) = h_{\y,\mc{H}_2}(u_1,u_2,w)$
\end{renumerate}
Furthermore, $u_1$ and $u_2$ are non-zero and $w\notin\{-1,0,1\}$. 
\end{prop}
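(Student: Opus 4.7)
The plan is to produce explicit constants $u_1, u_2, w$ depending on $(\gamma, \delta)$ so that each factor in the product defining $g_{\y, \mc{H}}(\gamma, \delta)$ equals a scalar, independent of $h$, times the corresponding factor of $h_{\y, \mc{H}}(u_1, u_2, w)$. If the scalar is the same for every $h$, the two products differ by a factor of the form $\delta \cdot c^{|\mc{H}|}$, which is identical on both sides of the equivalence in (ii) because $|\mc{H}_1| = |\mc{H}_2|$, and (ii) follows at once.

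To carry this out, I would denote $A = \gamma\delta + 1$ and $B = \gamma + \delta$, and divide both numerator and denominator of each factor of $g_{\y, \mc{H}}(\gamma, \delta)$ by $A^h$. Setting $w := B/A$, the $h$th factor becomes $(\gamma\delta + w^h)/(\delta + \gamma w^h)$. Polynomial division of the numerator by the denominator in $w^h$ yields
\[
\frac{\gamma\delta + w^h}{\delta + \gamma w^h} = \frac{1}{\gamma}\left(1 + \frac{1}{u_1 + u_2 w^h}\right)
\]
with $u_1 = 1/(\gamma^2 - 1)$ and $u_2 = \gamma/\bigl(\delta(\gamma^2 - 1)\bigr)$. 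Taking the product over $h \in \mc{H}$ gives
\[
g_{\y, \mc{H}}(\gamma, \delta) = \delta \cdot \gamma^{-|\mc{H}|} \cdot h_{\y, \mc{H}}(u_1, u_2, w),
\]
from which (ii) is immediate.

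For (i) and the side conditions I would check directly: $u_1, u_2 \neq 0$ because $\gamma \notin \{-1, 0, 1\}$ and $\delta \neq 0$; and $w \notin \{0, 1, -1\}$ translates into $\gamma + \delta \neq 0$, $(\gamma - 1)(\delta - 1) \neq 0$, and $(\gamma + 1)(\delta + 1) \neq 0$, all guaranteed by the hypotheses. Since $\mc{H}_1, \mc{H}_2$ consist of even positive integers and $w \in \R \setminus \{0, \pm 1\}$, the map $h \mapsto u_2 w^h$ is strictly monotone over even positive $h$; hence each of $u_1 + u_2 w^h = 0$ and $u_1 + u_2 w^h = -1$ (the latter being what makes a factor of $h_{\y,\mc{H}}$ vanish) has at most one even positive solution, and similarly for the vanishing of the original denominator $\delta A^h + \gamma B^h$ of $g_{\y,\mc{H}}$. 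Picking $h_1$ larger than this finite collection of bad exponents guarantees realness and non-vanishing, establishing (i).

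The main obstacle is the degenerate case $A = \gamma\delta + 1 = 0$, where the "divide by $A^h$" step breaks down. In that case each factor of $g_{\y, \mc{H}}(\gamma, \delta)$ collapses to the $h$-independent value $1/\gamma$, so $g_{\y, \mc{H}}(\gamma, \delta) = \delta \gamma^{-|\mc{H}|}$ is constant on families $\mc{H}$ of fixed size. The required equivalence would then force $h_{\y, \mc{H}_1}(u_1, u_2, w) = h_{\y, \mc{H}_2}(u_1, u_2, w)$ for all such $\mc{H}_1, \mc{H}_2$, which is incompatible with $u_2 \neq 0$ and $w \notin \{-1, 0, 1\}$. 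This one-dimensional locus $\gamma\delta = -1$ must therefore be excluded from the hypotheses here or absorbed into the low-dimensional exceptional set of the downstream complexity argument that invokes the proposition.
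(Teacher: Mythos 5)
Your construction is the same as the paper's: you obtain the identical constants $u_1 = 1/(\gamma^2-1)$, $u_2 = \gamma/(\delta(\gamma^2-1))$, $w = (\gamma+\delta)/(\gamma\delta+1)$ by the same algebraic rewriting of each factor of $g_{\y,\mc{H}}$, and the verification of the side conditions and of (i) via monotonicity of $u_1+u_2w^h$ over even $h$ matches the paper's reasoning. The one place you go beyond the paper is the closing observation, and it is correct: the paper's proof divides by $(\gamma\delta+1)^h$ to extract $w$, a step that is invalid when $\gamma\delta=-1$, and this case \emph{is} admitted by the stated hypotheses (e.g.\ $\gamma=2$, $\delta=-1/2$ satisfies $\gamma,\delta\notin\{-1,0,1\}$ and $\gamma\neq-\delta$). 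In that degenerate case each factor of $g_{\y,\mc{H}}(\gamma,\delta)$ collapses to $1/\gamma$, so $g_{\y,\mc{H}}(\gamma,\delta)=\delta\gamma^{-|\mc{H}|}$ depends only on $|\mc{H}|$, whereas any $h_{\y,\mc{H}}(u_1,u_2,w)$ with $u_2\neq 0$ and $w\notin\{-1,0,1\}$ genuinely varies with $\mc{H}$; the ``iff'' in (ii) therefore cannot hold, and no admissible choice of $(u_1,u_2,w)$ repairs it. The paper's proof silently passes over this (its opening remark that $|\gamma\delta+1|\neq|\gamma+\delta|$ is true but does not preclude $\gamma\delta+1=0$), so you have identified a genuine gap: the locus $\gamma\delta=-1$ must be excluded from the hypotheses of this proposition and handled separately in the downstream argument (Lemma~\ref{lem:dell-lemma}(iv) and the interpolation of $\y$ in the proof of Theorem~\ref{th:mainC}) if the dichotomy in Theorem~\ref{th:mainC} is to hold as stated.
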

\end{samepage}
\begin{proof}
It cannot hold that $|\delta\gamma+1|=|\delta+\gamma|$. Furthermore we know that $\gamma,\delta\not =0$. Hence, there is $h_1$ such that for every even
$h>h_1$,the sequences 
$\delta (\delta \gamma+1)^h +\gamma \cdot (\delta+\gamma)^h$ 
and $\delta\gamma(\delta\gamma+1)^h+(\delta+\gamma)^h$
are strictly ascending or descending, and in particular,
are non-zero. Therefore we have 
$g_{\y,\mc{H}_1}(\gamma,\delta),g_{\y,\mc{H}_1}(\gamma,\delta)\in \R \setminus \{0\}$.

We have for $i=1,2$
\begin{eqnarray}\notag
g_{\y,\mc{H}_i}(\gamma,\delta ) &= & 
 \delta \displaystyle{\prod_{h\in \mc{H}_i}}
 \frac{\delta \gamma\cdot (\delta \gamma +1)^h + (\delta+\gamma)^h}{\delta (\delta \gamma+1)^h +\gamma \cdot (\delta+\gamma)^h} \\  \notag
& = &
\frac{\delta}{\gamma^{|\mc{H}_i|}} \displaystyle{\prod_{h\in \mc{H}_i}}
 \frac{\delta \gamma^2\cdot (\delta \gamma +1)^h + \gamma \cdot(\delta+\gamma)^h}{\delta (\delta \gamma+1)^h +\gamma \cdot (\delta+\gamma)^h} \\ \notag 
& = &
\frac{\delta}{\gamma^{|\mc{H}_i|}} \displaystyle{\prod_{h\in \mc{H}_i}}\left(1+
 \frac{\delta (\gamma^2 -1)\cdot (\delta \gamma +1)^h }{\delta (\delta \gamma+1)^h +\gamma \cdot (\delta+\gamma)^h} \right)\\ \notag
& = & \frac{\delta}{\gamma^{|\mc{H}_i|}} \displaystyle{\prod_{h\in \mc{H}_i}}
\left(1+
\frac{1}
{\frac{1}{\gamma^2-1} 
 +\frac{\gamma}{\delta(\gamma^2-1)} \cdot \left(\frac{\delta+\gamma}
{\delta \gamma +1}\right)^h} \right)\\ \notag
\end{eqnarray}
Let $u_1 = \frac{1}{\gamma^2-1}$, $u_2 = \frac{\gamma}{\delta(\gamma^2-1)}$ and $w =  \frac{\delta+\gamma}
{\delta \gamma +1}$. We have $u_1,u_2,w\in \R\setminus \{0\}$ and $w\notin \{-1,1\}$. 
Hence, we can take $h_1$ to be large enough so that $u_1+u_2+w^h$ non-zero. Since 
$u_1+u_2+w^h$ is strictly ascending or descending for even $h$, we have 
$h_{\y,\mc{H}_1}(u_1,u_2,w),h_{\y,\mc{H}_2}(u_1,u_2,w)  \in \R\setminus\{0\}$
for large enough values of $h$. 
\end{proof}

\begin{prop} \label{prop:h_2_ising}
Let  $\gamma,\delta\notin\{-1,0,1\}$ and $\gamma\not=-\delta$. 
Let $\mc{H}$ be a set of positive even integers. 
Let $g_{p,\mc{H}}(\t,\y))$ be from Proposition \ref{prop:S_big_ising}. 
Then there exists $h_2$ such that if $\mc{H} \subseteq \N^+ \setminus \{1,\ldots,h_2\}$ 
then $g_{p,\mc{H}}(\gamma,\delta))\not=0$. 
\end{prop}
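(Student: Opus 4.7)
The plan is to show that each factor $a_h := \delta(\delta\gamma+1)^h + \gamma(\delta+\gamma)^h$ appearing in
\[
 g_{p,\mc{H}}(\gamma,\delta)=\left(\prod_{h\in\mc{H}}\big(\delta(\delta\gamma+1)^h+\gamma(\delta+\gamma)^h\big)\right)^{|V(G)|}
\]
is nonzero for all sufficiently large (even) $h$. Since $g_{p,\mc{H}}(\gamma,\delta)$ is a power of a finite product of such factors, this is enough to conclude the proposition.

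First I would reuse the calculation already invoked in the proof of Proposition \ref{prop:S_big2_ising}:
\[
 |\delta\gamma+1|^2-|\delta+\gamma|^2=\delta^2\gamma^2-\delta^2-\gamma^2+1=(\delta^2-1)(\gamma^2-1),
\]
which is nonzero by the hypothesis $\gamma,\delta\notin\{-1,0,1\}$. Hence $|\delta\gamma+1|\neq|\delta+\gamma|$. Also, by hypothesis $\gamma\neq 0$ and $\delta+\gamma\neq 0$, so the term $\gamma(\delta+\gamma)^h$ is always nonzero.

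Next, I would split into two cases. If $\delta\gamma+1=0$, then $a_h$ collapses to $\gamma(\delta+\gamma)^h\neq 0$ for every $h\geq 1$, and we are done. Otherwise $\delta\gamma+1\neq 0$, and the ratio of the two summands equals
\[
 \frac{\delta(\delta\gamma+1)^h}{\gamma(\delta+\gamma)^h}=\frac{\delta}{\gamma}\left(\frac{\delta\gamma+1}{\delta+\gamma}\right)^h.
\]
Because $|\delta\gamma+1|/|\delta+\gamma|\neq 1$, the absolute value of this ratio tends either to $\infty$ or to $0$ as $h\to\infty$, so for all sufficiently large $h$ one term strictly dominates the other in modulus and cancellation is impossible. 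Choosing $h_2$ to be any threshold beyond which this dominance holds in both cases yields the desired conclusion.

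The argument is essentially routine; the only possible pitfall is the degenerate case $\delta\gamma+1=0$, which is handled separately above, and the vanishing of $\delta+\gamma$, which is excluded by the hypothesis $\gamma\neq-\delta$. No serious obstacle is expected.
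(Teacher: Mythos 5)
Your proof is correct and follows essentially the same route as the paper: handle the degenerate case $\delta\gamma+1=0$ separately, and otherwise use $|\delta\gamma+1|\neq|\delta+\gamma|$ (which the paper had already observed in the proof of the preceding proposition) to argue that one summand eventually dominates in modulus, precluding cancellation. Your explicit computation $|\delta\gamma+1|^2-|\delta+\gamma|^2=(\delta^2-1)(\gamma^2-1)$ and the ratio argument simply spell out steps the paper leaves implicit.
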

\begin{proof}
Recall
\[
 g_{p,\mc{H}}(\gamma,\delta))=\left(\displaystyle{\prod_{h\in \mc{H}}}  \big( \delta (\delta\gamma+1)^h +\gamma \cdot (\delta+\gamma)^h\big)\right)^{|V(G)|}\,.
\]
We have that $\delta+\gamma$ is non-zero.
If $\delta\gamma+1=0$ then the claim holds even for $h_2=0$. 
Otherwise, using that $|\delta\gamma+1|\not=|\delta+\gamma|$,
 at least one of $(\delta\gamma+1)^h,(\delta+\gamma)^h$ becomes strictly larger in absolute value than the other for large enough $h$. 
\end{proof}

\subsection{Proof of Theorem \ref{th:mainC}}

The following lemma is a variation of  Lemma 4 in \cite{ar:DHW10}.

For any $\mc{H}$, let $\sigma(\mc{H})= {\sum_{h\in \mc{H}} h}$. 
\begin{lem}
\label{lem:dell-lemma}~\\
Let $\gamma\notin\{-1,0,1\}$, $\delta\not=0$, $e_1,e_2\not=0$ and $r_1,r_2\notin \{-1,0,1\}$ such that $|r_1|\not=|r_2|$. 
For every positive integer $q'$ there exist
 $\hq=\Omega(q')$ sets of positive even integers
$\mc{H}_0,\ldots,\mc{H}_{\hq}$ such that
\begin{renumerate}
 \item $\sigma(\mc{H}_i) = O(\log^3 q')$  for all $i$, 
 \item $\sigma(\mc{H}_i) = \sigma(\mc{H}_j) $  for all $i\not=j$,
 \item $f_{\t,\mc{H}_i}(e_1,e_2,r_1,r_2)\not=f_{\t,\mc{H}_j}(e_1,e_2,r_1,r_2)$ for $i\not=j$,  
 \end{renumerate}
where $f_{\t,\mc{H}}(e_1,e_2,r_1,r_2)$ is from Proposition \ref{lem:otimes_reduction}. 
If additionally $\delta\notin\{-1,1\}$ and $\gamma\not=-\delta$, we have
\begin{renumerate}
 \item[(iv)] $g_{\y,\mc{H}_i}(\gamma,\delta)\not=g_{\y,\mc{H}_j}(\gamma,\delta)$ for $i\not=j$. 
 \item[(v)] $g_{p,\mc{H}_i}(\gamma,\delta)\not=0$,
\end{renumerate}
where $g_{\y,\mc{H}}(e_1,e_2,r_1)$ is from Proposition \ref{prop:S_big_ising}.

The sets $\mc{H}_i$ can be computed in polynomial time in $q'$. 
\end{lem}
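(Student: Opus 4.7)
The plan is to construct the $\mc{H}_i$ as $k$-element subsets of a block $I=\{h_0+2,h_0+4,\ldots,h_0+2N\}$ of consecutive positive even integers, all sharing a common prescribed sum $s$. Choosing $h_0$, $N$ and $k$ each of order $\Theta(\log q')$ gives $\sigma(\mc{H}_i)=O(\log^2 q')$, well within the $O(\log^3 q')$ budget of (i), and makes (ii) automatic. A standard pigeonhole over the $O(Nk)$ possible subset sums against the $\binom{N}{k}=q'^{\Omega(1)}$ many $k$-subsets of $I$ then yields $\hq=\Omega(q')$ such sets, all enumerable in polynomial time.

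For (iii), assume WLOG $|r_1|>|r_2|$ and set $\alpha=e_2/e_1\ne 0$, $\beta=r_2/r_1$ with $0<|\beta|<1$. Then
\[
f_{\t,\mc{H}}(e_1,e_2,r_1,r_2)=e_1^{|\mc{H}|}\,r_1^{\sigma(\mc{H})}\prod_{h\in\mc{H}}\bigl(1+\alpha\beta^h\bigr),
\]
so distinctness reduces to that of $P_{i}=\prod_{h\in\mc{H}_i}(1+\alpha\beta^h)$. Expanding $P_i=\sum_{S\subseteq\mc{H}_i}\alpha^{|S|}\beta^{\sigma(S)}$, the $\alpha$-linear part of $P_1-P_2$ equals $\alpha\bigl(\sum_{h\in\mc{H}_1}\beta^h-\sum_{h\in\mc{H}_2}\beta^h\bigr)$. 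Since all $h$ are even, $\beta^h=|\beta|^h>0$, and letting $h^*$ be the smallest element of $\mc{H}_1\triangle\mc{H}_2$ (all other elements of the symmetric difference being $\ge h^*+2$) one gets
\[
\Bigl|\sum_{h\in\mc{H}_1}\beta^h-\sum_{h\in\mc{H}_2}\beta^h\Bigr|\;\ge\;|\beta|^{h^*}\bigl(1-2N|\beta|^2\bigr)\;\ge\;\tfrac12|\beta|^{h_0+2N},
\]
while the total contribution of all higher-order ($\alpha^{\ge 2}$) terms of $P_1-P_2$ is bounded by $O(N^2\alpha^2|\beta|^{2h_0})$. Taking $h_0$ a sufficiently large constant multiple of $N$ (with a constant depending only on $\alpha,\beta$) makes the linear term dominate, hence $P_1\ne P_2$.

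For (iv) and (v) I further require $h_0\ge\max\{h_1,h_2\}$, the thresholds from Propositions \ref{prop:S_big2_ising} and \ref{prop:h_2_ising}. Proposition \ref{prop:h_2_ising} then directly gives $g_{p,\mc{H}_i}(\gamma,\delta)\ne 0$, proving (v). Proposition \ref{prop:S_big2_ising} reduces (iv) to distinctness of $h_{\y,\mc{H}_i}(u_1,u_2,w)=\prod_{h\in\mc{H}_i}\bigl(1+(u_1+u_2 w^h)^{-1}\bigr)$ with $u_1,u_2\ne 0$ and $w\notin\{-1,0,1\}$. If $|w|<1$, the expansion $(u_1+u_2 w^h)^{-1}=u_1^{-1}-u_2 u_1^{-2}w^h+O(w^{2h})$ puts the product into the same form as in (iii) with $w$ in place of $\beta$; if $|w|>1$ one first factors out $u_2 w^h$ to obtain the analogous expansion in $w^{-h}$ (the case $|w|=1$ cannot occur since $w$ is rational and not in $\{-1,1\}$), and the (iii)-argument applies verbatim. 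The main technical hurdle is the coupled error estimate in (iii) and (iv): one must dominate the higher-order terms of size $\sim|\beta|^{2h_0}$ by the leading difference of size $\sim|\beta|^{h_0+2N}$ while simultaneously keeping $h_0$, $N$ and $k$ inside the $O(\log^3 q')$ budget of (i); once that is done, the combinatorial count, the enumeration, and the invocations of the star-graph propositions are routine.
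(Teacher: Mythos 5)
The construction in the paper and the idea of pigeonholing on the common sum are similar in spirit, but your choice of ground set --- a block of \emph{consecutive} even integers with spacing $2$ --- breaks the separation argument in (iii) and (iv), and the specific inequality you assert is false.

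The claimed bound
\[
\Bigl|\sum_{h\in\mc{H}_1}\beta^h-\sum_{h\in\mc{H}_2}\beta^h\Bigr|\;\ge\;|\beta|^{h^*}\bigl(1-2N|\beta|^2\bigr)\;\ge\;\tfrac12|\beta|^{h_0+2N}
\]
fails on two counts. First, $1-2N|\beta|^2$ is negative once $N>\tfrac{1}{2|\beta|^2}$, and $N=\Theta(\log q')$ grows, so the first inequality becomes vacuous for large $q'$. Even replacing your crude count by the geometric-series tail bound $\sum_{j\ge 1}|\beta|^{h^*+2j}=|\beta|^{h^*}\cdot\tfrac{|\beta|^2}{1-|\beta|^2}$ only helps when $|\beta|<1/\sqrt 2$; for $|\beta|$ near $1$ the tail swamps the lead term. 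Second, the linear difference genuinely \emph{can} be that small with spacing $2$: take $\mc{H}_1=\{h_0+2,\,h_0+8\}$ and $\mc{H}_2=\{h_0+4,\,h_0+6\}$ (equal size and sum). Then
\[
\sum_{\mc{H}_1}\beta^h-\sum_{\mc{H}_2}\beta^h=(1-\beta^2)^2(1+\beta^2)\,\beta^{h_0+2},
\]
so the magnitude carries a factor $(1-\beta^2)^2\to 0$ as $|\beta|\to 1$, and the proposed lower bound $\tfrac12|\beta|^{h_0+2N}$ is violated. Worse, for larger symmetric differences and particular rational $\beta$ you have no a priori guarantee that the linear part of $P_1-P_2$ is even nonzero, since all the $\beta^h$ are commensurable rationals and cancellations are possible in principle. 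The same difficulty reappears in your treatment of (iv).

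The paper avoids this by spacing the candidate elements by $\Delta\log q$, with $\Delta$ a constant chosen \emph{as a function of the base} (i.e.\ of $e_1,e_2,r_1,r_2$ or of $u_1,u_2,w$). This makes the single dominant monomial $s_{\mcHb{i}{j}}(\mcHb{i}{j}\setminus\{m_1\})$ (resp.\ $p(X_0)$ in (iv)) strictly exceed the total contribution of all other monomials in absolute value, because every other exponent of $r_\tau$ (resp.\ $w$) is smaller by at least $\Delta\log q$ and the number of competing monomials is only $\mathrm{poly}(q)$. That $\Delta$-dependent gap is what your spacing-$2$ block cannot provide; if you want to salvage your cleaner combinatorial construction you would need to widen the spacing to a $\Theta(\log q')$ multiple that depends on $|\beta|$, at which point you essentially recover the paper's $\mc H_i'$.
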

\begin{proof}

Let $q=q'\log^3 q'$. First we define sets $\mc{H}_0',\ldots,\mc{H}_{q}'$. We will use these sets to define the desired sets $\mc{H}_0,\ldots,\mc{H}_{\hq}$.

For $i=0,\ldots,q$, let $i[0],\ldots,i[\ell]\in \{0,1\}$ be the binary expansion of $i$
where\footnote{Actually, we will also need  that $\ell$ is larger
than a constant depending on $e_1$, but this is true for large enough values of $q$. } 
$\ell = \lfloor \log q \rfloor$. 
Let $\Delta$ be a  positive even integer to be chosen later. 
Let $\tau\in \{1,2\}$ be such that $|r_\tau| = \max\{|r_1|,|r_2|\}$. 
Then $|r_{\kt}| = \min\{|r_1|,|r_2|\}$. 
Let $m_0$ be an even integer such that $m_0>h_1$ from Proposition \ref{prop:S_big2_ising} and $m_0>h_2$ from Proposition \ref{prop:h_2_ising}.
We choose $\mc{H}_i'$ as follows:
\[
 \mc{H}_i' = \{ m_0+\Delta \lceil \log q \rceil  \cdot (2j + i[j])  : 0\leq j \leq \ell \}\,. 
\]
The sets $\mc{H}_i'$ satisfy:
\begin{enumerate}
\item[a.] they are distinct, 
 \item[b.] they have equal cardinality $\ell$+1,
 \item[c.] they contain only positive even integers between $m_0$\\ and $m_0+\Delta (\log q+1)(2\log q+1)$, and
 \item[d.] for $i,j$ and any $a\in \mc{H}_i'$ and $b\in \mc{H}_j'$, either $a = b$ or \\ $|a-b|\geq \Delta \log q$. 
\end{enumerate}
It is easy to see that $\sigma(\mc{H}_i')=\Omega (\log q)$, $i=0,\ldots,q$. On the other hand,
since all the numbers in each of the $\mc{H}_i'$ are bounded  by $O(\log^2 q)$
and the size of each $\mc{H}_i'$ is $O(\log q)$, we get that $\sigma(\mc{H}_i') = O(\log^3 q)$ for each $i$. 
From this we get that at least $\hq = \Omega\left(\frac{q'\log^3 q' +1}{\log^3 q'}\right) = \Omega(q')$ of the sets $\mc{H}_0',\ldots,\mc{H}_{q}'$
have the same sum value $\sigma(\mc{H}_i')$. 
Let $\{\mc{H}_0,\ldots,\mc{H}_{\hq}\}$ be a subset of $\{\mc{H}_0',\ldots,\mc{H}_{q}'\}$ 
such that all the sets in $\{\mc{H}_0,\ldots,\mc{H}_{\hq}\}$ have the same sum value $\sigma(\mc{H}_i)$. 
We have (i), (ii) and (v) for $\mc{H}_0,\ldots,\mc{H}_{\hq}$.

We now turn to (iii) and (iv). The proofs of (iii) and (iv) are similar but not identical. 

Let $0\leq i\not=j\leq \hq$,
 $\mcHb{i}{j} = \mc{H}_i \setminus \mc{H}_j$
and $\mcHb{j}{i} = \mc{H}_j \setminus \mc{H}_i$. Notice $\mcHb{i}{j}\cap \mcHb{j}{i} = \emptyset$. \\
Let $\sigma=\sigma(\mcHb{i}{j})=\sigma(\mcHb{j}{i})$ and let $d=|\mcHb{i}{j}|=|\mcHb{j}{i}|$. 

\begin{renumerate}
 \item[(iii)] 
We write $f_{\t,\mc{H}_i}$ for short instead of $f_{\t,\mc{H}_i}(e_1,e_2,r_1,r_2)$ in this proof. 
When other parameters are used instead of $e_1,e_2,r_1,r_2$ we write them explicitly. 
Since $f_{\t,\mc{H}_i}=f_{\t,\mcHb{i}{j} } \cdot f_{\t,\mc{H}_{i}\cap \mc{H}_j }$,
$f_{\t,\mc{H}_j}=f_{\t,\mcHb{j}{i} } \cdot f_{\t,\mc{H}_{i}\cap \mc{H}_j }$
and $f_{\t,\mc{H}_{i}\cap \mc{H}_j }\not=0$, it is enough to show that
$f_{\t,\mcHb{i}{j} } - f_{\t,\mcHb{j}{i} }\not=0$.

Since $\sigma(\mcHb{i}{j})=\sigma(\mcHb{j}{i})$ we have 
\[
 f_{\t,\mcHb{i}{j}} = f_{\t,\mcHb{j}{i}}\mbox{ iff }f_{\t,\mcHb{i}{j}}(e_1,e_2,r_2^{-1},r_1^{-1}) = f_{\t,\mcHb{j}{i}}(e_1,e_2,r_2^{-1},r_1^{-1})\,. 
 \]
Hence we can assume from now on that $|r_\tau|>1$ (otherwise we look at $r_1^{-1}$ and $r_2^{-1}$ instead).

For every $\mc{H}$, $f_{\t,\mc{H}} $ can be rewritten as follows:
\[
 f_{\t,\mc{H}} = 
 \prod_{h\in \mc{H}} ( e_\tau r_\tau^h + e_\kt r_{\kt}^h ) = 
 e_\kt^{\ell+1} 
 \sum_{X\subseteq \mc{H}} 
 s_{\mc{H}}(X) \,,
\]
where 
\[
 s_{\mc{H}}(X) = \left(\frac{e_\tau}{e_\kt}\right)^{|X|} r_\tau^{\sigma(X)}
  r_\kt^{\sigma(\mc{H}\setminus X)}\,.
\]
We think of $h\in X$ (respectively $h\in \mc{H}\setminus X)$ as corresponding to $e_\tau r_\tau^h$  (respectively $e_\kt r_\kt^h$).

It suffices to show that
\begin{gather}
 \label{eq:s_sums}
 \sum_{X_1\subseteq \mcHb{i}{j}} s_{\mcHb{i}{j}}(X)
 - \sum_{X_2\subseteq \mcHb{j}{i}} s_{\mcHb{j}{i}}(X) \not=0\,. 
\end{gather}
It holds that
$s_{\mcHb{i}{j}}(\mcHb{i}{j}) = s_{\mcHb{j}{i}}(\mcHb{j}{i}) = 
\left(\frac{e_{\tau}}{e_{3-\tau}}\right)^{\ell+1} r_\tau^{\sigma}$.
Hence, \linebreak 
$s_{\mcHb{i}{j}}(\mcHb{i}{j})$ and  $s_{\mcHb{j}{i}}(\mcHb{j}{i})$ cancel out in Inequality (\ref{eq:s_sums}).
Similarly, \linebreak
$s_{\mcHb{i}{j}}(\emptyset) = s_{\mcHb{j}{i}}(\emptyset) = r_{\kt}^\sigma$ cancel out.
Let $m_1$ be the minimal element in $\mcHb{i}{j} \sqcup \mcHb{j}{i}$. Without loss of generality, assume $m_1\in \mcHb{i}{j}$.
We have
\begin{gather}
 \notag
 s_{\mcHb{i}{j}}(\mcHb{i}{j} \setminus \{m_1\}) = \left(\frac{e_{\tau}}{e_{3-\tau}}\right)^\ell r_\tau^{\sigma - m_1} r_\kt^{m_1} \,.
\end{gather}
$s_{\mcHb{i}{j}}(\mcHb{i}{j} \setminus \{m_1\})$ has the largest exponent of $r_\tau$ out of all the monomials
in both of the sums in Inequality (\ref{eq:s_sums}), and any other exponent of $r_\tau$ is smaller by at least $\Delta \log q$. 
We will show that Inequality (\ref{eq:s_sums}) holds by showing the following:
\begin{gather}
\label{eq:s_sums_2}
|s_{\mcHb{i}{j}}(\mcHb{i}{j} \setminus \{m_1\})|>
\sum_{X\subsetneq \mcHb{i}{j}\setminus \{m_1\}} |s_{\mcHb{i}{j}}(X)|
 + \sum_{X\subsetneq \mcHb{j}{i}} |s_{\mcHb{j}{i}}(X)| \,. 
\end{gather}
Each of the sums in Inequality (\ref{eq:s_sums_2}) has at most $2^{\log q+1}=2q$ monomials corresponding 
to the subsets of $\mcHb{i}{j}$ and $\mcHb{j}{i}$ respectively. The absolute value of each of these monomials
can be bounded from above by $s\cdot |r_\tau|^{\sigma-m_1-\Delta\log q} |r_\kt|^{m_1+\Delta\log q}$,
where $s$ is the maximum of $\left|\frac{e_{\tau}}{e_{3-\tau}}\right|^\ell$ and $1$.
Hence, the right-hand side of Inequality (\ref{eq:s_sums_2}) is at most
\begin{eqnarray} \notag
 & & 4q\cdot s |r_\tau|^{\sigma-m_1-\Delta\log q} |r_\kt|^{m_1+\Delta\log q} = \\ \notag 
 & & 4q s \left(\frac{e_{\tau}}{e_{3-\tau}}\right)^{-\ell}\cdot  \left|\frac{r_\kt}{r_\tau}\right|^{\Delta \log q}
 |s_{\mcHb{i}{j}}(\mcHb{i}{j} - \{m_1\})| \,.
\end{eqnarray}
There exists a number $\Delta'$ which does not depend on $q$ such that\linebreak
$4q s \left(\frac{e_{\tau}}{e_{3-\tau}}\right)^{-\ell} < (\Delta')^{\log q}$
and (iii) follows  by setting $\Delta$ large enough so that
$\Delta' \cdot \left|\frac{r_\kt}{r_\tau}\right|^\Delta <1$.

\begin{samepage}
\
\item[(iv)]
By Proposition \ref{prop:S_big2_ising}, there exist $u_1,u_2\not=0$ and $w\notin\{-1,0,1\}$ depending on $\gamma,\delta$
for which  it is enough to show that \linebreak
$h_{\y,\mc{H}_i}(u_1,u_2,w) \not= h_{\y,\mc{H}_j}(u_1,u_2,w)$ to get (iv). 
We write $h_{\y,\mc{H}_i}$ for short instead of 
$h_{\y,\mc{H}_i}(u_1,u_2,w)$ in this proof. 
\end{samepage}

Since we have $h_{\y,\mc{H}_i}=h_{\y,\mcHb{i}{j} } \cdot h_{\y,\mc{H}_{i}\cap \mc{H}_j }$,
$h_{\y,\mc{H}_j}=h_{\y,\mcHb{j}{i} } \cdot h_{\y,\mc{H}_{i}\cap \mc{H}_j }$
and \linebreak 
$h_{\y,\mc{H}_{i}\cap \mc{H}_j }\not=0$, it is enough to show that
\[h_{\y,\mcHb{i}{j} } - h_{\y,\mcHb{j}{i} }\not=0\,,\]
i.e.
\[
\displaystyle{\prod_{h\in \mcHb{i}{j}}}
\left(1+
\frac{1}
{u_1 
 +u_2 \cdot w^h} \right) -
\displaystyle{\prod_{h\in \mcHb{j}{i}}}
\left(1+
\frac{1}
{u_1 
 +u_2 \cdot w^h} \right) \not=0\,.
\]
or equivalently, 
\begin{eqnarray}\label{eq:bigProducts_Ising}
\displaystyle{\prod_{h\in \mcHb{i}{j}}}
\left(u_1 
 +u_2 \cdot w^h+1\right)
\prod_{h\in \mcHb{j}{i}}
\left(u_1 
 +u_2 \cdot w^h \right) & - &\\ \notag
\displaystyle{\prod_{h\in \mcHb{j}{i}}}
\left(u_1 
 +u_2 \cdot w^h+1\right)
\prod_{h\in \mcHb{i}{j}}
\left(u_1 
 +u_2 \cdot w^h \right) &\not=& 0
\end{eqnarray}
Consider a product of the form found in Inequality (\ref{eq:bigProducts_Ising}). 
\begin{eqnarray}\notag
\displaystyle{\prod_{h\in \mc{H}_a}}
\left(u_1 
 +u_2 \cdot w^h+1\right)
\prod_{h\in \mc{H}_b}
\left(u_1 
 +u_2 \cdot w^h \right)  & =& \\ \notag
\sum_{X\subseteq \mc{H}_a\cup \mc{H}_b} (u_1+1)^{|\mc{H}_a\setminus X|} u_1^{|\mc{H}_b\setminus X|} w^{\sigma(X)} u_2^{|X|} &&
\end{eqnarray}
Let 
\begin{eqnarray} \notag
p(X)& =&
 \left((u_1+1)^{|\mcHb{i}{j}\setminus X|} u_1^{|\mcHb{j}{i}\setminus X|}  - (u_1+1)^{|\mcHb{j}{i}\setminus X|}  u_1^{|\mcHb{i}{j}\setminus X|}\right)\\ 
\notag & & 
\cdot u_2^{|X|} 
   w^{\sigma(X)} 
\end{eqnarray}
It suffices to show that
\begin{equation}
 \label{eq:sums_p}
 \sum_{X\subseteq \mcHb{i}{j}\cup \mcHb{j}{i}} p(X) \not=0\,.
\end{equation}
 
We have $p(\emptyset)=p(\mcHb{i}{j}\sqcup \mcHb{j}{i})=0$, using that 
$|\mcHb{i}{j}|=|\mcHb{j}{i}|$.
Let $m_1$ be the minimal element in $\mcHb{i}{j} \sqcup \mcHb{j}{i}$. Without loss of generality, assume $m_1\in \mcHb{i}{j}$.
We have
\begin{eqnarray}
 \notag
 \left|p(\mcHb{i}{j}\sqcup \mcHb{j}{i} - \{m_1\})\right| &=& \left|u_2^{2d-1} w^{2\sigma-m_1}\right| \\ \notag
 \left|p(\{m_1\})\right| &=& \left|((u_1+1)u_1)^{d-1} u_2 w^{m_1}\right|
 \end{eqnarray}
The largest exponent of $w$ in Inequality (\ref{eq:sums_p}) is $w^{2\sigma-m_1}$. 
For all other monomials in (\ref{eq:sums_p}), the power of $w$ is smaller by at least $\Delta \log q$.
Similarly,  the smallest exponent of $w$ in Inequality (\ref{eq:sums_p}) is $w^{m_1}$. 
For all other monomials in (\ref{eq:sums_p}), the power of $w$ is larger by at least $\Delta \log q$.

Let $X_0\subseteq \mcHb{i}{j}\sqcup \mcHb{j}{i})$ be maximal with respect to $|p(X_0)|$. 
Since $d\leq \log q+1$, we can choose $\Delta$ large enough so that we have \linebreak
$X_0=\mcHb{i}{j}\sqcup \mcHb{j}{i} - \{m_1\}$ if $|w|>1$ and $X_0=\{m_1\}$ if $|w|<1$.

We have the following:
 \begin{gather}\label{eq:sum_leq_ising}
   \left|\sum_{X\subseteq \mcHb{i}{j}\sqcup \mcHb{j}{i}: X\not=X_0} p(X)\right| < |p(X_0)|
 \end{gather}
implying that Inequality (\ref{eq:sums_p}) holds. 
To see that Inequality (\ref{eq:sum_leq_ising}) holds, note that
\begin{gather} \notag
 \left|\sum_{\substack{X\subseteq \mcHb{i}{j}\sqcup \mcHb{j}{i}:\\ X\not=X_0}} p(X)\right| 
 \leq  2^{2\log q+2} \max_{X\subseteq \mcHb{i}{j}\sqcup \mcHb{j}{i}: X\not=X_0} |p(X)|\\ \notag
\end{gather}
Let 
\[
k(d)= \max\left(\left|(u_1+1)^d\right|,1\right) \cdot \max\left(\left|u_1^d\right|,1\right) \cdot \max\left(\left|u_2^d\right|,1\right)  \,.
\]
Then 
\[
 \left|\sum_{\substack{X\subseteq \mcHb{i}{j}\sqcup \mcHb{j}{i}:\\ X\not=X_0}} p(X)\right| 
\leq
\begin{cases} 
4\cdot 2^{\log q+1} \cdot k(d) \cdot|w|^{2\sigma-m_1-\Delta\log q}, & |w|>1 \\
4\cdot 2^{\log q+1} \cdot k(d) \cdot|w|^{m_1+\Delta \log q}, &  |w|<1 
\end{cases}
\]

So, there exists a constant $c>0$ depending on $u_1,u_2,w$ such that (for large enough values of $q$), 
\begin{gather} \notag
 \left|\sum_{\substack{X\subseteq \mcHb{i}{j}\sqcup \mcHb{j}{i}:\\ X\not=X_0}} p(X)\right| 
 \leq  
\begin{cases} 
c^{\log q} |w|^{2\sigma-m_1-\Delta\log q}, & |w|>1 \\
c^{\log q} |w|^{m_1+\Delta \log q}, &  |w|<1 
\end{cases}
\end{gather}

It remains to choose $\Delta$ large enough so that 
\[
 \begin{cases} 
\frac{c^{\log q}}{u_2^{2d-1}} < |w|^{\Delta \log q}, & |w|>1 \\
 |w|^{\Delta \log q} < \frac{((u_1+1)u_1)^{d-1}}{c^{\log q}}, &  |w|<1 
\end{cases}
\]

\end{renumerate}

\end{proof}

\begin{samepage}
We are now ready to prove Theorems \ref{th:mainC}.
\begin{thm}
Let $(\gamma,\delta)\in\mathbb{Q}^{2}$. If  $\gamma\notin\{-1,0,1\}$ and $\delta\not=0$, then
\begin{renumerate}
\item computing $Z(G;\gamma,\delta)$ is $\spP$-hard, and
\item unless $\spETH$ fails, 
 computing $Z(G;\gamma,\delta)$  requires exponential time in $\frac{n_G}{\log^6 n_G}$.
\end{renumerate}
Otherwise, $Z(G;\gamma,\delta)$ is polynomial-time computable. 
\end{thm}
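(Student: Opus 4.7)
The plan splits into proving polynomial-time tractability in the exceptional cases and $\spP$-hardness with an exponential lower bound elsewhere.

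For the tractability side, I would derive closed-form formulas directly from the definition. When $\delta=0$ only $S=\emptyset$ contributes, so $Z(G;\gamma,0)=\gamma^{m_G}$. When $\gamma=1$ we get $Z(G;1,\delta)=(1+\delta)^{n_G}$. When $\gamma=0$ the only nonzero terms require both $S$ and $\bar S$ to be independent sets, i.e.\ $(S,\bar S)$ is a bipartition of $G$, which is enumerable per connected component in polynomial time. When $\gamma=-1$, I would write the sign as $(-1)^{|E_G(S)|+|E_G(\bar S)|}=\prod_{(u,v)\in E(G)}\sigma_u\sigma_v=\prod_v\sigma_v^{\deg(v)}$ with $\sigma_v\in\{\pm1\}$ encoding $v\in S$, so the polynomial factors into a product over vertices, $(-1)^{m_G}\prod_{v\in V(G)}(\delta+(-1)^{\deg(v)})$.

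For the hardness side, the strategy is to reduce from computing $Z(H;\t,1)$ on simple graphs $H$, which is $\spP$-hard by Proposition \ref{prop:complexity2a} and requires $\Omega(c^{n_H})$ time under $\spETH$ by Proposition \ref{prop:complexity2b}. The reduction performs polynomial interpolation in two stages using the gadget constructions developed earlier. Assume first that $\gamma\notin\{-1,0,1\}$, $\delta\notin\{-1,0,1\}$ and $\gamma\neq-\delta$. For the $\y$-interpolation stage, pick sets $\mc{H}_j$ (with $j=0,\ldots,n_H$) from Lemma \ref{lem:dell-lemma}, parts (i), (ii), (iv), (v), so that the values $y_j$ obtained from $g_{\y,\mc{H}_j}(\gamma,\delta)$ are pairwise distinct and the prefactors $g_{p,\mc{H}_j}(\gamma,\delta)$ are nonzero. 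Querying the oracle on $S_{\mc{H}_j}(H)$ and invoking Proposition \ref{prop:S_big_ising} recovers the evaluations $Z(H;\gamma,y_j)$; standard polynomial interpolation in $\y$ then yields $Z(H;\gamma,1)$. For the $\t$-interpolation stage, pick sets $\mc{H}_i$ (with $i=0,\ldots,2m_H$) using Lemma \ref{lem:dell-lemma}, part (iii), and query the just-obtained oracle on $H\otimes\mc{H}_i$; Lemma \ref{lem:otimes_reduction} returns $Z(H;r_i,1)$ at distinct points $r_i$, and interpolation in $\t$ produces the full polynomial $Z(H;\t,1)$.

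The subcases $\delta=1$ (skip the star stage), $\delta=-1$ (first apply Lemma \ref{lem:Zminus1} to reduce to $\delta=1$), and $\gamma=-\delta$ (where the hypothesis of Proposition \ref{prop:S_big2_ising} fails) are treated separately. The last is the main obstacle I anticipate: when $\gamma=-\delta$ the star construction collapses because $g_{\y,\mc{H}}(\gamma,-\gamma)$ becomes independent of the specific elements of $\mc{H}$, and simple pendants also degenerate. My plan for this case is to insert a small preliminary gadget (for instance, a path of length two attached to each vertex, or an analogous construction) that shifts $(\gamma,\delta)$ off the anti-diagonal before applying the main argument. Finally, the exponential-time bound follows from sizing the compound gadget: by Lemma \ref{lem:dell-lemma}(i), $\sigma(\mc{H}_i)+\sigma(\mc{H}_j)=O(\log^3 n_H)$, so each copy of $\Phi_{\mc{H}_i}$ contributes $O(\log^3 n_H)$ vertices and each copy of $S_{\mc{H}_j}$ another $O(\log^3 n_H)$, giving $N:=|V(S_{\mc{H}_j}(H\otimes\mc{H}_i))|=O(n_H\log^6 n_H)$. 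A hypothetical algorithm for $Z(\cdot;\gamma,\delta)$ running in time $2^{o(N/\log^6 N)}$ would then compute $Z(H;\t,1)$ in time $2^{o(n_H)}$, contradicting Proposition \ref{prop:complexity2b} under $\spETH$; the $\spP$-hardness follows by the same reduction starting from the hard graph class of Proposition \ref{prop:complexity2a}, using that all gadgets preserve simplicity.
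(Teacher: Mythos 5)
Your proposal matches the paper's proof almost exactly: the same two-stage interpolation (star gadget for $\y$ via Proposition \ref{prop:S_big_ising}, then the $\otimes$-gadget for $\t$ via Lemma \ref{lem:otimes_reduction}), the same appeal to Lemma \ref{lem:dell-lemma} for both stages, the same special handling of $\delta=1$, $\delta=-1$ (via Lemma \ref{lem:Zminus1}), and your ``attach a path of length two to each vertex'' idea for $\gamma=-\delta$ is precisely the paper's $S_{\{1\}}(G)$ construction, which sends $(\gamma,\delta)\mapsto(\gamma,-\delta^2)$ off the anti-diagonal. The one small point worth adding is that you should, as the paper does, verify the hypotheses of Lemma \ref{lem:dell-lemma} on $e_i=\frac{c_i+d_i}{2\gamma}$ and $r_i=\frac{\lambda_i}{\gamma^2+\gamma}$ (namely $e_1,e_2\neq0$, $r_1,r_2\notin\{-1,0,1\}$, $|r_1|\neq|r_2|$), which holds for all $\gamma\notin\{-1,0,1\}$.

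The one genuine divergence is in the easy cases. For $\gamma\in\{0,-1\}$ the paper invokes Lemma~6.3 of \cite{ar:GoldbergGroheJerrumThurley09} (rank-$1$ and bipartite-rank-$2$ edge-weight matrices), whereas you give direct closed forms: enumeration of bipartitions per connected component for $\gamma=0$, and the product formula $Z(G;-1,\delta)=(-1)^{m_G}\prod_{v}(\delta+(-1)^{\deg(v)})$ for $\gamma=-1$. Both are correct; yours is more elementary and self-contained, the paper's is shorter in exposition but imports machinery. (Incidentally your $Z(G;\gamma,0)=\gamma^{m_G}$ is the correct formula; the paper's stated ``$Z(G;\gamma,0)=1$'' appears to be a typo, though it does not affect tractability.)
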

\end{samepage}

\begin{proof}
 We set $\t=\gamma$ and $\y=\delta$ with  $\gamma\notin\{-1,0,1\}$ and $\delta\not=0$. 
 By abuse of notation we refer to $c_1,c_2,d_1,d_2,\lamb_1,\lamb_2$ from Lemma \ref{lem:b_both} as the values
 they obtain when $\t=\gamma$. 
 Since $\gamma\not=\{-1,0,1\}$, it is easy to verify that the following hold:
 \begin{enumerate}
  \item[a.] $c_1+d_1,c_2+d_2 \not=0$,
  \item[b.] $\lamb_1,\lamb_2 \not=0$,
  \item[c.] $\lamb_1,\lamb_2 \not=\pm (\gamma^2+\gamma)$, and
  \item[d.] $\lamb_1 \not= \pm \lamb_2$.
 \end{enumerate}

Let $e_{i}=\frac{c_i+d_i}{2\gamma}$
and $r_i = \frac{\lamb_i}{\gamma^2+\gamma}$
for $i=1,2$. 
Let $q'=n_G^2$. 
Let $\mc{H}_0,\ldots,\mc{H}_{\hq}$ be the sets guaranteed in Lemma \ref{lem:dell-lemma} 
with respect to $q',\gamma,\delta,e_1,e_2,r_1,r_2$. 

First we deal with that case that $\gamma\not=-\delta$. We return to $\gamma=-\delta$ later. 

We want to compute the $\hq+1$ values $Z(G\otimes \mc{H}_k;\gamma,1)$.
If $\delta =1$ we simply do it using the oracle to $Z$ at $(\gamma,1)$. If $\delta=-1$ we use Lemma \ref{lem:Zminus1}. 
Otherwise we proceed as follows. 

By Proposition \ref{prop:S_big_ising}, for each $0\leq i,k\leq \hq$,
\begin{gather} \label{eq:big_reduction_ising}
Z(S_{\mc{H}_i}(G\otimes \mc{H}_k);\gamma,\delta) = g_{p,\mc{H}_i}(\gamma,\delta) \cdot Z(G\otimes \mc{H}_k;\gamma,g_{\y,\mc{H}_i}(\gamma,\delta))\,. 
\end{gather}
It is guaranteed in Lemma \ref{lem:dell-lemma} that for $i\not=j$, $g_{\y,\mc{H}_i}(\gamma,\delta)\not=g_{\y,\mc{H}_j}(\gamma,\delta)$.

We want to use Equation (\ref{eq:big_reduction_ising}) to interpolate, for each $0\leq k\leq m_G$,
the univariate polynomials $Z(G\otimes \mc{H}_k;\gamma,\y)$.
We use the fact that the sizes of $G\otimes \mc{H}_k$, and therefore the $\y$-degrees of $Z(G\otimes \mc{H}_k;\gamma,\y)$,
 are at most $O(n_G \log^3 n_G)$, 
Since
$g_{p,\mc{H}_i}(\gamma,\delta)$ is non-zero, we can interpolate in polynomial-time, for each $0\leq k \leq m_G$,
the $m_G+1$ polynomials  
$Z(G\otimes \mc{H}_k;\gamma,\y)$.

So, we computed $Z(G\otimes \mc{H}_k;\gamma,1)$ for $0\leq k\leq \hq$. 
Now we use these values to interpolate $\t$ and get the univariate polynomial $Z(G\otimes \mc{H}_k;\t,1)$. 
By Lemma \ref{lem:otimes_reduction}, 
\[
 Z(G;f_{\t,\mc{H}_k}(e_1,e_2,r_1,r_2),1) = Z(G\otimes \mc{H}_k; \gamma,1) \cdot \left(f_{p,\mc{H}_k}(\gamma)\right)^{-1}\,.
\]
Since $\gamma\notin \{-1,0,1\}$, $f_{p,\mc{H}_k}(\gamma)\not=0$. 
By Lemma \ref{lem:dell-lemma}, $f_{\t,\mc{H}_k}(e_1,e_2,r_1,r_2)$ are distinct and polynomial time computable. 
Hence, the univariate polynomial $Z(G;\t,1)$ can be interpolated.
We get (i) by Proposition \ref{prop:complexity2a}.
Since $Z(-;\gamma,\delta)$ is only queried on graphs $S_{\mc{H}_i}(G\otimes \mc{H}_k)$ of sizes at most $O(n_G \log^6 n_G)$, (ii) holds
by Proposition \ref{prop:complexity2b}. 

Consider the case $\gamma=-\delta$. 
By Proposition \ref{prop:S_big_ising}, for every $G$ we have
\begin{gather}\notag
Z(S_{\{1\}}(G);\gamma,\delta) = (\delta\cdot (1-\delta^2))^{n_G} \cdot Z(G;\gamma,-\delta^2)\,. 
\end{gather}
and the desired hardness results follow by the corresponding for $Z(G;\gamma,-\delta^2)$ (using that $\gamma\not=-(-\delta^2)$, $-\delta^2\not\in\{-1,0,1\}$
and that  $(\delta\cdot (1-\delta^2))^{n_G}$ is non-zero). 

Now we consider the cases where $\gamma \in \{-1,0,1\}$ or $\delta=0$. 
Two cases are easily computed, namely $Z(G;1,\delta)=(1+\delta)^{n_G}$ and $Z(G;\gamma,0)=1$. 

The other two cases follows e.g. from Lemma 6.3 in \cite{ar:GoldbergGroheJerrumThurley09}. 
In that lemma it is shown in particular that partition functions $Z_{A,D}(G)$ with a matrix $A$ of edge-weights
and a diagonal matrix $D$ of vertex weights can be computed in polynomial time if $A$ has rank $1$ or is bipartite with rank $2$. 
For $\gamma=0$ we have 
\begin{eqnarray*}
 A = \left( \begin{array}{cc}
0 & 1 \\
1 & 0 
\end{array} \right)
& &
D = \left( \begin{array}{cc}
\delta & 0 \\
0 & 1 
\end{array} \right)
\end{eqnarray*}
so $A$ is bipartite with rank $2$. 
For $\gamma=-1$ we have 
\begin{eqnarray*}
 A = \left( \begin{array}{cc}
-1 & 1 \\
1 & -1 
\end{array} \right)
& &
D = \left( \begin{array}{cc}
\delta & 0 \\
0 & 1 
\end{array} \right)
\end{eqnarray*}
so $A$ has rank $1$. Note Lemma 6.3 extends to negative values of $\delta$. 
We refer the reader to \cite{ar:GoldbergGroheJerrumThurley09} for details. 
\end{proof}


\section{Simple Bipartite Planar Graphs}
\label{se:trivariate}

In this section we show that the evaluations of $Z(G;\x,\y,\z)$
are generally $\spP$-hard to compute, even when restricted to simple graphs
which are both bipartite and planar. 
To do so, we use that for $3$-regular graphs, $Z(G;\x,\y,\z)$ is essentially equivalent to $Z(G;\t,\y)$. 
We use a two-dimensional graph transformation 
 $R^{\ell,q}(G)$ which in applied to simple 
 $3$-regular bipartite planar graphs 
 and emits simple bipartite planar graphs in order to interpolate $Z(G;\t,\y)$. 
 
 \subsection{Definitions}
 The following is a variation of {\em $k$-thickening} for simple graphs:
\begin{defn}
[$k$-Simple Thickening]
Given $\ell\in\mathbb{N}^{+}$ and a graph
$H$, we define a graph $STh^{\ell}(H)$
as follows. 
For every edge $e=(u,w) $
in $E(H)$, we add $4\ell$ new vertices $v_{e,1},\ldots,v_{e,4\ell}$
to $H$. For each $v_{e,i}$, we add two new edges
$(u,v_{e,i})$ and $(w,v_{e,i})$. 
Finally, we remove the edge $e$ from the graph. 
Let $N_{\ell}(e)^{+}$ denote the subgraph of $STh^{\ell}(H)$ induced by 
the set of vertices $\left\{ v_{e,1},\ldots,v_{e,4\ell},u,v\right\} $.
\end{defn}

The graph transformation used in the hardness proof is the following:
\begin{defn}
[$R^{\ell,q}(G)$]
Let $G$ be a graph. For each $w\in V(G)$,
let $G_{w}^{q}=(V_{w}^{q},E_{w}^{q})$ be a new copy of the star with $2q$ leaves.
Denote by $c_{w}$ the center of the star $G_{w}^{q}$. 
Let $R^{\ell,q}(G)=(V_{R}^{\ell,q},E_{R}^{\ell,q})$ be the graph
obtained from the disjoint union of 
$STh^{\ell}(G)$ and $STh^{\ell}(G_w^q)$
for all $w\in V(G)$ 
by identifying $w$ and $c_{w}$ for all $w\in V(G)$.
\end{defn}

\begin{remarks} ~
\begin{renumerate}
 \item
 The construction of $R^{\ell,q}(G)$ can also be described as follows.
 Given $G$, we attach $2q$ new vertices to each vertex $v$ of $V(G)$ to obtain
 a new simple graph $G'$. Then, $R^{\ell,q}(G)=STh^{\ell}(G')$. 
 \item
 For every simple planar graph $G$ and $\ell,q\in \mathbb{N}^{+}$, 
 $R^{\ell,q}(G)$ is a simple bipartite planar bipartite graph 
 with $n_R$ vertices and $m_R$ edges, where 
 $n_{R}=n_{G}(1+2q(1+4\ell))+4\ell m_{G}$ and 
 and $m_{R}=8\ell m_{G}+16\ell qn_{G}$.
 \end{renumerate}
\end{remarks}
Figure \ref{fig:Figure_R} shows the graph $R^{1,2}(P_2)$. 
\begin{figure}
\caption{\label{fig:Figure_R} The construction of the graph $R^{\ell,q}(P_2)$ for $\ell=1$ and $q=2$, where $P_2$ is the path with two vertices and one edge.  }
\begin{center}
\includegraphics[scale=0.5]{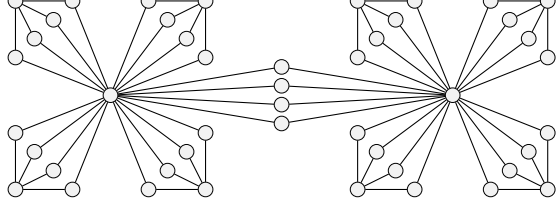}
\end{center}
\end{figure}


In the following it is convenient to consider a multivariate version of $Z(G;\x,\y,\z)$ denoted $Z(G;\bar{\x},\bar{\y}.\bar{\z})$. 
This approach was introduced for the Tutte polynomial by A. Sokal, see \cite{ar:Sokal05}. 
$Z(G;\bar{\x},\bar{\y}.\bar{\z})$ has indeterminates which 
correspond to every $v\in V(G)$ and every $e\in E(G)$. 
\begin{defn}
 Let $\bar{\x}=\left(\x_e : e\in E(G)\right)$, $\bar{\y}=\left(\y_u : u\in V(G)\right)$ 
 and \linebreak
$\bar{\z}=\left(\z_e : e\in E(G)\right)$
 be tuples
 of distinct indeterminates. Let 
 \[
  Z(G;\bar{\x},\bar{\y},\bar{\z}) = 
  \sum_{S\subseteq V(G)} \left(\prod_{e\in E_G(S)} \x_e\right) \left( \prod_{u\in S} \y_u\right)
  \left( \prod_{e\in E_G(\bar{S})} \z_e\right)  \,. 
 \]
\end{defn}
We may write $\x_{w,v}$ and $\z_{w,v}$ instead of $\x_e$ and $\z_e$ for an edge $e=(w,v)$. 
Clearly, by setting $\x_e=\x$ and $\z_e=\z$ for every $e\in E(G)$, and $\y_u=\y$ for every $u\in V(G)$
we get $Z(G;\bar{\x},\bar{\y},\bar{\z}) = Z(G;\x,\y,\z)$. 

We furthermore define a variation of $Z(G;\bar{\x},\bar{\y},\bar{\z})$ obtained
by restricting the range of the summation variable as follows:
\begin{defn}
Given a graph $H$ and $B,C\subseteq V(H)$ with $B$ and $C$ disjoint,
let
\begin{eqnarray}
\label{eq:ZprimeBC}
& & Z(H,B,C;\bar{\x},\bar{\y},\bar{\z})=
\\ \notag & & \sum_{A\,:\, B\subseteq A\subseteq V(H)\,,\, A\cap C=\emptyset}
\left(\prod_{e\in E_G(A)} \x_e\right) \left( \prod_{u\in A\setminus B} \y_u\right)
  \left( \prod_{e\in E_G(\bar{A})} \z_e\right) 
\end{eqnarray}
where the summation is over all $A\subseteq V(H)$, such that $A$
contains $B$ and is disjoint from $C$. 
\end{defn}
We have $Z(H,\emptyset,\emptyset;\bar{\x},\bar{\y},\bar{\z})=Z(H;\bar{\x},\bar{\y},\bar{\z})$. 



\subsection{Lemmas, statement of Theorem \ref{th:mainB} and its proof}
For every edge $e\in E(G)$ between $u$ and $v$, 
let
\[
\omega_{1}(e,S)=Z(N_{\ell}(e)^{+},S\cap\{u,v\},\{u,v\}\setminus S;\bar{\x},\bar{\y},\bar{\z})\,,
\]
and for every vertex $w\in V$,  let
\[
\omega_{2}(w,S)=Z(STh^{\ell}(G_w^q),S\cap\{w\},\{w\}\setminus S;\bar{\x},\bar{\y},\bar{\z})\,.
\]
 
 Let 
 $$
 \omega_{1}(S)=\prod_{e\in E(G)} \omega_{1}(e,S) \mbox{\ \ \  and \ \ \ }
 \omega_{2}(S)=\prod_{w\in V(G)}\omega_{2}(w,S)\,.
 $$
 Let $\omega_{i,\mathrm{\mathrm{triv}}}(S)$ for $i=1,2$ be the polynomials in $\x$, $\y$ and $\z$ obtained from 
 $\omega_i(S)$ by setting $\x_e=\x$ and $\z_e=\z$ for every $e\in E^{\ell,q}$ and
 $\y_v=\y$ for every $v\in V^{\ell,q}_R$. 

\begin{lem}
 \label{lem:SumSThk}
 \[
Z(R^{\ell,q}(G);\x.\y.\z)=
\sum_{S\subseteq V(G)}
\omega_{1,\mathrm{triv}}(S) \cdot \omega_{2,\mathrm{\mathrm{triv}}}(S) \cdot \y^{|S|}\,.
\]
\end{lem}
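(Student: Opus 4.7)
The plan is to start from the multivariate partition function $Z(R^{\ell,q}(G);\bar{\x},\bar{\y},\bar{\z})$ and partition its defining sum according to the intersection of the iteration variable with $V(G)$, show that the resulting inner sum factors over the ``pieces'' of $R^{\ell,q}(G)$, and finally specialize the edge/vertex weights to $\x,\y,\z$.

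First I would observe the structural decomposition of $R^{\ell,q}(G)$. By construction, the edge set $E(R^{\ell,q}(G))$ is the disjoint union
\[
\bigsqcup_{e\in E(G)} E(N_{\ell}(e)^{+}) \ \sqcup\ \bigsqcup_{w\in V(G)} E(STh^{\ell}(G_w^q)),
\]
and, crucially, any two of these subgraphs meet only in vertices of $V(G)$ (either a shared endpoint $w\in V(G)$ of an edge $e$, or the identified center $c_w=w$). Given any $A\subseteq V(R^{\ell,q}(G))$, write $S=A\cap V(G)$; then each of the subgraphs $N_\ell(e)^+$ and $STh^\ell(G_w^q)$ intersects $A$ in a set whose restriction to $V(G)$ is prescribed by $S$, while the remaining (``internal'') vertices can be chosen independently.

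Next I would expand $Z(R^{\ell,q}(G);\bar{\x},\bar{\y},\bar{\z})$ along this partition. Splitting the iteration variable as $A=S\cup A'$ with $A'\subseteq V(R^{\ell,q}(G))\setminus V(G)$, the $\y$-product splits as $\left(\prod_{u\in S}\y_u\right)\cdot\prod_{u\in A'}\y_u$, and both the $\x$-product and $\z$-product split according to which of the subgraphs the edge belongs to. Summing over the internal vertices of $N_\ell(e)^+$ with the constraints that $u\in A\Leftrightarrow u\in S$ and $v\in A\Leftrightarrow v\in S$ reproduces exactly
\[
Z(N_{\ell}(e)^{+},S\cap\{u,v\},\{u,v\}\setminus S;\bar{\x},\bar{\y},\bar{\z})=\omega_1(e,S),
\]
where the factors $\y_u,\y_v$ for the shared endpoints are (by definition of $Z(H,B,C;\cdot)$) excluded from $\omega_1(e,S)$ because they lie in $B$. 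The same argument gives the $\omega_2(w,S)$ factor from $STh^\ell(G_w^q)$. Since every $w\in V(G)$ appears as a ``$B$-vertex'' in one or more $\omega_i$ terms, its $\y_w$ contribution is absent from the $\omega_i(\cdot,S)$ and must be reinstated via the explicit product $\prod_{u\in S}\y_u$.

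Combining these observations yields
\[
Z(R^{\ell,q}(G);\bar{\x},\bar{\y},\bar{\z})=\sum_{S\subseteq V(G)}\left(\prod_{u\in S}\y_u\right)\omega_1(S)\,\omega_2(S).
\]
Specializing $\x_e=\x$, $\z_e=\z$ for every $e$ and $\y_u=\y$ for every $u$ turns $\omega_i(S)$ into $\omega_{i,\mathrm{triv}}(S)$ and $\prod_{u\in S}\y_u$ into $\y^{|S|}$, giving the claimed identity. The only real work is the bookkeeping in the previous paragraph: verifying that the pieces $N_\ell(e)^+$ and $STh^\ell(G_w^q)$ are edge-disjoint and glued only along $V(G)$, so that the sum over $A$ genuinely factors once $S=A\cap V(G)$ is fixed; everything else is formal rearrangement of the defining sum.
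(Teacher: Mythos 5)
Your proposal is correct and takes essentially the same approach as the paper's proof, which simply notes that each edge of $R^{\ell,q}(G)$ lies in exactly one $N_\ell(e)^+$ or $STh^\ell(G_w^q)$ and invokes the definitions to factor the sum. You spell out the same decomposition more explicitly, including the important bookkeeping that the $\y_w$ factors for $w\in S$ are excluded from each $Z(H,B,C;\cdot)$ and must be reinstated once via the separate $\y^{|S|}$ (resp.\ $\prod_{w\in S}\y_w$) factor.
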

\begin{proof}
Each edge of $R^{\ell,q}(G)$ is either
contained in some $N_{\ell}(e)^{+}$ for $e\in E(G)$, or in some
$STh^{\ell}(G_w^q)$ for $w\in V(G)$. Hence, 
by the definitions of 
$Z(R^{\ell,q}(G);\bar{\x},\bar{\y},\bar{\z})$, $\omega_1(S)$ and $\omega_2(S)$, 
\[
\notag
Z(R^{\ell,q}(G);\bar{\x},\bar{\y},\bar{\z})=
\sum_{S\subseteq V(G)}
\omega_1(S) \cdot \omega_2(S) \cdot \prod_{w\in S} \y_w
\]
holds and the lemma follows.
 \end{proof}

\begin{lem}
\label{lem:omega1}
Let $e=(u,w)$ be an edge of $G$. Then
\[
  \omega_{1,\mathrm{triv}}(e,S) = 
 \begin{cases}
 (\y+\z^2)^{4\ell} & |\{u,v\}\cap S| = 0 \\
  (\x\y+\z)^{4\ell} & |\{u,v\}\cap S| = 1 \\
  (\y\x^2+1)^{4\ell} & |\{u,v\}\cap S| = 2 
 \end{cases}
\]
\end{lem}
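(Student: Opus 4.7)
The plan is a direct computation using the structure of $N_\ell(e)^+$. Recall that $N_\ell(e)^+$ consists of the two vertices $u,w$ together with $4\ell$ new vertices $v_{e,1},\ldots,v_{e,4\ell}$, each joined to both $u$ and $w$ and to nothing else; in particular the $v_{e,i}$ are pairwise non-adjacent. The summation in $Z(N_\ell(e)^+, S\cap\{u,w\}, \{u,w\}\setminus S;\bar\x,\bar\y,\bar\z)$ ranges over subsets $A$ of $V(N_\ell(e)^+)$ whose intersection with $\{u,w\}$ is forced to be $S\cap\{u,w\}$, while the membership of each $v_{e,i}$ in $A$ is unconstrained.

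First I would observe that because the $v_{e,i}$'s are pairwise non-adjacent and each is adjacent only to $u$ and $w$, the contribution of each $v_{e,i}$ to the product in (\ref{eq:ZprimeBC}) depends only on whether $v_{e,i}\in A$ and on the (fixed) membership of $u$ and $w$ in $A$. Hence the sum factors as $\prod_{i=1}^{4\ell}(\textrm{contribution of }v_{e,i})$, and each of the $4\ell$ factors is identical. Also note that $u,w$ themselves, being in $B$ when they lie in $A$, never contribute a $\y_v$ factor, whereas each $v_{e,i}$ lying in $A$ contributes $\y_{v_{e,i}}$ (specializing to $\y$ in the trivariate version).

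Next I would split into three cases according to $|\{u,w\}\cap S|$ and compute the contribution of a single $v_{e,i}$ in the trivariate specialization $\x_e=\x$, $\z_e=\z$, $\y_v=\y$:
\begin{itemize}
\item $|\{u,w\}\cap S|=0$: if $v_{e,i}\in A$ both incident edges are cut edges of $A$ and contribute nothing, while $v_{e,i}$ itself contributes $\y$; if $v_{e,i}\notin A$ both edges lie in $E(\bar A)$ and contribute $\z^2$. The factor is $\y+\z^2$.
\item $|\{u,w\}\cap S|=1$: if $v_{e,i}\in A$ one incident edge lies in $E(A)$ (contributing $\x$) and the other is a cut edge, together with $\y$ from the vertex; if $v_{e,i}\notin A$ one edge is cut and the other lies in $E(\bar A)$ (contributing $\z$). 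The factor is $\x\y+\z$.
\item $|\{u,w\}\cap S|=2$: if $v_{e,i}\in A$ both edges lie in $E(A)$ (contributing $\x^2$) and $v_{e,i}$ contributes $\y$; if $v_{e,i}\notin A$ both edges are cut and contribute nothing. The factor is $\y\x^2+1$.
\end{itemize}

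Raising each factor to the $4\ell$-th power gives the three formulas in the lemma. There is no real obstacle; the only care needed is in bookkeeping the three types of edges (in $E(A)$, in $E(\bar A)$, or cut) in each case, and in remembering that $u$ and $w$ themselves lie in $B$ and therefore do not contribute a vertex weight under the convention of (\ref{eq:ZprimeBC}).
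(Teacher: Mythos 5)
Your proof is correct and follows essentially the same approach as the paper's: the same case analysis on $|\{u,w\}\cap S|$, with each of the $4\ell$ identical factors computed by distinguishing $v_{e,i}\in A$ from $v_{e,i}\notin A$. You make the factorization step slightly more explicit than the paper does, and you correctly note that $u,w\in B$ carry no $\y$-weight, but the substance is the same.
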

\begin{proof}
 The value of $\omega_{1}(e,S)$ depends only on
whether $u,w\in S$. Consider $A\subseteq V(N_{\ell}(e)^{+})$ which
satisfies the summation conditions in Equation (\ref{eq:ZprimeBC})
for $Z(N_{\ell}(e)^{+},S\cap \{u,w\},\{u,w\}\setminus S;\x,\y,\z)$. 
\begin{renumerate}
\item If $w\in S$ and $u\notin S$:
Exactly one edge $e'$ incident to $v_{e,i}$ crosses the cut $[A,\bar{A}]_{N_{\ell}(e)^{+}}$.
The other edge $e''$ incident to $v_{e,i}$ belongs to $E(A)$ or $E(\bar{A})$, depending on
whether $v_{e,i}\in A$. 
We get:
\begin{gather}
\notag
\label{eq:with_edge_removed}
\omega_{1}(e,S)=\prod_{i=1}^{4\ell} 
(\x_{v_{e,i},w} \y_{v_{e,i}} + \z_{v_{e,i},u})\,. 
\end{gather}
\item If $w\notin S$ and $u\in S$: this case is similar to the previous case, and we get
\begin{gather}
\notag
\omega_{1}(e,S)=\prod_{i=1}^{4\ell} 
(\x_{v_{e,i},u} \y_{v_{e,i}} + \z_{v_{e,i},w})\,. 
\end{gather}
\item If $w,u\in S$: For each $v_{e,i}$, either $v_{e,i}\in A$, in which
case both edges $(v_{e,i},w)$ and $(v_{e,i},u)$ are in $E(A)$, or
$v_{e,i}\notin S$, and both edges  $(v_{e,i},w)$ and $(v_{e,i},u)$ cross the
cut. We get:
\[\omega_{1}(e,S)=\prod_{i=1}^{4\ell} (\y_{v_{e,i}} \x_{v_{e,i},u} \x_{v_{e,i},w} + 1 )\,.\]
\item If $w,u\notin S$: For each $v_{e,i}$, either $v_{e,i}\in S$ and
then both edges incident to $v_{e,i}$ cross the cut, or $v_{e,i}\notin S$
and none of the two edges cross the cut. We get:
\[\omega_{1}(e,S)=\prod_{i=1}^{4\ell} (\y_{v_{e,i}} + \z_{v_{e,i},w} \z_{v_{e,i},u})\,.\] 
\end{renumerate}
The lemma follows by setting $\x_e=\x$ and $\z_e=\z$ 
for every edge $e$ and $\y_u=\y$ for every vertex $u$. 
\end{proof}

\begin{lem}
\label{lem:omega2}
Let
\begin{eqnarray*}
g_{\ell,q}(\x,\y,\z)&=&
\y\cdot\left( \y \x^2+1 \right)^{4\ell} + 
\left(\y \x+\z\right)^{4\ell}\\
h_{\ell,q}(\x,\y,\z)&=&
\left(\y+\z^2\right)^{4\ell} +
\y\cdot\left(\y\x+\z\right)^{4\ell} 
\,.
\end{eqnarray*}
Let $w$ be a vertex of $G$. Then
\begin{eqnarray*}
\omega_{2,\mathrm{triv}}(w,S) & = &
 \begin{cases}
  \left(g_{\ell,q}(\x,\y,\z)\right)^{2q} & w\in S \\
  \left(h_{\ell,q}(\x,\y,\z)\right)^{2q} & w\notin S
 \end{cases}
\end{eqnarray*}

\end{lem}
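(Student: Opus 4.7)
The plan is to factorize $\omega_2(w,S)$ as a product over the $2q$ leaves of $G_w^q$ and then, for each leaf, reduce to a case of Lemma \ref{lem:omega1} by summing over the two possibilities for whether the leaf belongs to $A$.

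First I would observe that the subgraph $STh^{\ell}(G_w^q)$ is the union of the $2q$ gadgets $N_{\ell}(e_1)^+,\ldots,N_{\ell}(e_{2q})^+$ where $e_1,\ldots,e_{2q}$ are the edges of the star $G_w^q$, and these gadgets share only the central vertex $w=c_w$. Each leaf $v$ of $G_w^q$ appears in exactly one $N_{\ell}(e_j)^+$, and the middle vertices $v_{e_j,i}$ appear in exactly one $N_\ell(e_j)^+$ as well. Hence, in the defining sum of $Z(STh^\ell(G_w^q),S\cap\{w\},\{w\}\setminus S;\bar\x,\bar\y,\bar\z)$, the choice of whether each leaf $v$ and each middle vertex $v_{e_j,i}$ lies in $A$ can be made independently across the $2q$ gadgets, once $w$'s membership in $A$ is fixed by $B$ and $C$. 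Therefore $\omega_2(w,S)$ is a product of $2q$ identical factors, one for each leaf edge $e=(w,v)$, where the factor is obtained by summing over $v\in A$ and $v\notin A$ separately.

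Next I would compute this single-leaf factor using Lemma \ref{lem:omega1}. Suppose $w\in S$. If we additionally force $v\in A$, then in the inner edge-gadget $N_\ell(e)^+$ both endpoints $w,v$ are in $A$, so by Lemma \ref{lem:omega1} the inner-gadget contribution is $(\y\x^2+1)^{4\ell}$ after the substitutions $\x_e=\x,\z_e=\z,\y_u=\y$; we must also multiply by $\y$ to account for $v\in A\setminus B$. If instead $v\notin A$, then $|\{w,v\}\cap S|=1$ in the sense used by Lemma \ref{lem:omega1} applied with $B(e)=\{w\}$ and $C(e)=\{v\}$, and the contribution is $(\x\y+\z)^{4\ell}$ with no extra $\y$ factor. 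Summing these two alternatives gives the single-leaf factor
\[
\y\cdot(\y\x^2+1)^{4\ell}+(\y\x+\z)^{4\ell}=g_{\ell,q}(\x,\y,\z).
\]
The case $w\notin S$ is symmetric: if $v\in A$, Lemma \ref{lem:omega1} with $|\{w,v\}\cap S|=1$ again yields $(\y\x+\z)^{4\ell}$, multiplied by $\y$ for $v\in A$; if $v\notin A$, Lemma \ref{lem:omega1} with $|\{w,v\}\cap S|=0$ yields $(\y+\z^2)^{4\ell}$. Their sum is $h_{\ell,q}(\x,\y,\z)$.

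Finally I would raise each single-leaf factor to the $2q$-th power, since the $2q$ leaves contribute independently, to obtain
\[
\omega_{2,\mathrm{triv}}(w,S)=\begin{cases} (g_{\ell,q}(\x,\y,\z))^{2q} & w\in S,\\ (h_{\ell,q}(\x,\y,\z))^{2q} & w\notin S.\end{cases}
\]
The only potential pitfall is bookkeeping: one must check that a leaf $v$ that lies in $A$ contributes $\y$ to the $\y$-product of the overall multivariate polynomial (since $v$ is neither in $B$ nor $C$), whereas the middle vertices' $\y$ contributions are already absorbed inside the Lemma \ref{lem:omega1} factors, and $w$ itself contributes no $\y$ factor to $\omega_2(w,S)$ because $w\in B\cup C$. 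Once this distinction is respected, the computation is a direct application of Lemma \ref{lem:omega1}.
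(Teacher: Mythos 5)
Your proposal is correct and follows essentially the same route as the paper: decompose $STh^\ell(G_w^q)$ into the $2q$ edge-gadgets $N_\ell(e_j)^+$ sharing only $c_w$, factor $\omega_2(w,S)$ over the leaves, and for each leaf sum the two alternatives $v\in A$ and $v\notin A$, with the $\y$-factor for $v$ tracked separately since $v\in A\setminus B$. The only cosmetic difference is that you reuse the case analysis of Lemma \ref{lem:omega1} to read off the gadget contributions, whereas the paper recomputes those contributions inline; the computation is the same either way.
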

\begin{proof}
Consider $A$
which satisfies the summation
conditions in Equation (\ref{eq:ZprimeBC}) for 
$Z(STh^{\ell}(G_w^q),S\cap\{w\},\{w\}\setminus S;\bar{\x},\bar{\y},\bar{\z})$. 
\begin{renumerate}
\item If $w\in S$ (or, equivalently, $c_w\in A$): Let $u\in V_{w}^{q}\setminus\left\{ c_{w}\right\} $
and $e=\{u,c_{w}\}$. If $u\in A$, then the vertices $u$ and $v_{e,1},\ldots,v_{e,4\ell}$
contribute 
\[
 \y_u\prod_{i=1}^{4\ell} (\y_{v_{e,i}} \x_{v_{e,i},w} \x_{v_{e,i},u} + 1)\,.
\]
Otherwise, if $u\notin A$, then the vertices
$u$ and $v_{e,1},\ldots,v_{e,4\ell}$ contribute 
\[
\prod_{i=1}^{4\ell} (\y_{v_{e,i}} \x_{v_{e,i},w} + \z_{v_{e,i},u})\,.
\]
Hence, $\omega_{2}(w,S)$ equals in this case
\[
\prod_{u\in V^{q}_w} 
\left( 
\y_u\prod_{i=1}^{4\ell} (\y_{v_{e,i}} \x_{v_{e,i},w} \x_{v_{e,i},u} + 1)
+
\prod_{i=1}^{4\ell} (\y_{v_{e,i}} \x_{v_{e,i},w} + \z_{v_{e,i},u})
\right)\,.
\]
\item If $w\not\in S$ (or, equivalently, $c_w\notin A$): Let $u\in V_{w}^{q}\setminus\left\{ c_{w}\right\} $
and $e=\{u,c_{w}\}$. If $u\in A$, then the vertices $u$ and $v_{e,1},\ldots,v_{e,4\ell}$
contribute 
\[
\y_u\prod_{i=1}^{4\ell} (\y_{v_{e,i}} \x_{v_{e,i},u} + \z_{v_{e,i},w})\,.
\]
Otherwise, if $u\notin A$, then the vertices
$u$ and $v_{e,1},\ldots,v_{e,4\ell}$ contribute 
\[
\prod_{i=1}^{4\ell} (\y_{v_{e,i}} + \z_{v_{e,i},w} \z_{v_{e,i},u})\,.
\]
Hence, $\omega_{2}(w,S)$ equals in this case
\[
\prod_{u\in V^{q}_w} 
\left( 
\prod_{i=1}^{4\ell} (\y_{v_{e,i}} \x_{v_{e,i},u} + \z_{v_{e,i},w})
+
\prod_{i=1}^{4\ell} (\y_{v_{e,i}} + \z_{v_{e,i},w} \z_{v_{e,i},u})
\right)\,.
\]
\end{renumerate}
The lemma follows by setting $\x_e=\x$ and $\z_e=\z$
for every edge $e$ and $\y_u=\y$ for every vertex $u$. 
\end{proof}

\begin{lem}
\label{le:R} If $G$ is $d-$regular, then
\begin{eqnarray} \notag
& &
f_{p,R}(\x,\y,\z,\ell,q)\cdot Z(G;f_{\t,R}(\x,\y,\z,\ell),f_{\y,R}(\x,\y,\z,\ell,q))= 
 Z(R^{\ell,q}(G);\x,\y,\z)
\end{eqnarray}
where
\begin{eqnarray*}
 f_{p,R}(\x,\y,\z,\ell,q) &=& \left(h_{\ell,q}(\x,\y,\z)\right)^{2qn_{G}}(\y+\z^2)^{2\ell dn_{G}}\,,\\
f_{\t,R}(\x,\y,\z,\ell)&=&
\left(\frac{(\y\x+\z)^2}{(\y\x^2+1)(\y+\z^{2})}\right)^{2\ell}\,,\\
f_{\y,R}(\x,\y,\z,\ell,q)&=&
\y\cdot\left(\frac{\y\x^2+1}{\y+\z^2}\right)^{2\ell d}
\left(\frac{g_{\ell,q}(\x,\y,\z)}{h_{\ell,q}(\x,\y,\z)}\right)^{2q}\,.
\end{eqnarray*}
\end{lem}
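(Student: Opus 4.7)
The plan is to combine Lemmas \ref{lem:SumSThk}, \ref{lem:omega1}, and \ref{lem:omega2} with the $d$-regularity of $G$ in order to collapse the resulting sum over $S\subseteq V(G)$ into a bivariate Ising polynomial of $G$. First I would apply Lemma \ref{lem:SumSThk} to write $Z(R^{\ell,q}(G);\x,\y,\z)=\sum_{S\subseteq V(G)}\omega_{1,\mathrm{triv}}(S)\,\omega_{2,\mathrm{triv}}(S)\,\y^{|S|}$, and then substitute the per-edge factors from Lemma \ref{lem:omega1} and per-vertex factors from Lemma \ref{lem:omega2}. Writing $A=|E_G(S)|$, $B=|E_G(\bar{S})|$, and $C=|[S,\bar{S}]_G|$, Lemma \ref{lem:omega1} groups into
$\omega_{1,\mathrm{triv}}(S)=(\y+\z^{2})^{4\ell B}(\x\y+\z)^{4\ell C}(\y\x^{2}+1)^{4\ell A}$,
while Lemma \ref{lem:omega2} factors as
$\omega_{2,\mathrm{triv}}(S)=h_{\ell,q}^{2qn_{G}}\,(g_{\ell,q}/h_{\ell,q})^{2q|S|}$.

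The decisive step uses the $d$-regularity of $G$ through the double-counting identities $2A+C=d|S|$ and $2B+C=d(n_G-|S|)$, which allow $A$ and $B$ to be eliminated in favor of $|S|$ and $C$. A direct rearrangement turns the edge-contribution product into
\begin{equation*}
(\y+\z^{2})^{2\ell d n_{G}}\left(\frac{\y\x^{2}+1}{\y+\z^{2}}\right)^{2\ell d|S|} f_{\t,R}^{\,C}\,,
\end{equation*}
where the $S$-independent factor $(\y+\z^{2})^{2\ell d n_{G}}$ combines with $h_{\ell,q}^{2qn_{G}}$ to form exactly $f_{p,R}$, and the $|S|$-dependent factor combines with $(g_{\ell,q}/h_{\ell,q})^{2q|S|}\y^{|S|}$ to form exactly $f_{\y,R}^{\,|S|}$. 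The coefficient of $C$ in the exponent collapses into $f_{\t,R}$ because the ratio $(\x\y+\z)^{2}/[(\y\x^{2}+1)(\y+\z^{2})]$ is precisely the base of $f_{\t,R}$.

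Having arranged the summand into $f_{p,R}\cdot f_{\t,R}^{\,C}\cdot f_{\y,R}^{\,|S|}$, the final step is to recognize
\[
\sum_{S\subseteq V(G)} f_{\t,R}^{\,C}\,f_{\y,R}^{\,|S|} = Z(G;\,f_{\t,R},\,f_{\y,R})
\]
by appealing to the cut form of the bivariate Ising polynomial given in Equation \ref{eq:bivariateCutForm}, which relates sums indexed by $|[S,\bar{S}]_G|$ to the standard definition of $Z(G;\t,\y)$. The main obstacle is the bookkeeping of exponents when applying the regularity identities: one must carefully split the powers of $\y+\z^{2}$ and $\y\x^{2}+1$ into pieces depending on $|S|$ versus $C$ so that the constants match the stated closed forms for $f_{p,R}$, $f_{\t,R}$, and $f_{\y,R}$ exactly. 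No genuinely new idea is required beyond methodical algebraic manipulation and an invocation of Equation \ref{eq:bivariateCutForm}.
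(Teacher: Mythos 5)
Your plan follows essentially the same route as the paper's own proof: expand via Lemma~\ref{lem:SumSThk}, substitute the per-edge factors from Lemma~\ref{lem:omega1} and per-vertex factors from Lemma~\ref{lem:omega2}, and then use $d$-regularity (via $2A+C=d|S|$ and $2B+C=d(n_G-|S|)$) to eliminate $A=|E_G(S)|$ and $B=|E_G(\bar S)|$ in favour of $|S|$ and $C=|[S,\bar S]_G|$. The regrouping into $f_{p,R}\cdot f_{\t,R}^{\,C}\cdot f_{\y,R}^{\,|S|}$ is exactly the paper's computation, and your algebra there is correct.

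However, your final step has a genuine gap, and it is the same place where the paper is terse (it simply asserts ``the lemma follows''). Equation~(\ref{eq:bivariateCutForm}) says
\[
Z(G;\t,\y)=\t^{m_G}\sum_{S\subseteq V(G)}\t^{-|[S,\bar S]_G|}\,\y^{|S|},
\]
with a \emph{negative} exponent on $\t$. Consequently
\[
\sum_{S\subseteq V(G)} f_{\t,R}^{\,|[S,\bar S]_G|}\, f_{\y,R}^{\,|S|}
= f_{\t,R}^{\,m_G}\; Z\bigl(G;\,f_{\t,R}^{-1},\,f_{\y,R}\bigr),
\]
which is not the same as $Z(G;f_{\t,R},f_{\y,R})$. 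A sanity check on $G=K_2$ makes the mismatch visible: $\sum_S \t^{|[S,\bar S]|}\y^{|S|}=1+2\t\y+\y^2$, whereas $Z(K_2;\t,\y)=\t+2\y+\t\y^2$. So the identity you invoke does not follow directly from Equation~(\ref{eq:bivariateCutForm}); to make the statement exactly right one should either replace $f_{\t,R}$ by its reciprocal and absorb the resulting $f_{\t,R}^{\,m_G}$ prefactor into $f_{p,R}$ (using $m_G=dn_G/2$), or state the conclusion with $Z(G;f_{\t,R}^{-1},f_{\y,R})$ and the adjusted prefactor. None of this affects the downstream interpolation argument in Theorem~\ref{th:IsingSimple}, which only requires that distinct $\ell$ yield distinct evaluation points, but as written the last line of your proof does not close.
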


\begin{proof}
We want to rewrite $Z(R^{\ell,q}(G);\bar{\x},\bar{\y},\bar{\z})$ as a sum over subsets
$S$ of vertices of $G$. Using Lemma \ref{lem:SumSThk},  in order to 
compute 
$Z(R^{\ell,q}(G);\x,\y,\z)$ we first need to
find $\omega_{1,\mathrm{triv}}(S)$ and $\omega_{2,\mathrm{triv}}(S)$.
Using Lemma \ref{lem:omega2}, 
$\omega_{2,\mathrm{triv}}(S)$ is given by
\[
\omega_{2,\mathrm{triv}}(S)=
\left(g_{\ell,q}(\x,\y,\z)\right)^{2q|S|}\cdot\left(h_{\ell,q}(\x,\y,\z)\right)^{2qn_{G}-2q|S|}\,.
\]
In order to compute $\omega_{1,\mathrm{triv}}(S)$, consider $S\subseteq V(G)$. 
Since $G$ is $d$-regular,  
the number of
edges contained in $S$ is $\frac{1}{2}(d\cdot|S|-|[S,\bar{S}]_{G}|)$,
and the number of edges contained in $\bar{S}$ is 
$\frac{1}{2}(dn_{G}-d\cdot|S|-|[S,\bar{S}]_{G}|)$.
Hence, by Lemma \ref{lem:omega1}, $\omega_{1,\mathrm{triv}}(S)$ is given by
\[
\omega_{1,\mathrm{triv}}(S)=(\x\y+\z)^{4\ell|[S,\bar{S}]_{G}|}
(\y\x^2+1)^{4\ell\cdot\frac{d\cdot|S|-|[S,\bar{S}]_{G}|}{2}}
(\y+\z^2)^{4\ell\cdot\frac{dn_{G}-d\cdot|S|-|[S,\bar{S}]_{G}|}{2}}
\,.
\]
Using Lemma \ref{lem:SumSThk},
\begin{equation}
\notag
Z(R^{\ell,q}(G);\x,\y,\z)=
\sum_{S\subseteq V(G)}
\omega_{1,\mathrm{triv}}(S) \cdot \omega_{2,\mathrm{triv}}(S) \cdot \y^{|S|} 
\end{equation}
which is equal to $(\y+\z^{2})^{4\ell\cdot\frac{dn_{G}}{2}}$ times
\begin{equation}
\label{eq:ZprimeNoPref}
\sum_{S\subseteq V(G)}
\left(\frac{(\y\x+\z)^2}{(\y\x^2+1)(\y+\z^{2})}\right)^{2\ell|[S,\bar{S}]_G|}
\left(\y\cdot\left(\frac{\y\x^2+1}{\y+\z^2}\right)^{2\ell d}\right)^{|S|}
\cdot \omega_{2,\mathrm{triv}}(S)\,.
\end{equation}

Plugging the expression for $\omega_{2,\mathrm{triv}}(S)$ in Equation (\ref{eq:ZprimeNoPref}), we get that
$Z(R^{\ell,q}(G);\x,\y,\z)$ equals 
 $f_{p,R}(\x,\y,\z,\ell,q)$ times 
\begin{gather}\notag
\sum_{S\subseteq V(G)}
\left(\frac{(\y\x+\z)^2}{(\y\x^2+1)(\y+\z^{2})}\right)^{2\ell|[S,\bar{S}]_G|}
\left(\y\cdot\left(\frac{\y\x^2+1}{\y+\z^2}\right)^{2\ell d}
\left(\frac{g_{\ell,q}(\x,\y,\z)}{h_{\ell,q}(\x,\y,\z)}\right)^{2q}
\right)^{|S|}
\end{gather}
and the lemma follows. 
\end{proof}

\begin{lem}\label{lem:growth}
Let $e\in\mathbb{Q}\backslash\left\{ -1,0,1\right\} $ and let $a,b,c>0$
and $b\not=c$. Then there is $c_{1}\in\mathbb{N}$ for which the sequence
\begin{gather}\label{eq:lem:growth}
h(\ell) =\frac{e\cdot b^{\ell}+a^{\ell}}{c^{\ell}+e\cdot a^{\ell}}
\end{gather}
is strictly monotone increasing or decreasing for $\ell\geq c_1$. 
\end{lem}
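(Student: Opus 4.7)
The plan is to establish monotonicity by computing the discrete difference $h(\ell+1)-h(\ell)$, expressing it as a single fraction, and then arguing that for large $\ell$ both the numerator and denominator have constant sign. A direct calculation will give
\[
h(\ell+1)-h(\ell) \;=\; \frac{N(\ell)}{(c^{\ell+1}+e a^{\ell+1})(c^{\ell}+e a^{\ell})},
\]
and after multiplying out and cancelling the $ea^{2\ell+1}$ terms the numerator collapses to a clean sum of three geometric sequences,
\[
N(\ell) \;=\; e(b-c)(bc)^{\ell} \;+\; e^{2}(b-a)(ab)^{\ell} \;+\; (a-c)(ac)^{\ell}.
\]
Since $a,b,c>0$, the three bases $bc,\,ab,\,ac$ are positive, so $N(\ell)$ is asymptotically governed by whichever of these is the largest, provided its coefficient is nonzero.

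Next I would check that some coefficient is indeed nonzero so that a genuine dominant term exists. The coefficient of $(bc)^{\ell}$ is $e(b-c)$, which is nonzero by the hypotheses $b\neq c$ and $e\neq 0$. For the denominator, each factor $c^{\ell}+e a^{\ell}$ is eventually of constant sign: if $c>a$ it is eventually positive, if $c<a$ its sign is that of $e$ from some point onward, and if $c=a$ it equals $(1+e)a^{\ell}$, which is nonzero because $e\neq -1$. In every case the product of the two factors in the denominator is of constant sign (in fact eventually positive) for $\ell$ large, so the sign of $h(\ell+1)-h(\ell)$ is controlled entirely by the sign of $N(\ell)$.

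The main obstacle will be the tie-breaking analysis for $N(\ell)$ when two of the bases $ab,\,ac,\,bc$ happen to coincide, since the asymptotically dominant term could then be a sum of two contributions that might a priori cancel. The three possible collisions are: $ab=ac$ forces $b=c$, which is excluded; $ab=bc$ forces $a=c$, in which case the combined coefficient of the dominant base becomes $e(b-c)+e^{2}(b-a)=e(1+e)(b-c)$, nonzero because $e\notin\{-1,0\}$ and $b\neq c$; and $ac=bc$ forces $a=b$, with combined coefficient $(a-c)+e(b-c)=(1+e)(a-c)$, nonzero because $a=b\neq c$ and $e\neq -1$. Once these three subcases are dispatched, the dominant base in $N(\ell)$ has nonzero coefficient in all situations, so $N(\ell)$ has constant sign for $\ell \geq c_{1}$ for some explicit $c_{1}$, and strict monotonicity of $h$ follows immediately. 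The appeal to $e\notin\{-1,0,1\}$ is needed precisely to rule out the cancellation cases and to keep the denominator away from zero; the value $e=1$ is not actually used here, but it is harmlessly included in the hypothesis of the lemma because it is excluded at the outer level of the argument.
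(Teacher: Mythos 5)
Your proof is correct, and it takes a genuinely different route from the paper's. The paper first normalizes by dividing numerator and denominator by $a^\ell$ (reducing to two parameters $\tilde b = b/a$, $\tilde c = c/a$), extends $h$ to a real variable, and differentiates; the numerator of $h'(x)$ becomes a linear combination of $\tilde b^x$, $\tilde c^x$, and $(\tilde b\tilde c)^x$, and the argument proceeds by isolating the dominant term, handling the collision cases $\tilde b = 1$ and $\tilde c = 1$ separately, and then invoking continuity of $h'$ to conclude that a nonvanishing derivative implies monotonicity. You instead stay entirely discrete: you compute $h(\ell+1)-h(\ell)$, observe that the cross terms $ea^{2\ell+1}$ cancel, and land on a clean three-term geometric sum $N(\ell)=e(b-c)(bc)^\ell+e^2(b-a)(ab)^\ell+(a-c)(ac)^\ell$ over a denominator that is eventually of one sign. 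The dominant-term-with-nonzero-coefficient argument is then structurally the same as the paper's (your collisions $a=b$ and $a=c$ correspond exactly to $\tilde b=1$ and $\tilde c=1$), but your version avoids calculus and the intermediate value theorem altogether, which is a mild simplification for a statement that is intrinsically about an integer-indexed sequence. One small cosmetic caveat: when $a=c$ the third term of $N(\ell)$ vanishes identically (its coefficient $a-c$ is zero), so rather than ``the combined coefficient of the dominant base'' it is slightly cleaner to say that $N(\ell)$ collapses to a single nonzero geometric term $e(1+e)(b-c)(ab)^\ell$; similarly when $a=b$. Your closing observation that the hypothesis $e\neq 1$ is not actually used in the proof is correct and also holds for the paper's version.
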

\begin{proof}
 $h(\ell)$ can be rewritten as 
\begin{eqnarray*}
h(\ell) &=&\frac{e\cdot \tb^{\ell}+1}{\tc^{\ell}+e}
\end{eqnarray*}
by dividing both the numerator and the denominator of the right-hand side of Equation (\ref{eq:lem:growth}) 
by $a^\ell$ and setting $\tb=\frac{b}{a}$ and $\tc=\frac{c}{a}$.
We have $\tb\not=\tc$ and $\tb,\tc>0$. 

Let $h(x)=\frac{e\cdot \tb^{x}+1}{\tc^{x}+e}$. 
The derivative of $h(x)$ is given by
\begin{eqnarray}
h'(x)&=&\frac{e\ln \tb\cdot \tb^{x}\left(\tc^{x}+e\right)-\ln \tc\cdot \tc^{x}\left(e\cdot \tb^{x}+1\right)}{(\tc^x +e)^2} \notag \\ \label{eq:gro_der}
&=&\frac{e^2\ln \tb\cdot \tb^{x}-\ln \tc\cdot \tc^{x}+e (\ln \tb-\ln \tc)\tb^{x}\tc^{x}}{(\tc^x +e)^2 }
\end{eqnarray}
The denominator of $h'(x)$ is non-zero for large enough $x$. Therefore, there exists $x_0$ such that 
$h'(x)$ is continuous on $[x_0,\infty)$, so it is enough to show that $h'(x)\not=0$ for all large enough $x$ to get the desired result.

If $\tb=1$ then $ (\tc^x +e)^2 h'(x)=-(1+e)\ln \tc \cdot \tc^x$, and if $\tc=1$ then $(\tc^x +e)^2 h'(x)=(e^2+e)\ln \tb \cdot \tb^x$.
In both cases $h'(x)$ is non-zero, using that $\tb\not=\tc$ and $\tb,\tc>0$. 

Otherwise, $\tb$, $\tc$ and $\tb\tc$ are distinct. 
Let $A_1=\{\tb^x,\tc^x,\tb^x\tc^x\}$.
Let $A_2$ be the subset of $A_1$ which contains the functions of $A_1$
which have non-zero coefficients in Equation (\ref{eq:gro_der}).
Note $\tb^x\tc^x$ belongs of $A_2$. 
There is a function in $A_2$ which dominates the other functions of $A_2$.
This implies that $h'(x)$ is non-zero for large enough values of $x$. 
\end{proof}

Theorem \ref{th:mainB} is now given precisely and proved: 
\begin{thm}
\label{th:IsingSimple} 
For all $(\gamma,\delta,\epsilon)\in\mathbb{Q}^{3}$
such that 
\begin{renumerate}
 \item \label{newreq3} $\delta\not=\{-1,0,1\}$,
 \item \label{newreq1} $\delta+\epsilon^2\notin\{-1,0,1\}$,
 \item \label{newreq4} $\delta+\epsilon^2\not=\pm (\delta\gamma^2+1)$,
 \item \label{newreq2} $\delta\gamma^{2}+1\not=0$,
 \item \label{newreq5} $\gamma\delta+\epsilon\not=0$, and
 \item \label{newreq6} $(\gamma \delta+\epsilon)^4 \not=\left(\delta\gamma^2+1\right)^2 \left( \delta+\epsilon^2\right)^{2}$.
\end{renumerate}
$Z(-;\gamma,\delta,\epsilon)$ is $\spP$-hard on simple bipartite planar graphs. 
\end{thm}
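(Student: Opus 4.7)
My plan is to reduce from the $\spP$-hard problem of computing the full bivariate Ising polynomial $Z(G;\t,\y)$ on simple $3$-regular bipartite planar graphs (Proposition \ref{prop:complexity2a}) to oracle evaluations of $Z(-;\gamma,\delta,\epsilon)$ on simple bipartite planar graphs, via the transformation $R^{\ell,q}$, which preserves simplicity, bipartiteness, and planarity. Given such $G$, for each pair $(\ell,q)$ in a suitable polynomial-size range I would construct $R^{\ell,q}(G)$ and query the oracle to get $Z(R^{\ell,q}(G);\gamma,\delta,\epsilon)$. Lemma \ref{le:R} with $d=3$ rewrites this as $f_{p,R}(\gamma,\delta,\epsilon,\ell,q)\cdot Z(G;t_\ell,y_{\ell,q})$, where $t_\ell=f_{\t,R}(\gamma,\delta,\epsilon,\ell)$ and $y_{\ell,q}=f_{\y,R}(\gamma,\delta,\epsilon,\ell,q)$.

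Conditions (i)--(vi) will be used to force $f_{p,R}$ non-zero for all sufficiently large $\ell$ and every $q\geq 1$. The factor $(\delta+\epsilon^{2})^{6\ell n_G}$ is non-zero by (ii); and $h_{\ell,q}(\gamma,\delta,\epsilon)=(\delta+\epsilon^{2})^{4\ell}+\delta(\delta\gamma+\epsilon)^{4\ell}$ is a two-term exponential sum in $\ell$ whose bases $(\delta+\epsilon^{2})^{4}$ and $(\delta\gamma+\epsilon)^{4}$ are both strictly positive (by (ii) and (v)) and distinct (by (iii)), so a dominant-term argument shows $h_{\ell,q}\neq 0$ outside a finite exceptional set of $\ell$. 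Dividing recovers $Z(G;t_\ell,y_{\ell,q})$ from each oracle answer.

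For the interpolation, observe that $t_\ell=\kappa^{2\ell}$ with $\kappa=(\delta\gamma+\epsilon)^{2}/\bigl[(\delta\gamma^{2}+1)(\delta+\epsilon^{2})\bigr]$. Conditions (ii), (iv), (v) make $\kappa$ well-defined and non-zero, and (vi) is exactly the statement that $\kappa\neq\pm 1$, so $\ell\mapsto t_\ell$ is injective with pairwise distinct non-zero values. For each fixed admissible $\ell$, $y_{\ell,q}=C_\ell\cdot\rho_\ell^{2q}$ where $C_\ell=\delta\bigl((\delta\gamma^{2}+1)/(\delta+\epsilon^{2})\bigr)^{6\ell}$ is non-zero by (i), (ii), (iv), and $\rho_\ell=g_{\ell,q}(\gamma,\delta,\epsilon)/h_{\ell,q}(\gamma,\delta,\epsilon)$ is independent of $q$. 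I would then apply Lemma \ref{lem:growth} with $e=\delta$, $a=(\delta\gamma+\epsilon)^{4}$, $b=(\delta\gamma^{2}+1)^{4}$, $c=(\delta+\epsilon^{2})^{4}$, whose hypotheses $e\notin\{-1,0,1\}$, $a,b,c>0$ and $b\neq c$ correspond exactly to (i), (v), (iv), (ii), (iii). The lemma yields strict monotonicity of $\rho_\ell$ in $\ell$, so $\rho_\ell\notin\{-1,0,1\}$ outside a finite set of $\ell$ and hence $q\mapsto y_{\ell,q}$ is injective for each such $\ell$. Interpolating first over $q\in\{1,\ldots,n_G+1\}$ yields the univariate polynomial $Z(G;t_\ell,\y)$ of $\y$-degree at most $n_G$; then interpolating over $m_G+1$ admissible values of $\ell$ recovers the full bivariate polynomial $Z(G;\t,\y)$, which gives $\spP$-hardness by Proposition \ref{prop:complexity2a}.

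The main obstacle is the tight bookkeeping that matches each of the six algebraic conditions to exactly the non-vanishing or distinctness it enforces: (iv), (v), (ii) rule out zero bases; (iii) supplies the $b\neq c$ hypothesis of Lemma \ref{lem:growth}; (i) supplies the $e\notin\{-1,0,1\}$ hypothesis; and (vi) is the precise requirement that $\kappa\neq\pm 1$. Once this is settled, the exclusion of the finitely many small $(\ell,q)$ where some denominator vanishes and the standard rational-interpolation machinery assemble into a polynomial-time Turing reduction, since the sizes of $R^{\ell,q}(G)$ and the interpolation points grow polynomially in $n_G,m_G$.
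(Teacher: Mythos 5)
Your overall strategy is the same as the paper's: reduce from $Z(G;\t,\y)$ on simple $3$-regular bipartite planar graphs (Proposition \ref{prop:complexity2a}) via the transformation $R^{\ell,q}$, use Lemma \ref{le:R} with $d=3$, establish non-vanishing of $f_{p,R}$, apply Lemma \ref{lem:growth} to obtain strict monotonicity of the $\y$-evaluations in $\ell$, and interpolate first over $q$ and then over $\ell$. Your matching of conditions to the hypotheses of Lemma \ref{lem:growth} and to the distinctness of the $\t$-evaluations is correct.

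However, your justification that $h_{\ell,q}(\gamma,\delta,\epsilon)=(\delta+\epsilon^2)^{4\ell}+\delta\,(\delta\gamma+\epsilon)^{4\ell}\neq 0$ is flawed. You claim the two bases $(\delta+\epsilon^2)^4$ and $(\delta\gamma+\epsilon)^4$ are distinct ``by (iii)'', but condition (iii) compares $\delta+\epsilon^2$ with $\delta\gamma^2+1$, not with $\delta\gamma+\epsilon$; nothing in the hypotheses forbids $(\delta+\epsilon^2)^4=(\delta\gamma+\epsilon)^4$. For instance $(\gamma,\delta,\epsilon)=(1,2,0)$ satisfies all six conditions yet has equal bases, and then there is no dominant term. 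The non-vanishing still holds, but for a different reason (the paper invokes (i) and (ii) for this step): if the bases coincide then $h_{\ell,q}=(\delta+\epsilon^2)^{4\ell}(1+\delta)$, which is non-zero since $\delta\neq -1$ by (i) and $\delta+\epsilon^2\neq 0$ by (ii); if $\gamma\delta+\epsilon=0$ then $h_{\ell,q}=(\delta+\epsilon^2)^{4\ell}\neq 0$; only in the remaining case of distinct positive bases does your dominance argument apply. With this small repair the reduction goes through.
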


\begin{proof}
We will show that, on $3-$regular bipartite planar graphs $G$,  the polynomial
$Z(G;\t,\y)$ is polynomial-time computable using oracle calls to $Z(-;\gamma,\delta,\epsilon)$.
The oracle is only queried with input of simple bipartite planar graphs. 
Using Proposition \ref{prop:complexity2a}, computing $Z(G;\t,\y)$ is
$\spP$-hard on $3$-regular bipartite planar graphs.

Using (\ref{newreq3}) and (\ref{newreq1})  it can be verified that
there exists $c_0 \in \mathbb{N}^{+}$ such that 
for all $\ell\geq c_0$ and $q\in\mathbb{N}^{+}$, $f_{p,R}(\gamma,\delta,\epsilon,2\ell,q)\not=0$. 
We can use Lemma \ref{le:R} to manufacture, in polynomial-time, evaluations of $Z(G;\t,\y)$
that will be used to interpolate  $Z(G;\t,\y)$.

Let $\ell \geq c_0$ and let 
\[
E_{\y,1}=\frac{\delta \gamma^{2}+1}{\delta+\epsilon^{2}}
\mbox{\  \ \ and \ \ \ }
E_{\y,2,\ell}=
\frac{\delta (\delta\gamma^2+1)^{4\ell} + (\gamma\delta+\epsilon)^{4\ell} }
{ (\delta+\epsilon^2)^{4\ell} +  \delta(\gamma\delta+\epsilon)^{4\ell}} \,.
\]
We have that
$f_{\y,R}(\gamma,\delta,\epsilon,\ell,q)=\delta\left(E_{\y,1}\right)^{2d\ell}\left(E_{\y,2,\ell}\right)^{2q}$.
Using (\ref{newreq2}) we have $E_{\y,1}\not=0$. 

Look at $E_{\y,2,\ell}$ as a function of $\ell$.
Using (\ref{newreq3}), (\ref{newreq1}), (\ref{newreq4}) and (\ref{newreq2}) 
and Lemma \ref{lem:growth}
with $a=(\gamma\delta+\epsilon)^4$, $b=(\delta \gamma^2+1)^4$, $c=(\delta +\epsilon^2)^4$ and $e=\delta$,
there exists $c_1$ such that 
$E_{\y,2,\ell}$ is strictly monotone increasing or decreasing. 
Hence,
there exists $c_2\geq c_1$ such that, for every $\ell\geq c_2$, $E_{\y,2,\ell}\notin\{-1,0,1\}$.
Moreover, $c_2=c_2(\gamma,\delta,\epsilon)$ is a function of $\gamma$, $\delta$ and $\epsilon$.

We get that for $q_{1}\not=q_{2}\in [n_{G}+1]$ and $\ell>c_2$,
$\left(E_{\y,2,\ell}\right)^{2q_{1}}\not=\left(E_{\y,2,\ell}\right)^{2q_{2}}$.
Since $\delta\left(E_{y,1}\right)^{2d\ell}$ is not equal to $0$ and
does not depend on $q$, we get that for $q_{1}\not=q_{2}\in [n_{G}+1]$,
$f_{\y,R}(\gamma,\delta,\epsilon,\ell,q_{1})\not=f_{\y,R}(\gamma,\delta,\epsilon,\ell,q_{2})$. 

For every $\ell\in [m_{G}+c_2+1]\setminus [c_2]$, we can interpolate in polynomial-time 
the univariate polynomial 
$Z(G;f_{\t,R}(\gamma,\delta,\epsilon,\ell),\y)$. 
Then, we can use the polynomial 
 $Z(G;f_{\t,R}(\gamma,\delta,\epsilon,\ell),\y)$
to compute 
$Z(G;f_{\t,R}(\gamma,\delta,\epsilon,\ell),j)$ for every $\ell\in [m_{G}+c_2+1]\setminus [c_2]$
and every $j\in [n_{G}+1]$. 
Let 
\[
E_{\t}=
 \left( \frac{(\gamma \delta+\epsilon)^2}{(\delta\gamma^2+1)(\delta+\epsilon^2)}\right)^{2}
\]
and it holds that $f_{\t,R}(\gamma,\delta,\epsilon,\ell,q)=\left(E_{\t}\right)^{\ell}$.
Clearly, $E_{\t}\not=-1$ and, by 
(\ref{newreq5}) and (\ref{newreq6}), $E_{\t}\notin\{0,1\}$. 
Hence, for every $\ell_{1}\not=\ell_{2}\in\mathbb{N}^{+}$ we have 
$f_{\t,R}(\gamma,\delta,\epsilon,\ell_{1})\not=f_{\t,R}(\gamma,\delta,\epsilon,\ell_{2})$.
Therefore, we can compute the value of the bivariate polynomial $Z(G;\t,\y)$
on a grid of points of size 
$(m_{G}+1)\times (n_{G}+1)$ in polynomial-time
using the oracle, and use them to interpolate $Z(G;\t,\y)$. \\ 
~
\end{proof}


\section{Computation on Graphs of Bounded Clique-width}
\label{se:FPPT}
In this section we prove Theorem \ref{th:mainD}. 
Let $G$ be a graph and let $cw(G)$ be its clique-width.  
As discussed in Section \ref{se:cliquewidth},
a $k$-expression $t(G)$ for $G$ with
$k\leq 2^{3cw(G)+2}-1$
can be computed in $\FPT$-time.
Let $\bar{c}=(c_v: v\in V(G))$ be the labels from $[k]$ associated with the vertices
of $G$ by $t(G)$. 
We will show how to compute a multivariate polynomial $\Zc(G,\bar{c};\bar{\x},\bar{\y},\bar{\z})$ with indeterminate
set 
\[
\left\{ \x_{\{i,j\}},\y_{i},\z_{\{i,j\}}\mid i,j\in[k]\right\} 
\]
to be defined below. Note it is not the same multivariate polynomial as in Section \ref{se:trivariate}. 
For simplicity of notation we write e.g. $\x_{i,j}$ or $\x_{j,i}$
for $\x_{\{i,j\}}$.
The multivariate polynomial $\Zc(G,\bar{c};\bar{\x},\bar{\y},\bar{\z})$
is defined as
\begin{equation}
\label{eq:Ztag}
\sum_{S\subseteq V(G)}\,
\left(\prod_{v\in S}\y_{c_{v}}\right)
\left(\prod_{(u,v)\in  E_G(S)}\x_{c_{u},c_{v}}\right)
\left(\prod_{(u,v)\in  E_G(\bar{S})}\z_{c_{u},c_{v}}\right)\,.
\end{equation}
The left-most product in Equation (\ref{eq:Ztag}) is over all vertices $v$ in $S$.  
The two other products are over all edges in $E_G(S)$ and $E_G(\bar{S})$ respectively.
It is not hard to see that
$Z(G;\x,\y,\z)$ is obtained from  $\Zc(G,\bar{c};\bar{\x},\bar{\y},\bar{\z})$ by
substituting all the indeterminates $\x_{i,j}$, $\y_i$ and $\z_{i,j}$ by three indeterminates,
$\x$, $\y$ and $\z$, respectively. 

Given tuples of natural numbers $\bar{\za}=(\za_{i}\,:\, i\in[k])$, 
$\bar{\zb}=(\zb_{i,j}\,:\, i,j\in[k])$ and 
$\bar{\zc}=(\zc_{i,j}\,:\, i,j\in[k])$,
we denote by $t_{\bar{\za},\bar{\zb},\bar{\zc}}(G)$ the coefficient of the monomial
\[
\prod_{i\in[k]}\y_{i}^{\za_{i}}\prod_{i,j\in k}\x_{i,j}^{\zb_{i,j}} \z_{i,j}^{\zc_{i,j}} 
\]
in $\Zc(G;\bar{\x},\bar{\y},\bar{\z})$. We call a triple $(\bar{\za},\bar{\zb},\bar{\zc})$
{\em valid} if $\za_{1}+\ldots+\za_{k}\leq n_G$ and, for all $i,j\in[k]$,
$\zb_{i,j},\zc_{i,j} \leq m_G$. 
If $(\bar{\za},\bar{\zb},\bar{\zc})$ is not valid, then
$t_{\bar{\za},\bar{\zb},\bar{\zc}}(G)=0$. Therefore, to determine the polynomial $\Zc(G;\bar{\x},\bar{\y},\bar{\z})$
we need only to find $t_{\bar{\za},\bar{\zb},\bar{\zc}}(G)$ for all valid triples
$(\bar{\za},\bar{\zb},\bar{\zc})$.

The $t_{\bar{\za},\bar{\zb},\bar{\zc}}(G)$ form an $\left(k+2k^2\right)$-dimensional
array with $\left(\max\left\{ n_{G},m_{G}\right\}\right)^{k+2k^2}$ integer entries.
Each entry in this table can be bounded from above by $2^{n_{G}}$
and thus can be written in polynomial space, so the size of the table
is of the form $n_G{}^{p_1(cw(G))}$, where $p_1$ is a function of $cw(G)$ 
which does not depend on $n_G$.

We compute $\Zc(G,\bar{c};\bar{\x},\bar{\y},\bar{\z})$ of $G$ by dynamic programming
on the structure of the $k$-expression of $G$. 
\begin{samepage}
\begin{algorithm}
\label{alg:cliquewidth}~
\begin{enumerate}
\item If $(G,i)$ is a singleton of any color $i$, $\Zc(G,\bar{c};\bar{\x},\bar{\y},\bar{\z})=1+\y_{i}$. 
\item If $(G,\bar{c})$ is the disjoint union of $(H,\bar{c}_{H_1})$ and $(H_2,\bar{c}_{H_2})$, then 
\[
 \Zc(G,\bar{c};\bar{\x},\bar{\y},\bar{\z})=\Zc(H_{1},\bar{c}_{H_1};\bar{\x},\bar{\y},\bar{z})\cdot \Zc(H_{2},\bar{c}_{H_2};\bar{\x},\bar{\y},\bar{z})\,.
 \]
\item If $(G,\bar{c})=\eta_{p,r}(H,\bar{c}_H)$: let $\zd_{r}$ and $\zd_{p}$ be the number of
vertices of colors $r$ and $p$ in $H$, respectively. 
\begin{enumerate}
\item  For every valid $(\bar{\za},\bar{\zb},\bar{\zc})$, if
\begin{gather} \label{eq:eta_zbzc}
\zb_{p,r}=\begin{cases} 
          \za_{p}\cdot \za_r & p \not=r \\
         \binom{\za_p}{2} & p=r
        \end{cases}
        \mbox{\ \ \   and \ \ \  }
\zc_{p,r}=\begin{cases} 
          (\zd_p-\za_{p})\cdot (\zd_r-\za_r) & p \not=r \\
         \binom{\zd_p-\za_p}{2} & p=r
        \end{cases}
\end{gather}
set 
\[
t_{\bar{\za},\bar{\zb},\bar{\zc}}(G)=\sum_{\bar{\zb'},\bar{\zc'}}t_{\bar{\za},\bar{\zb'},\bar{\zc'}}(H)
\]
where the summation is over all valid tuples $\bar{\zb}'=(\zb_{i,j}'\,:\, i,j\in[k])$
and $\bar{\zc}'=(\zc_{i,j}'\,:\, i,j\in[k])$
such that $\zb_{i,j}'=\zb_{i,j}$ and $\zc_{i,j}'=\zc_{i,j}$ if $\{i,j\}\not=\{p,r\}$.
\item For every valid $(\bar{\za},\bar{\zb},\bar{\zc})$, if Equation (\ref{eq:eta_zbzc}) does not hold,
set $t_{\bar{\za},\bar{\zb},\bar{\zc}}(G)=0$.
\end{enumerate}
\item If $(G,\bar{c})=\rho_{p\to r}(H,\bar{c}_H)$: 
\begin{enumerate}
\item For every valid $(\bar{\za},\bar{\zb},\bar{\zc})$ if $\za_{p}=0$,
set 
\[
t_{\bar{\za},\bar{\zb},\bar{\zc}}(G)=\sum_{\bar{\za}',\bar{\zb}',\bar{\zc}'}t_{\bar{\za}',\bar{\zb}',\bar{\zc}'}(H)
\]
where the summation is over all valid tuples $\bar{\za}'=(\za_{i}'\,:\, i\in[k])$,
$\bar{\zb}'=(\zb_{i,j}'\,:\, i,j\in[k])$ and $\bar{\zc}'=(\zc_{i,j}'\,:\, i,j\in[k])$ such that
\begin{itemize}
 \item  $\za_{r}=\za_{p}'+\za_{r}'$,
 \item  $\za_{i}=\za_{i}'$ for all $i\notin \{p,r\}$, 
 \item  for all $j\in [k]\setminus \{p\}$, 
 \[
   \zb_{j,r}=\begin{cases}
   \zb_{j,p}'+\zb_{j,r}' & \mbox{ if }j\not=r\\
   \zb_{r,r}'+\zb_{p,r}'+\zb_{p,p}' & \mbox{ if }j=r\end{cases}
   \mbox{\ \ \  and \ \ \  }
   \zc_{j,r}=\begin{cases}
   \zc_{j,p}'+\zc_{j,r}' & \mbox{ if }j\not=r\\
   \zc_{r,r}'+\zc_{p,r}'+\zc_{p,p}' & \mbox{ if }j=r\end{cases}
\]
and
\item for all $i,j\in[k]\setminus\{p,r\}$, $\zb_{i,j}=\zb_{i,j}'$ and $\zc_{i,j}=\zc_{i,j}'$. 
\end{itemize}
\item For every For every valid $(\bar{\za},\bar{\zb},\bar{\zc})$ if $\za_{p}\not=0$,
set $t_{\bar{\za},\bar{\zb},\bar{\zc}}(G)=0$.
\end{enumerate}
\end{enumerate}
\end{algorithm}
\end{samepage}
\begin{description}
\item [{Correctness}]~
\begin{enumerate}
\item By direct computation.
\item Proved in \cite{ar:AndrenMarkstrom2009} for $Z(G;\t,\y)$. The trivariate case is similar. 
\item $G=\eta_{p,r}(H)$: Let $S$ be a subset of vertices of $V(G)=V(H)$
with $\za_{p}$ and $\za_{r}$ vertices of colors $p$ and $r$ respectively. After
adding all possible edges between vertices of color $p$ and of color
$r$ in $S$, the number of edges between such vertices in $E_G(S)$ is 
$\za_{p}\cdot \za_r$ if $r\not=p$ and $\binom{\za_p}{2}$ if $p=r$. Similarly, 
the number of edges between vertices colored $p$ and $r$ in $E_G(\bar{S})$ is 
$(\zd_p-\za_{p})\cdot (\zd_r -\za_r)$ if $r\not=p$ and $\binom{\zd_p-\za_p}{2}$ if $p=r$. 
\item $G=\rho_{p\to r}(H)$: Let $S$ be a subset of vertices of $V(G)=V(H)$.
After recoloring every vertex
of color $p$ in $S$ to color $r$, we have $\za_{p}=0$. 
Every edge between a vertex colored $p$ to any other vertex lies after the recoloring between
a vertex colored $r$ and another vertex. 
There is one special case, which is the edges that lie between vertices colored $r$ after the recoloring.
Before the recoloring these edges were incident to vertices colored any combination of $p$ and $r$. 
\end{enumerate}

\item [{Running Time}] The size of the $(2^{3cw(G)+2}-1)$-expression is
bounded by $n^{c}\cdot f_{1}(k)$ for some constant $c$, which does
not depend on $cw(G)$, and for some function $f_{1}$ of $cw(G)$. Now
we look at the possible operations performed by Algorithm \ref{alg:cliquewidth}:

\begin{enumerate}
\item The time does not depend on $n$ since $G$ is of size $O(1)$. 
\item The time can be bounded by the size of the table $t_{\bar{\za},\bar{\zb},\bar{\zc}}$
to the power of $3$, i.e. $n^{3 p_{1}(cw(G))}$. 
\item For $\mu_{p,r}$, the algorithm loops over all the values in the table
$t_{\bar{\za},\bar{\zb},\bar{\zc}}$, and for each entry possibly compute a sum
over at most $m_G$ elements. Then, the algorithm loops over all the
values again and performs $O(1)$ operations. 
\item For $\rho_{p\to r}$, the algorithm loops over all the values in the
table $t_{\bar{\za},\bar{\zb},\bar{\zc}}$, and for each entry possibly compute
a sum over elements of the table $t_{\bar{\za},\bar{\zb},\bar{\zc}}$. Then, the
algorithm loops over all the values again and performs $O(1)$ operations. 
\end{enumerate}
Hence, Algorithm \ref{alg:cliquewidth} runs in time $O\left(n_G^{f(cw(G))}\right)$ for some function $f$. 
\footnote{Running times of this kind are refered to as {\em Fixed parameter polynomial time} ($\FPPT$) in \cite{ar:MRAG06}, 
where the computation of various graph polynomials of graphs of bounded clique-width is treated.} 

\end{description}


\section{Conclusion and Open Problems}
Applying the reductions used in the proof of  Theorem \ref{th:mainC} to planar graphs gives again planar graphs. 
Combining Theorem \ref{th:mainC} and its proof
with Lemma \ref{le:R}, a hardness result for the trivariate Ising polynomial on planar graphs
analogous to Theorem \ref{th:mainB} follows. However, both Theorem \ref{th:mainB} and the analog for planar graphs
are not dichotomy theorems since each of them leaves an exceptional set of low dimension unresolved. 
Theorem \ref{th:mainB} serves mainly to suggest the existence of a dichotomy theorem for $Z(G;\x,\y,\z)$
on bipartite planar graphs. 

Another open problem
which arises from the paper is
whether $Z(G;\x,\y,\z)$ requires exponential time to compute in general under $\spETH$. One approach to the latter problem
would be to prove that, say, the permanent or the number of maximum cuts require exponential time under $\spETH$ even
when restricted to regular graphs.


\section*{Acknowledgements}
I am grateful to my  Ph.D. advisor, Prof. J. A. Makowsky, for drawing my attention to the Ising polynomials and for his 
guidance and support. 

\bibliographystyle{plain}

\end{document}